\documentclass{article}

\usepackage{PRIMEarxiv}

\usepackage[utf8]{inputenc} 
\usepackage[T1]{fontenc}    
\usepackage{hyperref}       
\usepackage{url}            
\usepackage{booktabs}       
\usepackage{amsfonts,amsthm, amsmath, amssymb}       
\usepackage{nicefrac}       
\usepackage{microtype}      
\usepackage{lipsum}
\usepackage{fancyhdr}       
\usepackage{graphicx}       
\graphicspath{{media/}}     

\usepackage{comment}
\usepackage{booktabs}       
\usepackage{multirow}
\usepackage{float}
\usepackage{subfig}
\usepackage{xcolor}
\usepackage{ulem}

\newtheorem{theorem}{Theorem}[section]
\newtheorem{proposition}[theorem]{Proposition}
\newtheorem{corollary}[theorem]{Corollary}
\newtheorem{lemma}[theorem]{Lemma}
\newtheorem{rmk}[theorem]{Remark}

\newtheorem{example}{Example}

\newcommand{\real}{{\rm I\!R}}
\newcommand{\nat}{{\rm I\!N}}

\pagestyle{fancy}
\thispagestyle{empty}
\rhead{ \textit{ }} 

\fancyhead[LO]{Some results involving the  \texorpdfstring{$A_{\alpha}$}{alpha}-eigenvalues for graphs and line graphs}

\title{Some results involving the  \texorpdfstring{$A_{\alpha}$}{alpha}-eigenvalues for graphs and line graphs
\thanks{We would like to thank the National Council for Scientific and Technological Development (CNPq) - Brazil} 
}

\author{
  João Domingos G. da Silva Jr. \\ 
  Departamento de Engenharia de Produção\\
  Centro Federal de Educação Tecnológica do Rio de Janeiro \\
  Rio de Janeiro, Brazil\\
  \texttt{joao.dgomes@gmail.com} \\
  \And
  Carla Silva Oliveira \\ 
  Departamento de Matemática \\
  Escola Nacional de Ci\^encias Estat\'{\i}sticas \\
  Rio de Janeiro, Brazil\\
  \texttt{carla.oliveira@ibge.gov.br} \\
  \And
  Liliana Manuela G. C. da Costa  \\ 
  Departamento de Matemática \\
  Col\'egio Pedro II \\
  Rio de Janeiro, Brazil\\
  \texttt{lmgccosta@gmail.com} \\
}

\begin{document}
\maketitle

\begin{abstract}
Let $G$ be a simple graph with adjacency matrix $A(G)$, signless Laplacian matrix $Q(G)$, degree diagonal matrix $D(G)$ and let $l(G)$ be the line graph of $G$. In 2017, Nikiforov defined the $A_\alpha$-matrix of $G$, $A_\alpha(G)$, as a linear convex combination of $A(G)$ and $D(G)$, the following way, $A_\alpha(G):=\alpha A(G)+(1-\alpha)D(G),$ where $\alpha\in[0,1]$. In this paper, we present some bounds for the eigenvalues of $A_\alpha(G)$ and for the largest and smallest eigenvalues of $A_\alpha(l(G))$. Extremal graphs attaining some of these bounds are characterized.
\end{abstract}

\keywords{Line graphs \and Characteristic polynomial\and $A_\alpha$-eigenvalues\and $A_\alpha$-matrix.}

\section{Introduction} \label{intro}

Let $M_{n,m}(\real)$ be the set of $n \times m$ real matrices,  when $m = n$  we use for short $M_n(\real)$. A matrix $M = [m_{ij}]$ is said non-negative ($M \geq 0$) if all its entries, $m_{ij}$, are non-negative, and $M$ is considered  positive ($M > 0$) if all its elements are strictly positive. If $M \in M_n(\real)$, the $M$-characteristic polynomial is defined by $P_M(\lambda) = \vert \lambda I_n - M \vert$ and its roots are called $M$-eigenvalues. We shall index them in non-increasing order and denote by $\lambda_1(M) \geq \ldots \geq \lambda_n(M)$. The collection of $M$-eigenvalues together with their multiplicities is called the $M$-spectrum, denoted by $\sigma(M)$. The largest $M$-eigenvalue, $\lambda_1(M)$, is also called spectral radius. The Rayleigh quotient is defined by $ R(M,x) = \dfrac{x^TMx}{x^Tx},$ for all nonzero vector $x \in \real^n$ . 

Let $G=(V,E)$ be a simple graph such that $\vert V\vert = n$ and $\vert E \vert = m$. For each vertex $v \in V$ the degree of $v$, denoted by $d(v)$, is defined by the number of edges incident to $v$. The minimum degree of $G$, is denoted by $\delta(G) = \min \{ d(v): v \in V \} $ and the maximum degree of $G$ by $\Delta(G) = \max\{d(v): v \in V\}$. The average degree of the neighbors of $v_i \in V$ is $\displaystyle m_i =\frac{1}{d(v_i)} \sum_{v_j \sim v_i}d(v_j) $. The graph $G$ is called $r$-regular if each vertex of $G$ has degree $r$. The graph $G$ is called non-null if it has at least one edge, and $G$ is connected if every pair of distinct vertices of $G$ is joined by a path in $G$. The complement of $G$, denoted by $\overline{G}= (\overline{V},\overline{E})$, is the graph  obtained from $G$ with the same vertex set, $\overline{V} = V$, and $v_iv_j \in \overline{E}$ if and only if $v_iv_j \notin E$. Let $G = (V, E)$ and $H = (W, F)$ be graphs,  if $W \subset V$ and $F \subset E$, then $H$ is a subgraph of $G$. The subgraph denoted by $G - e = (V, E - {e})$ is obtained from $G$ by deleting the edge $e$.

We denote the path with $n$ vertices by $P_n$, the complete graph by $K_n$, the complete bipartite graph with order $n = n_1 + n_2$ by $K_{n_1,n_2}$ and, in particular, the star by $K_{1,n-1}$. A wheel graph of order $n$, $W_n$, is a graph that results of connecting all the vertices of a cycle of order $n-1$ to a single universal vertex known as the hub. The pineapple graph, $K_p^q$, is obtained by appending $q$ pendant edges to a vertex of a complete graph $K_p$ where $q \geq 1$ and $p \geq 3$. The binomial tree\footnote{For more details we suggest \cite{cormen2001introduction}.}, denoted by $BT_k$, is a tree defined recursively as follows: $BT_0$ consists of a single vertex and the binomial tree $BT_k$ consists of two binomial trees $BT_{k-1}$ that are linked together by an edge, as show in Figure \ref{fig::BT_n}.

\begin{figure}[H]
\centering
    \subfloat[Recursively tree.]{\includegraphics[width=0.23
\textwidth]{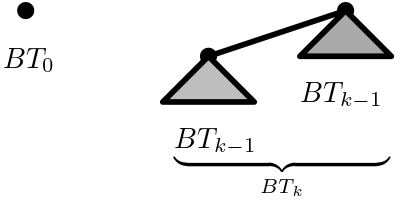}}\qquad\qquad\hspace{10mm}
    \subfloat[The first four binomial trees]{\includegraphics[width=0.45\textwidth]{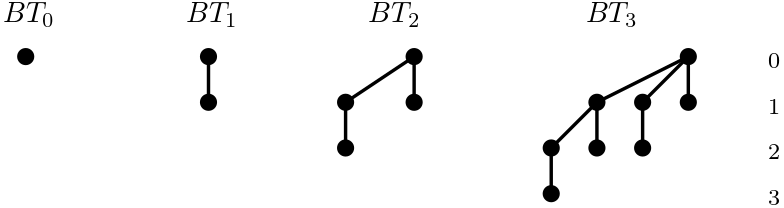}}
    \caption{Binomial tree.}
    \label{fig::BT_n}
\end{figure}

The line graph of $G$, denoted by $l(G)$, is obtained in the following way: each edge in $G$ corresponds to a vertex in $l(G)$, and for two edges in $G$ that share a vertex, make an edge between their corresponding vertices in $l(G)$. From the definition it is important to note that $l(P_{n}) = P_{n-1}$ and $l(K_{1,n-1}) = K_{n-1}$. An example of a line graph can be seen in Figure~\ref{fig::l(G)}.

\begin{figure}[H]
\centering
    \includegraphics[width=0.5\textwidth]{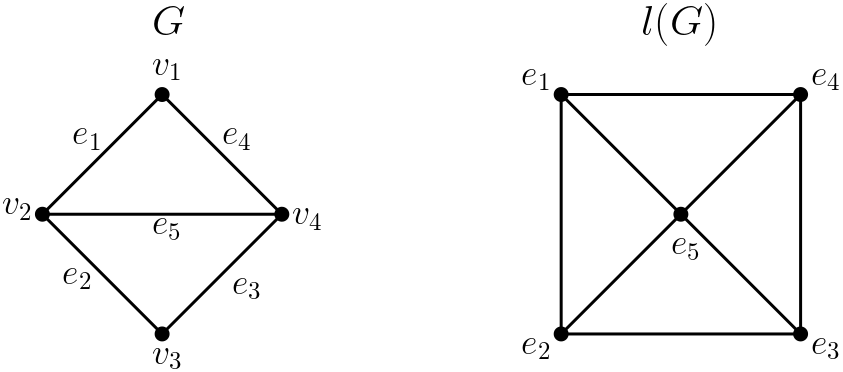}
    \caption{A graph and its line graph.}
    \label{fig::l(G)}
\end{figure}

The first and the second Zagreb indices defined by $\displaystyle Z_1(G) = \sum_{i=1}^n d^2(v_i)$ and $\displaystyle Z_2(G) = \sum_{uv \in E(G)}d(u)d(v)$, respectively, were introduced by Gutman and Trinajestic,~\cite{GUTMAN1972}. The first general Zagreb index is defined by $\displaystyle Z^{(p)}(G) = \sum_{i=1}^n d^{p}(v_i),$ for $p \in \real$, $p \neq 0$ and $p \neq 1$ and it seems to have been first considered by Li et al. in~\cite{Li2004,Li2005}. For $p = 2$, we have the $Z^{(2)}(G) = Z_1(G)$ and the study of its bounds and properties can be found at~\cite{NKMT2003,KinkarCh2003,DAS200457,CIOABA20061959,DasZagreb} and, for $p = 3$, we have $Z^{(3)}(G) = F(G)$, called the forgotten topological index  or F-index, whose study  appears in~\cite{GUTMAN1972,furtula2015,javaid2021}. The general Randi\'{c} index is defined by $\displaystyle R_a(G) = \sum_{uv \in E(G)}\left(d(u)d(v)\right)^a, \text{ where } a \in \real^{*}$ and was introduced by Bollobás and Erd\"{o}s in~\cite{bollobas1998}. The study of its bounds can be found at~\cite{bo2009} and it is not difficult to see that there are close relations between these topological indices, for example, $R_1(G) = Z_2(G)$.

The adjacency matrix of $G$, denoted by $A = A(G) = [a_{ij}]$, is a square and symmetric matrix of order $n$, such that $a_{ij} = 1$ if $v_i$ is adjacent to $v_j$ and $a_{ij} = 0$, otherwise. The incident matrix of $G$, denoted by $B = B(G) = [b_{ij}]$, is a matrix of order $n \times m$ such that $b_{ij} = 1$ if $e_j$ is an incident edge at $v_i$ and $b_{ij} = 0$, otherwise. The degree matrix of $G$, denoted by $D(G)$, is the diagonal matrix that has the degree  of the vertex $v_i$, $d(v_i)$, in the $i^{th}$ position. The matrices $L(G) = D(G) - A(G)$ and $Q(G) = D(G) + A(G)$ are called Laplacian matrix and signless Laplacian matrix, respectively. For simplify the notation, we use $\lambda_i(Q(G)) = q_i$ for all $i = 1, \dots, n$.

In $2017$ Nikiforov, \cite{VN17}, defined for any real $\alpha \in [0,1]$ the convex linear combination $A_\alpha(G)$ of $A(G)$ and $D(G)$ in the following way:
\begin{equation} \label{eq::A_alpha}
    A_\alpha(G) = \alpha D(G) + (1-\alpha)A(G),
\end{equation}
which we call the $A_\alpha$-matrix. From the definition it is easy to see that $A_0(G) = A(G)$, $A_1(G) = D(G)$ and $A_{\frac{1}{2}}(G) = \displaystyle{\frac{1}{2}}Q(G)$. So, obtaining bounds for $A_\alpha$-eigenvalues is an
interesting problem because it contemplates the study of bounds for the adjacency and signless Laplacian matrices.

In this paper, some bounds for the $A_\alpha$-eigenvalues are obtained for a simple graph and for its line graph. We present two lower bounds for $\lambda_1(A_\alpha(G))$ and make a comparison between them. Furthermore, we compare the new bounds with those existing in the literature and presented here and to do this, certain criteria needed to be defined. Specifically, we opted to evaluate bounds that pertain to identical extremal graphs.  

This paper is organized as follows: in Section \ref{sec::preliminaries}, we introduce some definitions and results required to prove the main results; in Section \ref{sec::main}, we show the main results, together with some comparisons of the obtained bounds.

\section{Preliminaries}\label{sec::preliminaries}

In this section we present some results that will be useful to prove the main results of the paper. We start with the Theorem of Weyl and So, which inequalities involve eigenvalues of sums of Hermitian matrices. 

\begin{theorem}[Weyl] \label{theo::weyl}
\cite{horn2013matrix} Let $A, B \in M_n({\rm I\!R})$ be symmetric and let the spectrum of $A$, $B$, and $A + B$ be $\sigma(A) = \{\lambda_1(A), \ldots, \lambda_n(A)\}$, $\sigma(B) = \{\lambda_1(B), \ldots, \lambda_n(B)\}$ and $\sigma(A+B) = \{\lambda_1(A+B), \ldots, \lambda_n(A+B)\}$, respectively. Then,

\begin{equation}\label{weyl_upper_bound}
\lambda_{i+j-1}(A+B) \leq \lambda_i(A) + \lambda_j(B), \ \ j = 1, \ldots , n-i+1
\end{equation}
for each $i = 1, \ldots, n$, with equality for some pair $i,j$ if and only if there is a nonzero vector $x$ such that $Ax = \lambda_i x$, $Bx = \lambda_j x$ and $(A+B)x = \lambda_{i+j-1} x$.
Also,	

\begin{equation}\label{weyl_lower_bound}
\lambda_i(A) + \lambda_j(B) \leq \lambda_{i + j -n}(A+B), \ \ j = i, \ldots, n
\end{equation}
for each $i = 1, \ldots, n$, with equality for some pair $i,j$ if and only if there is a nonzero vector $x$ such that $Ax = \lambda_i x$, $Bx = \lambda_j x$ e $(A+B)x = \lambda_{i+j-n} x$. If $A$ and $B$ have no common eigenvector, then the inequalities in (\ref{weyl_upper_bound}) and (\ref{weyl_lower_bound}) are strict.
\end{theorem}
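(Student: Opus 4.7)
The plan is to derive both Weyl inequalities from the Courant--Fischer min-max characterization of the eigenvalues of a real symmetric matrix, which states that for any symmetric $M\in M_n(\real)$ and any $k\in\{1,\ldots,n\}$,
\[
\lambda_k(M) \;=\; \max_{\substack{S\subseteq\real^n \\ \dim S=k}}\;\min_{\substack{x\in S\\x\neq 0}} R(M,x) \;=\; \min_{\substack{T\subseteq\real^n\\ \dim T=n-k+1}}\;\max_{\substack{x\in T\\x\neq 0}} R(M,x),
\]
where $R(M,x)$ is the Rayleigh quotient already defined in the introduction. This is the only analytic input; the rest is linear-algebraic bookkeeping via dimension counting.

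For the upper bound (\ref{weyl_upper_bound}), I would fix orthonormal eigenbases $\{v_1,\ldots,v_n\}$ of $A$ and $\{u_1,\ldots,u_n\}$ of $B$ with $Av_k=\lambda_k(A)v_k$ and $Bu_k=\lambda_k(B)u_k$. Set $V_A=\mathrm{span}\{v_i,\ldots,v_n\}$ and $V_B=\mathrm{span}\{u_j,\ldots,u_n\}$, whose dimensions are $n-i+1$ and $n-j+1$. Elementary dimension counting gives $\dim(V_A\cap V_B)\geq n-i-j+2$, exactly what is needed to test $\lambda_{i+j-1}(A+B)$ in the second (min-max) form above (the hypothesis $j\leq n-i+1$ ensures $i+j-1\leq n$). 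Expanding any nonzero $x\in V_A\cap V_B$ in each eigenbasis yields $x^TAx\leq\lambda_i(A)\,x^Tx$ and $x^TBx\leq\lambda_j(B)\,x^Tx$; summing and selecting a subspace $T\subseteq V_A\cap V_B$ with $\dim T=n-i-j+2$ delivers (\ref{weyl_upper_bound}).

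The lower bound (\ref{weyl_lower_bound}) would then follow either by a mirror construction using the spans of the first $i$ and first $j$ eigenvectors with the dual (max-min) form, or more economically by applying (\ref{weyl_upper_bound}) to $-A$ and $-B$ together with the identity $\lambda_k(-M)=-\lambda_{n-k+1}(M)$. For the equality clause, the ``if'' direction is immediate: a common eigenvector $x$ satisfying $Ax=\lambda_i(A)x$ and $Bx=\lambda_j(B)x$ gives $(A+B)x=(\lambda_i(A)+\lambda_j(B))x$, and one verifies the resulting eigenvalue occupies the correct position $i+j-1$ in the sorted spectrum of $A+B$. The ``only if'' direction is subtler: equality forces both Rayleigh-quotient estimates on $V_A\cap V_B$ to be attained simultaneously by a single minimizer of $R(A+B,\cdot)$, which must then be a common eigenvector with the stated eigenvalues.

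The main obstacle I expect is precisely this equality analysis, and through it the final strict-inequality assertion under the hypothesis that $A$ and $B$ share no eigenvector. The inequality halves of the theorem are essentially a one-line min-max plus a dimension count, whereas the equality and strictness statements require a careful saturation argument: if the minimum over $V_A\cap V_B$ of the sum of two Rayleigh quotients equals the sum of the individual maxima of each summand, then each summand must separately saturate at the common minimizer, forcing a shared eigenvector. Contrapositively, the absence of any common eigenvector rules out equality in both (\ref{weyl_upper_bound}) and (\ref{weyl_lower_bound}), yielding the strict inequalities.
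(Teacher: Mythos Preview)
The paper does not supply its own proof of Weyl's theorem: the result appears in the Preliminaries section as a citation from Horn and Johnson's \emph{Matrix Analysis} and is stated without proof, so there is nothing in the paper to compare your argument against. Your Courant--Fischer approach is the standard one and is correct, including the saturation analysis for the equality case; the only minor slip is in the ``if'' direction of the equality clause, where you say one must verify that the eigenvalue occupies position $i+j-1$---in fact the hypothesis already includes $(A+B)x=\lambda_{i+j-1}(A+B)x$, so equality in (\ref{weyl_upper_bound}) follows immediately by comparing this with $(A+B)x=(\lambda_i(A)+\lambda_j(B))x$.
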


As consequence of Theorem \ref{theo::weyl}, we have Corollary \ref{cor::weyl}.

\begin{corollary} \label{cor::weyl}
\cite{horn2013matrix} Let be $A, B \in M_n({\rm I\!R})$ symmetric. Then,

\begin{equation} \label{weyl_inequalities}
\lambda_i(A) + \lambda_n(B) \leq \lambda_i(A+B) \leq \lambda_i(A) + \lambda_1(B),
\end{equation}
with $i = 1, \ldots, n$. Equality in the upper bound holds if and only if there is a nonzero vector $x$  that is eigenvector of $A,B$ and $ A+B$ with corresponding eigenvalues $\lambda_i$, $\lambda_1$ and $\lambda_i$, respectively. Analogously, equality in the lower bound holds if and only if there is nonzero vector $x$  that is eigenvector of $A,B$ and $ A+B$ with corresponding eigenvalues $\lambda_i$, $\lambda_n$ and $\lambda_i$, respectively.
\end{corollary}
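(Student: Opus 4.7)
The plan is to obtain Corollary \ref{cor::weyl} as a direct specialization of Theorem \ref{theo::weyl} by choosing the index $j$ so that the left-hand indices in (\ref{weyl_upper_bound}) and (\ref{weyl_lower_bound}) collapse to $i$. The symmetry hypothesis ensures that both $A$ and $B$ have real spectra indexed in non-increasing order, so all eigenvalues appearing in the statement are well defined.

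For the upper bound, I would fix $i \in \{1, \ldots, n\}$ and set $j = 1$ in (\ref{weyl_upper_bound}). This choice is admissible since $1 \leq n - i + 1$ whenever $i \leq n$, and it gives $\lambda_{i+1-1}(A+B) = \lambda_i(A+B) \leq \lambda_i(A) + \lambda_1(B)$, which is exactly the right-hand inequality of (\ref{weyl_inequalities}). The equality clause of Theorem \ref{theo::weyl} then specializes to the existence of a nonzero vector $x$ that is simultaneously an eigenvector of $A$, $B$ and $A+B$ for the eigenvalues $\lambda_i(A)$, $\lambda_1(B)$ and $\lambda_i(A+B)$, which is precisely the stated equality condition for the upper bound.

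For the lower bound, I would instead set $j = n$ in (\ref{weyl_lower_bound}). The admissibility condition there is $j \geq i$, which is satisfied since $i \leq n$, and the inequality becomes $\lambda_i(A) + \lambda_n(B) \leq \lambda_{i+n-n}(A+B) = \lambda_i(A+B)$, yielding the left-hand inequality of (\ref{weyl_inequalities}). The corresponding equality clause of Theorem \ref{theo::weyl}, with the pair $(i,n)$, gives the required common eigenvector statement for eigenvalues $\lambda_i(A)$, $\lambda_n(B)$ and $\lambda_i(A+B)$.

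Since every step is just a substitution of specific indices into statements already guaranteed by Theorem \ref{theo::weyl}, there is no real obstacle; the only thing to be careful about is verifying that the chosen values $j=1$ and $j=n$ lie in the admissible ranges of the two inequalities and that the resulting index $i+j-1$ or $i+j-n$ simplifies to $i$, both of which hold by construction.
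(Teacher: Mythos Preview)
Your proposal is correct and matches the paper's approach: the paper does not give a detailed proof of this corollary, merely stating that it is a consequence of Theorem~\ref{theo::weyl}, and your argument---specializing $j=1$ in (\ref{weyl_upper_bound}) and $j=n$ in (\ref{weyl_lower_bound})---is precisely the standard specialization that makes this explicit. There is nothing to add.
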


Theorem \ref{theo::interlacing} relates the spectra of a graph and its subgraph, Theorem \ref{theo::rayleigh} and Proposition \ref{prop::rayleigh_alpha} show the Rayleigh theorem and its adaptation to $A_\alpha$-matrix.

\begin{theorem} \label{theo::interlacing}
\cite{cvetkovic2009introduction} Let $G$ be a graph with $n$ vertices and eigenvalues $\lambda_1(A(G)) \geq \ldots \geq \lambda_n(A(G))$, and let $H$ an induced subgraph of $G$ with $s$ vertices. If the eigenvalues of $H$ are $\lambda_1(A(H)) \geq \ldots \geq \lambda_s(A(H))$ then $\lambda_{n-s+i}(A(G)) \leq \lambda_i(A(H)) \leq \lambda_i(A(G)), \ \ \forall i = 1, \ldots , s$.
\end{theorem}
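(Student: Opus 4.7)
The plan is to prove this as a special case of Cauchy's interlacing theorem for principal submatrices, via the Courant--Fischer variational characterization of eigenvalues. Since $H$ is an induced subgraph of $G$, after relabeling vertices we may assume $V(H) = \{v_1, \ldots, v_s\}$, so that $A(H)$ appears as the leading $s \times s$ principal submatrix of $A(G)$.

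First I would invoke the Courant--Fischer formulas
$$\lambda_i(M) \;=\; \max_{\dim V = i}\; \min_{0 \neq x \in V} R(M,x) \;=\; \min_{\dim V = n-i+1}\; \max_{0 \neq x \in V} R(M,x),$$
valid for any symmetric $M \in M_n(\real)$. The bridge between $\real^s$ and $\real^n$ is the zero-padding embedding $\iota : \real^s \to \real^n$, $\iota(x) = (x^T, 0, \ldots, 0)^T$; by the block structure of $A(G)$ with respect to this vertex ordering, $\iota(x)^T A(G) \iota(x) = x^T A(H) x$ and $\|\iota(x)\| = \|x\|$, so the Rayleigh quotients coincide: $R(A(G), \iota(x)) = R(A(H), x)$ for every nonzero $x$.

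For the upper bound $\lambda_i(A(H)) \leq \lambda_i(A(G))$, I would observe that every $i$-dimensional $V \subseteq \real^s$ maps to an $i$-dimensional $\iota(V) \subseteq \real^n$, so the max-min formula for $\lambda_i(A(G))$ ranges over a richer family of subspaces than the one computing $\lambda_i(A(H))$, and the Rayleigh identity transfers the inner minimum. For the lower bound $\lambda_{n-s+i}(A(G)) \leq \lambda_i(A(H))$, the key observation is the dimension identity $n - (n-s+i) + 1 = s - i + 1$: the min-max for $\lambda_{n-s+i}(A(G))$ uses $(s-i+1)$-dimensional subspaces of $\real^n$, and each image $\iota(V)$ of an $(s-i+1)$-dimensional $V \subseteq \real^s$ is one admissible test subspace, which yields the inequality upon taking the infimum over $V$.

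The only place that really requires care is the index arithmetic so that the subspace dimensions on the two sides of the embedding match after switching between the max-min and min-max forms; once the identity $n - (n-s+i) + 1 = s - i + 1$ is recorded, both inequalities follow directly from the Rayleigh quotient identity for $\iota$ combined with the variational principle. An alternative route would apply Weyl's inequalities (Theorem \ref{theo::weyl}) to $A(G)$ regarded as a perturbation of $\mathrm{diag}(A(H),\,0_{n-s})$, but the direct Courant--Fischer argument is cleaner and avoids tracking the perturbation's extremal eigenvalues.
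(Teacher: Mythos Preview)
The paper does not supply its own proof of this statement: Theorem~\ref{theo::interlacing} is quoted from \cite{cvetkovic2009introduction} as a known preliminary result, without argument. Your proposal is the standard (and correct) derivation of Cauchy interlacing for principal submatrices via the Courant--Fischer min--max principle, and it specializes cleanly to adjacency matrices because an induced subgraph yields a principal submatrix. The index bookkeeping you flag, namely $n-(n-s+i)+1 = s-i+1$, is exactly what makes the lower bound go through, and the Rayleigh-quotient identity under zero-padding handles both directions. There is nothing to compare against in the paper itself; your argument is a complete and conventional proof of the cited fact.
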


\begin{theorem} \label{theo::rayleigh}
\cite{horn2013matrix} Let $A \in M_n(\real)$ symmetric with eigenvalues $\lambda_1 \geq \lambda_2 \geq \ldots \geq \lambda_n$. Then,
$$ \lambda_1 = \max_{x \neq 0} R(A,x) \ \ \textit{and} \ \ \lambda_n = \min_{x \neq 0} R(A,x)$$
\end{theorem}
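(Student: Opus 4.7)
The plan is to diagonalize $A$ via the spectral theorem and reduce the Rayleigh quotient to a convex combination of the eigenvalues. Since $A$ is real symmetric, it admits an orthonormal basis of eigenvectors $v_1, \ldots, v_n$ with $Av_i = \lambda_i v_i$ and $v_i^T v_j = \delta_{ij}$. Any nonzero $x \in \real^n$ can be expanded uniquely as $x = \sum_{i=1}^n c_i v_i$ with $c_i = v_i^T x$. By orthonormality I would compute
$$x^T x = \sum_{i=1}^n c_i^2, \qquad x^T A x = \sum_{i=1}^n \lambda_i c_i^2,$$
so the Rayleigh quotient becomes
$$R(A,x) = \frac{\sum_{i=1}^n \lambda_i c_i^2}{\sum_{i=1}^n c_i^2}.$$

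This expression is a convex combination of the eigenvalues $\lambda_1, \ldots, \lambda_n$ with weights $c_i^2 / \sum_j c_j^2$. Since $\lambda_n \leq \lambda_i \leq \lambda_1$ for every index $i$, the elementary bound
$$\lambda_n \sum_{i=1}^n c_i^2 \leq \sum_{i=1}^n \lambda_i c_i^2 \leq \lambda_1 \sum_{i=1}^n c_i^2$$
yields $\lambda_n \leq R(A,x) \leq \lambda_1$ for every nonzero $x$. To conclude, I would exhibit equality cases: evaluating at $x = v_1$ gives $R(A,v_1) = \lambda_1$ (here all weight concentrates on the top eigenvalue), and similarly $R(A,v_n) = \lambda_n$, so both extrema are attained and the identities
$\lambda_1 = \max_{x \neq 0} R(A,x)$ and $\lambda_n = \min_{x \neq 0} R(A,x)$ follow.

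The only nontrivial ingredient is the spectral theorem for real symmetric matrices, which is entirely standard and may be quoted from the same reference \textbf{[horn2013matrix]} already cited for the Weyl inequalities; everything else reduces to a one-line estimate on convex combinations. I do not expect any real obstacle: the argument is self-contained once the orthonormal eigenbasis is in hand.
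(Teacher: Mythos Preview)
Your argument is correct and is the standard textbook proof of the Rayleigh--Ritz characterization. Note, however, that the paper does not supply its own proof of this statement: Theorem~\ref{theo::rayleigh} is quoted as a preliminary result from \cite{horn2013matrix} and is stated without proof, so there is nothing in the paper to compare your approach against. What you have written is essentially the proof one finds in that reference.
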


\begin{proposition} \label{prop::rayleigh_alpha}
\cite{VN17} If $\alpha \in [0,1]$ and $G$ is a graph of order $n$, then
\begin{equation}
\lambda_1(A_\alpha(G)) = \max_{\vert x \vert = 1} x^TA_\alpha(G) x \text{ and }  \lambda_n(A_\alpha(G)) = \min_{\lvert x \rvert = 1} x^TA_\alpha(G)x.
\end{equation}
Furthermore, if $x$ is a unit vector, then $\lambda_1(A_\alpha(G)) = x^TA_\alpha(G)x$ if and only if $x$ is an eigenvector of $\lambda_1(A_\alpha(G))$, and $\lambda_n(A_\alpha(G)) = x^TA_\alpha(G)x$ if and only if $x$ is an eigenvector of $\lambda_n(A_\alpha(G))$.
\end{proposition}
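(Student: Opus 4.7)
The plan is to recognize that this proposition is essentially the Rayleigh--Ritz theorem applied to the specific symmetric matrix $A_\alpha(G)$, so the proof reduces to verifying the hypotheses of Theorem \ref{theo::rayleigh} and then re-expressing its conclusion under the normalization $|x|=1$.

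First I would observe that $A_\alpha(G) = \alpha D(G) + (1-\alpha) A(G)$ is a real symmetric matrix of order $n$: both $D(G)$ and $A(G)$ are real symmetric by definition, and a real linear combination of real symmetric matrices is real symmetric. Hence Theorem \ref{theo::rayleigh} applies directly to $A_\alpha(G)$, giving
\begin{equation*}
\lambda_1(A_\alpha(G)) = \max_{x \neq 0} R(A_\alpha(G), x) \quad \text{and} \quad \lambda_n(A_\alpha(G)) = \min_{x \neq 0} R(A_\alpha(G), x).
\end{equation*}
Next I would note that $R(A_\alpha(G), x) = x^T A_\alpha(G) x / x^T x$ is invariant under nonzero scaling of $x$, so the supremum and infimum over all nonzero $x$ are attained on the unit sphere $\{x : |x|=1\}$, where $R(A_\alpha(G), x) = x^T A_\alpha(G) x$. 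This yields the two displayed equalities of the proposition.

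For the equality characterizations, I would use the spectral decomposition afforded by the symmetry of $A_\alpha(G)$. Let $v_1, \ldots, v_n$ be an orthonormal basis of eigenvectors with corresponding eigenvalues $\lambda_1 \geq \cdots \geq \lambda_n$, and write an arbitrary unit vector as $x = \sum_{i=1}^n c_i v_i$ with $\sum c_i^2 = 1$. Then
\begin{equation*}
x^T A_\alpha(G) x = \sum_{i=1}^n \lambda_i c_i^2.
\end{equation*}
The identity $\sum \lambda_i c_i^2 = \lambda_1$ forces $c_i = 0$ for every $i$ with $\lambda_i < \lambda_1$, i.e. $x$ lies in the eigenspace of $\lambda_1(A_\alpha(G))$; the converse is immediate. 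The same argument with $\lambda_n$ replacing $\lambda_1$ handles the minimum case.

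I do not expect a genuine obstacle here: the statement is a routine specialization of a standard theorem already quoted in the preliminaries, and the only subtlety is remembering to justify that the normalization $|x|=1$ does not change the extremal values, which follows from the homogeneity of the Rayleigh quotient.
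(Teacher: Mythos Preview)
Your argument is correct. Note, however, that the paper does not supply its own proof of this proposition: it is quoted from \cite{VN17} as a preliminary result, with no accompanying proof environment. Your derivation---observing that $A_\alpha(G)$ is real symmetric, invoking Theorem~\ref{theo::rayleigh}, passing to the unit sphere by homogeneity of the Rayleigh quotient, and using the spectral decomposition for the equality cases---is exactly the standard justification one would expect, and there is nothing to compare it against in this paper.
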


The next result shows a lower bound for the largest eigenvalue of a non-negative matrix.

\begin{lemma} \label{lemma::LowerBound_SpectralRadius}
\cite{KOLOTILINA1993133, CHEN2010908} Let $B = (b_{ij})$ be a non-negative $n \times n$ matrix with $n \geq 2$, $\lambda_1(B)$ be the largest eigenvalue of $B$, and set $\displaystyle \theta = \min_{1 \leq i \leq n} \{b_{ii}\}$. Then
\begin{equation}
  \displaystyle \lambda_1(B) \geq \max_{i} \left\{ \dfrac{b_{ii} + \theta}{2} + \sqrt{\dfrac{(b_{ii} - \theta)^2}{4} + \sum_{i \neq j}b_{ij}b_{ji} }\right\}.
\end{equation}
Moreover, if $B$ is irreducible with $n \geq 3$, and at least two rows (two columns) of $B$ contain more than one nonzero off-diagonal entry, then inequality is strict.
\end{lemma}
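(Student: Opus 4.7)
The plan is to reduce to the irreducible case and then extract the bound from the Perron equation by carefully substituting one row estimate into another. First I would reduce to $B$ irreducible by perturbing $B \mapsto B + \varepsilon J$ (a positive, hence irreducible, non-negative matrix), applying the result to $B + \varepsilon J$, and letting $\varepsilon \to 0^+$; both sides of the inequality depend continuously on the entries of $B$, so the bound descends. Under irreducibility, Perron--Frobenius supplies a strictly positive right eigenvector $v$ with $Bv = \rho v$, where $\rho := \lambda_1(B)$, and irreducibility together with $n \geq 2$ forces $\rho > b_{jj}$ for every $j$.

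Next, fix $i$ and set $c^2 = \sum_{j \neq i} b_{ij} b_{ji}$. From the $i$-th Perron equation I isolate
$$(\rho - b_{ii})\,v_i \;=\; \sum_{j \neq i} b_{ij}\,v_j,$$
and from the $j$-th Perron equation, for each $j \neq i$, I drop every term but the one indexed by $k=i$ (allowed since $i \neq j$) to obtain
$$(\rho - b_{jj})\,v_j \;\geq\; b_{ji}\,v_i, \qquad\text{hence}\qquad v_j \;\geq\; \frac{b_{ji}\,v_i}{\rho - b_{jj}} \;\geq\; \frac{b_{ji}\,v_i}{\rho - \theta},$$
using $b_{jj} \geq \theta$ in the last step. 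Substituting these lower bounds on $v_j$ into the first display and cancelling $v_i > 0$ yields the quadratic inequality $(\rho - b_{ii})(\rho - \theta) \geq c^2$.

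Because $\rho \geq \max_k b_{kk} \geq b_{ii} \geq (b_{ii}+\theta)/2$ for a non-negative matrix (the spectral radius dominates every diagonal entry), $\rho$ lies on the increasing branch of the parabola $t \mapsto (t-b_{ii})(t-\theta)$; solving the quadratic on that branch gives
$$\rho \;\geq\; \frac{b_{ii}+\theta}{2} \,+\, \sqrt{\frac{(b_{ii}-\theta)^2}{4} + \sum_{j \neq i} b_{ij} b_{ji}},$$
and maximising over $i$ finishes the main bound.

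For the strictness clause, I would trace where each intermediate inequality can be tight: the estimate $(\rho - b_{jj})v_j \geq b_{ji} v_i$ is strict as soon as row $j$ contains a nonzero off-diagonal entry other than $b_{ji}$, and $\rho - b_{jj} \leq \rho - \theta$ is strict as soon as $b_{jj} > \theta$. The hypothesis that at least two rows of $B$ carry more than one nonzero off-diagonal entry guarantees, no matter which $i$ realises the maximum, the existence of some $j \neq i$ with $b_{ji} > 0$ for which one of these strictness conditions holds, propagating strictness through the chain. The most delicate part of the argument is the reducible case, where $v$ may vanish at some coordinates; my plan is to sidestep this entirely by the $\varepsilon J$ perturbation together with continuity of both sides in the entries of $B$.
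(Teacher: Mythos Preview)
The paper does not prove this lemma; it is quoted in the Preliminaries with citations to Kolotilina and Chen and then used as a black box in the proof of Theorem~\ref{theo::LowerBound_alpha}. So there is no in-paper argument to compare against, and your task is really to give a self-contained proof.

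Your derivation of the main inequality is correct. The Perron-vector substitution --- isolate $(\rho-b_{ii})v_i=\sum_{j\neq i}b_{ij}v_j$, bound each $v_j$ from below via the $j$-th eigen-equation, and reduce to the quadratic $(\rho-b_{ii})(\rho-\theta)\geq\sum_{j\neq i}b_{ij}b_{ji}$ --- is exactly the standard route, and your justification that $\rho$ lies on the correct branch of the parabola (via $\rho\geq b_{ii}\geq(b_{ii}+\theta)/2$) is fine. The reduction to the irreducible case by the $\varepsilon J$ perturbation is also fine, since the spectral radius of a non-negative matrix depends continuously on the entries.

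The strictness argument, however, is not yet complete. Two points. First, for a strict bound on $v_j$ to propagate into strictness of $(\rho-b_{ii})v_i\geq\sum_{j\neq i}b_{ij}\cdot\frac{b_{ji}v_i}{\rho-\theta}$ you need $b_{ij}>0$, not $b_{ji}>0$ as you wrote; the roles of the two directions are different. Second, and more substantively, you still have to show that the hypothesis ``at least two rows with more than one nonzero off-diagonal entry'' actually produces such a $j$. The clean way is to argue by contradiction: if equality holds at the maximising index $i$, then for every $j$ with $b_{ij}>0$ the $j$-th row has its \emph{only} nonzero off-diagonal at column $i$; hence in the directed graph of $B$ the set $\{i\}\cup\{j:b_{ij}>0\}$ is closed under out-edges. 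Irreducibility then forces this set to be all of $\{1,\dots,n\}$, so every row $j\neq i$ has at most one nonzero off-diagonal entry, contradicting the assumption that two rows have more than one. Filling this in would make the strictness clause rigorous.
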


\begin{lemma} \label{lemma::rowsums}
\cite{ELLINGHAM200045} Let $G$ be a connected graph with $n$ vertices and $A(G) = A$ its adjacency matrix. Let $P(x)$ be any polynomial function and $S_v(P(A)$ be the row sums of $P(A)$ corresponding to each vertex $v$. Then
$$\min{S_v(P(A))} \leq P(\lambda_1(A)) \leq \max{S_v(P(A))}.$$
Moreover, equality holds if and only if the row sums of $P(A)$ are all equal.
\end{lemma}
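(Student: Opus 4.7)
The plan is to exploit that $P(A)$ commutes with $A$ and is symmetric, and to use the Perron eigenvector of $A$ as a bridge between $P(\lambda_1(A))$ and the row sums of $P(A)$. Since $G$ is connected, $A$ is an irreducible non-negative symmetric matrix, so by the Perron--Frobenius theorem $\lambda_1(A)$ is simple and admits a strictly positive eigenvector $x>0$ with $Ax=\lambda_1(A)\,x$. Being a polynomial in $A$, the matrix $P(A)$ shares this eigenvector, so $P(A)x = P(\lambda_1(A))\,x$.

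Next I would compute the scalar $\mathbf{1}^T P(A) x$, where $\mathbf{1}$ denotes the all-ones vector, in two different ways. Acting to the right first yields
\[
\mathbf{1}^T P(A) x \;=\; P(\lambda_1(A))\, \mathbf{1}^T x \;=\; P(\lambda_1(A)) \sum_{v} x_v .
\]
Acting to the left and using that $P(A)$ is symmetric (so that the $v$-th entry of the row vector $\mathbf{1}^T P(A)$ equals the row sum $S_v(P(A))$), one obtains
\[
\mathbf{1}^T P(A) x \;=\; \sum_{v} S_v(P(A))\, x_v .
\]
Equating the two expressions gives
\[
P(\lambda_1(A)) \;=\; \frac{\sum_{v} S_v(P(A))\, x_v}{\sum_{v} x_v},
\]
which realises $P(\lambda_1(A))$ as a weighted average of the row sums $S_v(P(A))$ with strictly positive weights $x_v$.

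From this convex-combination representation the bounds $\min_v S_v(P(A)) \leq P(\lambda_1(A)) \leq \max_v S_v(P(A))$ are immediate. Moreover, because every weight $x_v$ is strictly positive, equality at either end forces every row sum to coincide with the corresponding extremum, so all row sums of $P(A)$ must be equal, which settles the equality characterization.

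The main subtlety, rather than an obstacle, is to invoke Perron--Frobenius in its strong form: the hypothesis that $G$ is connected is what makes $A$ irreducible, which in turn guarantees that $x$ is strictly positive and therefore that the weighted average above is genuinely strict at the endpoints unless the row sums all coincide. The use of the symmetry of $P(A)$ to identify $\mathbf{1}^T P(A)$ with the vector of row sums is automatic from the symmetry of $A$, and no further machinery is needed.
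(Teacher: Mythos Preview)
Your argument is correct. The paper itself does not supply a proof of this lemma; it is quoted as a preliminary result from \cite{ELLINGHAM200045}, so there is no in-paper proof to compare against. For what it is worth, your Perron--Frobenius weighted-average argument is exactly the standard proof of this fact: writing $P(\lambda_1(A))=\dfrac{\sum_v S_v(P(A))\,x_v}{\sum_v x_v}$ with the Perron vector $x>0$ immediately gives both inequalities and the equality characterization. One cosmetic remark: the appeal to symmetry of $P(A)$ is not essential---you could equally well compute $x^{T}P(A)\mathbf{1}$ or note directly that $P(A)\mathbf{1}$ is the vector of row sums---but what you wrote is of course valid since $A$ (hence $P(A)$) is symmetric.
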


Bounds for the first Zagreb Index, the F-index and the general Randi\'{c} index are presented in the next results.

\begin{lemma} \label{theo::sharp_DAS}
	\cite{KinkarCh2003} Let $G$ be a simple graph with $n$ vertices and $m$ edges. Let $\delta$ and $\Delta$ be the minimum  and the maximum degree of $G$, respectively. Then, for $n \geq 3$, $\displaystyle Z_1(G) \geq \Delta^2 + \delta^2 + \frac{(2m - \Delta - \delta)^2}{n-2}$. Furthermore, equality occurs if and only if $d_2 =  \dots = d_{n-1}$.
\end{lemma}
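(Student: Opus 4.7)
Order the degrees as $\Delta = d_1 \geq d_2 \geq \cdots \geq d_{n-1} \geq d_n = \delta$. Since $\sum_{i=1}^n d_i = 2m$, the middle block of $n-2$ degrees satisfies
\[
\sum_{i=2}^{n-1} d_i = 2m - \Delta - \delta .
\]
The plan is to isolate the two extremal terms $\Delta^2$ and $\delta^2$ in $Z_1(G) = \sum_{i=1}^n d_i^2$ and apply a quadratic-mean / Cauchy--Schwarz inequality to the remaining $n-2$ degrees.

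Concretely, I would write
\[
Z_1(G) \;=\; \Delta^2 + \delta^2 + \sum_{i=2}^{n-1} d_i^{\,2},
\]
and then invoke the Cauchy--Schwarz inequality (or, equivalently, the power-mean / QM--AM inequality) in the form
\[
\sum_{i=2}^{n-1} d_i^{\,2} \;\geq\; \frac{1}{n-2}\left(\sum_{i=2}^{n-1} d_i\right)^{\!2} \;=\; \frac{(2m-\Delta-\delta)^2}{n-2},
\]
which uses the hypothesis $n\geq 3$ so that the factor $n-2$ is positive (the case $n=3$ making the sum on the left a single square, so the bound is trivial and in fact an equality).

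Adding $\Delta^2 + \delta^2$ to both sides yields the claimed bound. For the equality case, I would note that Cauchy--Schwarz becomes an equality precisely when the vectors involved are proportional, i.e.\ when $d_2 = d_3 = \cdots = d_{n-1}$; nothing constrains how $\Delta$ and $\delta$ relate to this common middle value, so this is the sharp characterisation. There is no real obstacle here: the only thing to watch is that the inequality is applied only to the $n-2$ middle degrees (not to all $n$), so that $\Delta$ and $\delta$ remain free and the equality condition is exactly as stated.
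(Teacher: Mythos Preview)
Your argument is correct and is exactly the standard proof of this inequality: isolate the extremal degrees, apply Cauchy--Schwarz (equivalently QM--AM) to the remaining $n-2$ terms, and read off the equality condition. Note, however, that the paper does not supply its own proof of this lemma; it is quoted as a preliminary result from \cite{KinkarCh2003}, so there is nothing in the paper to compare your approach against beyond the statement itself.
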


\begin{lemma} \label{theo::sharp_DAS_upper}
\cite{KinkarCh2003} Let $G$ be a connected graph with $n$ vertices and $m$ edges. Let $\delta$ be the minimum degree of $G$.  Then, $\displaystyle Z_1(G) \leq 2mn -n(n-1)\delta + 2m(\delta-1)$. Moreover, the equality holds if and only if $G$ is a star graph or a regular graph.
\end{lemma}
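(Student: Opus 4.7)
The key observation is that every vertex degree is sandwiched between $\delta$ and $n-1$, so for each $i$ the quadratic $(d_i-\delta)(d_i-(n-1))\le 0$ holds. The plan is to rewrite this as $d_i^2 \le (\delta+n-1)\,d_i - \delta(n-1)$, sum over $i=1,\dots,n$, and invoke the handshake identity $\sum_i d_i = 2m$ to obtain
\[
Z_1(G) \;=\; \sum_{i=1}^n d_i^2 \;\le\; (\delta+n-1)\,(2m) - n\,\delta(n-1).
\]
Expanding the right-hand side gives $2m(n-1) + 2m\delta - n(n-1)\delta = 2mn - n(n-1)\delta + 2m(\delta-1)$, which is exactly the claimed bound. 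This half of the proof is a one-line consequence of a pointwise inequality and requires no structural information about $G$ beyond the degree bounds and the number of edges.

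For the equality characterization, I would work term by term: equality in the sum forces $(d_i-\delta)(d_i-(n-1)) = 0$ for every $i$, i.e., each degree is either $\delta$ or $n-1$. If all $d_i$ coincide, $G$ is regular. Otherwise, the graph contains at least one vertex $u$ of degree $n-1$, which is adjacent to every other vertex, and at least one vertex $v$ of degree $\delta<n-1$. I would then use connectedness and the degree dichotomy to conclude the star case: in the natural argument one takes $\delta = 1$ (a vertex of degree $\delta$ must be adjacent to every universal vertex, so if $\delta<n-1$ only one universal vertex can exist, forcing $\delta=1$ and identifying $G$ with $K_{1,n-1}$).

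The main obstacle is precisely this equality analysis. The dichotomy $d_i \in \{\delta, n-1\}$ alone does not immediately rule out intermediate examples (one universal vertex together with a $(\delta-1)$-regular graph on the remaining $n-1$ vertices formally satisfies it for $\delta\ge 2$), so the proof needs to reconcile this with the stated characterization. The cleanest route is to observe that as soon as there are two universal vertices $u_1,u_2$, every other vertex is adjacent to both and hence has degree $\ge 2$; combined with the requirement that non-universal vertices have degree exactly $\delta$, a case split on whether $\delta = n-1$ (giving $K_n$), $\delta = 1$ (giving the star), or $1<\delta<n-1$ (which must then be excluded) completes the argument. I would defer the detailed case split to the end, after the inequality itself has been established, since the inequality is the substantive content of the lemma and is used in the sequel.
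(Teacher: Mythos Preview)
The paper does not prove this lemma; it is quoted from Das (2003) without proof, so there is no in-paper argument to compare against. Your derivation of the inequality is correct and is in fact the standard one: the pointwise estimate $(d_i-\delta)(d_i-(n-1))\le 0$ rewritten as $d_i^2\le(\delta+n-1)d_i-\delta(n-1)$ and summed using $\sum_i d_i=2m$ gives exactly the stated bound.

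Where you run into trouble is the equality characterization, and the difficulty you sense is real: the ``if and only if'' clause as reproduced in the lemma is not correct. Consider the wheel $W_5$ (a hub joined to a $4$-cycle). Its degree sequence is $(4,3,3,3,3)$, so $n=5$, $m=8$, $\delta=3$, and $Z_1(G)=16+4\cdot 9=52$, while the bound evaluates to $2\cdot 8\cdot 5 - 5\cdot 4\cdot 3 + 2\cdot 8\cdot 2 = 80-60+32 = 52$. Equality holds, yet $W_5$ is connected and is neither a star nor regular. More generally, any connected graph consisting of one universal vertex together with a $(\delta-1)$-regular graph on the remaining $n-1$ vertices achieves equality. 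Your argument already yields the correct necessary and sufficient condition, namely $d_i\in\{\delta,\,n-1\}$ for every $i$; the final ``exclusion'' of the case $1<\delta<n-1$ that you outline cannot be carried out, because those graphs genuinely exist. This is a defect in the quoted statement, not in your approach.
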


\begin{proposition} \label{prop::forgotten}
    \cite{furtula2015} Let $G$ be a graph with $m$ edges, whose first Zagreb index is $Z_1(G)$. Then,
    
    \begin{equation*}
        F(G) \geq \dfrac{Z_1(G)^2}{2m}
    \end{equation*}
Equality is attained in the case of regular graphs.
\end{proposition}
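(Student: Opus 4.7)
The plan is to recognize the inequality as a direct consequence of the Cauchy--Schwarz inequality applied to the degree sequence, using the handshake identity $\sum_{i=1}^n d(v_i) = 2m$.

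More explicitly, I would write
\[
Z_1(G) = \sum_{i=1}^n d^2(v_i) = \sum_{i=1}^n d^{1/2}(v_i) \cdot d^{3/2}(v_i),
\]
and then apply Cauchy--Schwarz to the two sequences $\bigl(d^{1/2}(v_i)\bigr)_{i=1}^n$ and $\bigl(d^{3/2}(v_i)\bigr)_{i=1}^n$ to obtain
\[
Z_1(G)^2 \;\leq\; \Bigl(\sum_{i=1}^n d(v_i)\Bigr)\Bigl(\sum_{i=1}^n d^3(v_i)\Bigr) \;=\; 2m \cdot F(G).
\]
Rearranging yields the claimed bound $F(G) \geq Z_1(G)^2 / (2m)$.

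For the equality case, Cauchy--Schwarz becomes an equality precisely when the two sequences are proportional, i.e., there exists a constant $c$ with $d^{3/2}(v_i) = c\, d^{1/2}(v_i)$ for every $i$ such that $d(v_i) \neq 0$. This forces $d(v_i) = c$ to be constant across all vertices, which is exactly the condition for $G$ to be regular. (Isolated vertices could in principle create a subtlety, but the statement involves $m$ in the denominator, so we are implicitly assuming $m \geq 1$; either way, the equality case reduces to regularity.)

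There is no real obstacle here: the whole argument is one application of Cauchy--Schwarz plus the handshake lemma, and the equality analysis is immediate from the proportionality condition.
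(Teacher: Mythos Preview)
Your argument is correct: the Cauchy--Schwarz inequality applied to $a_i = d^{1/2}(v_i)$ and $b_i = d^{3/2}(v_i)$, together with the handshake identity $\sum_i d(v_i) = 2m$, immediately gives $Z_1(G)^2 \le 2m\,F(G)$, and the equality analysis via proportionality is the right one.

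There is nothing to compare against in the present paper: Proposition~\ref{prop::forgotten} is quoted from \cite{furtula2015} as a preliminary result and is stated without proof here. Your Cauchy--Schwarz derivation is in fact the standard argument for this inequality (it is sometimes phrased equivalently as the power-mean inequality between the second and third moments of the degree sequence), so you have supplied exactly the proof one would expect.
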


\begin{theorem} \label{theo::randic}
     \cite{bo2009} Let $G$ be a graph with $n$ vertices and $m \geq 1$ edges. Then for $a \geq 1$,
     \begin{equation*}
         R_a \geq 4^an^{-2a}m^{1+2a}
     \end{equation*}
with equality if and only if $G$ is a regular graph.
\end{theorem}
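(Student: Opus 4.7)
The plan is to apply Jensen's inequality twice---once at the edge level and once at the vertex level---reducing the desired bound on $R_a(G)$ to an auxiliary lower bound $Z_2(G) \geq 4m^3/n^2$ for the second Zagreb index $Z_2(G)=\sum_{uv\in E(G)} d(u)d(v)$.

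First I would note that, since $a\geq 1$, the function $x\mapsto x^a$ is convex on $[0,\infty)$, so applying Jensen's inequality to the $m$ edge-products $d(u)d(v)$ (each weighted by $1/m$) gives
\[
\frac{R_a(G)}{m} \geq \left(\frac{1}{m}\sum_{uv \in E(G)} d(u)d(v)\right)^{a} = \left(\frac{Z_2(G)}{m}\right)^{a},
\]
that is, $R_a(G) \geq m^{1-a} Z_2(G)^{a}$. It therefore suffices to prove $Z_2(G)\geq 4m^3/n^2$. For this I would apply AM-GM to the same $m$ edge-products, obtaining $Z_2(G)/m \geq \bigl(\prod_{uv\in E(G)} d(u)d(v)\bigr)^{1/m}$, and then refactor the product as $\prod_{v\in V} d(v)^{d(v)}$ (each vertex $v$ being an endpoint of exactly $d(v)$ edges). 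Taking logarithms and using the convexity of $x\log x$ on $[0,\infty)$ (with the convention $0\log 0=0$), a second Jensen step yields $\sum_{v\in V} d(v)\log d(v)\geq 2m\log(2m/n)$, whence $\prod_{v\in V} d(v)^{d(v)}\geq (2m/n)^{2m}$ and $Z_2(G)\geq m(2m/n)^2=4m^3/n^2$. Substituting back produces $R_a(G)\geq m^{1-a}(4m^3/n^2)^a = 4^a n^{-2a} m^{1+2a}$.

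For the equality case, each of the convexity/AM-GM steps above is tight if and only if all of its inputs coincide; the most restrictive of these is the Jensen step on $x\log x$, which forces every $d(v)$ to equal $2m/n$, i.e., $G$ is regular. A direct calculation confirms that any $r$-regular graph with $r=2m/n$ satisfies $R_a(G)=mr^{2a}=4^a n^{-2a}m^{1+2a}$, so equality is attained precisely by regular graphs. The only mildly delicate point I anticipate is the Jensen step on $x\log x$ in the presence of isolated vertices (handled by the convention $0\log 0 = 0$) or when $2m/n<1$ (where the right-hand side becomes negative but the inequality remains valid); both situations are routine once the inequalities are read carefully.
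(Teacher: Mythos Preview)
The paper does not supply its own proof of this statement: Theorem~\ref{theo::randic} appears in the Preliminaries section as a result quoted from \cite{bo2009}, so there is no in-paper argument against which to compare your proposal.

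That said, your argument is correct. The first Jensen step (convexity of $x\mapsto x^{a}$ for $a\geq 1$) legitimately reduces the problem to the auxiliary bound $Z_2(G)\geq 4m^{3}/n^{2}$; your AM--GM followed by the identity $\prod_{uv\in E} d(u)d(v)=\prod_{v\in V} d(v)^{d(v)}$ and a second Jensen step for the convex function $x\log x$ then gives exactly that bound. The equality discussion is also sound: strict convexity of $x\log x$ forces all degrees to coincide whenever the overall bound is tight, and the converse is a direct computation. Your caveats about isolated vertices (handled by $0\log 0=0$) and about the sign of $2m\log(2m/n)$ when $2m/n<1$ are accurate and do not affect the validity of the chain of inequalities. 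One small stylistic remark: when $a=1$ the first Jensen step is an identity rather than a genuine inequality, but this is harmless since the remaining two steps still characterise equality by regularity.
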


Theorem \ref{theo::LineGraphs_properties} and Theorem \ref{theo::degree_relation} show a relation between the vertices degree of the graph and its line graph. Theorem \ref{theo::lowerBound_LG}, Lemma \ref{lemma::CompleteLineGraph} and Proposition \ref{prop::SubgraphLineGraph}, present some results involving line graphs.

\begin{theorem} \label{theo::LineGraphs_properties}
\cite{beineke2021line} Let $G$ be a non-null graph such that $V(G) = \{v_1, \ldots, v_n\}$ and $m$ edges. Then
\begin{itemize}
\item[(a)] $l(G)$ has $m$ vertices and $\displaystyle \frac{1}{2} \sum_{i=1}^n d^2(v_i)-m$ edges.
\item[(b)] The degree of a vertex $e = v_iv_j$ in $l(G)$ is $d(e) = d(v_i) + d(v_j) - 2$.
\end{itemize}
\end{theorem}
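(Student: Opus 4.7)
\textbf{Proof plan for Theorem \ref{theo::LineGraphs_properties}.} Both parts follow directly from the definition of the line graph; the plan is essentially a careful counting argument for each.

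For part (a), the vertex count is immediate: by the definition of $l(G)$, the vertex set is in bijection with $E(G)$, so $|V(l(G))| = m$. For the edge count, I would count, for each vertex $v_i \in V(G)$, the number of unordered pairs of edges of $G$ that meet at $v_i$; each such pair contributes exactly one edge to $l(G)$. Since $G$ is simple, two distinct edges share at most one endpoint, so no edge of $l(G)$ is counted twice. The number of pairs meeting at $v_i$ is $\binom{d(v_i)}{2}$, and summing over all vertices gives
\begin{equation*}
|E(l(G))| \;=\; \sum_{i=1}^{n}\binom{d(v_i)}{2} \;=\; \frac{1}{2}\sum_{i=1}^{n} d^2(v_i) \;-\; \frac{1}{2}\sum_{i=1}^{n} d(v_i).
\end{equation*}
The claim then follows from the handshake identity $\sum_{i=1}^{n} d(v_i) = 2m$.

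For part (b), fix an edge $e = v_iv_j \in E(G)$ and consider the corresponding vertex of $l(G)$. By definition, it is adjacent in $l(G)$ to precisely those edges of $G$ that share an endpoint with $e$ and are distinct from $e$. Edges incident to $v_i$ other than $e$ number $d(v_i) - 1$, and edges incident to $v_j$ other than $e$ number $d(v_j) - 1$. Since $G$ is simple, no edge other than $e$ itself is incident to both $v_i$ and $v_j$, so these two sets of edges are disjoint. Adding them gives $d(e) = d(v_i) + d(v_j) - 2$.

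The whole argument is routine, so I do not expect any real obstacle; the only subtle point is the use of simplicity of $G$ in part (b) to avoid double counting an edge parallel to $e$, and in part (a) to ensure that each edge of $l(G)$ is produced by a unique shared endpoint.
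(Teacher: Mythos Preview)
Your argument is correct and is precisely the standard counting proof one finds in the reference. Note, however, that the paper does not give its own proof of this theorem: it is stated in the Preliminaries section as a quoted result from \cite{beineke2021line}, so there is no in-paper proof to compare against.
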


\begin{theorem} \label{theo::degree_relation}
\cite{beineke2021line} Let $G$ be a graph with at least one edge. Then,
\begin{itemize}
    \item [(a)] $\delta(l(G)) \geq 2\delta(G) - 2$ with equality if and only if $G$ has two adjacent vertices of degree $\delta(G)$.
    \item[(b)] $\Delta(l(G)) \leq 2\Delta(G) - 2$ with equality if and only if $G$ has two adjacent vertices of degree $\Delta(G)$.
\end{itemize}
\end{theorem}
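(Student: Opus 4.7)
The plan is to reduce both statements to a direct application of the degree formula for vertices of the line graph given in Theorem \ref{theo::LineGraphs_properties}(b). That theorem states that for every edge $e = v_iv_j$ of $G$, the degree of the corresponding vertex of $l(G)$ is $d_{l(G)}(e) = d(v_i) + d(v_j) - 2$. Since $\delta(l(G))$ and $\Delta(l(G))$ are obtained by minimising and maximising $d_{l(G)}(e)$ as $e$ ranges over $E(G)$, the two inequalities will fall out of bounding each summand by $\delta(G)$ or $\Delta(G)$.

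For part (a), I would pick an arbitrary edge $e = v_iv_j$ of $G$, use $d(v_i) \geq \delta(G)$ and $d(v_j) \geq \delta(G)$ to deduce $d_{l(G)}(e) \geq 2\delta(G) - 2$, and then take the minimum over $e \in E(G)$ to conclude $\delta(l(G)) \geq 2\delta(G) - 2$. For the extremal characterisation, equality in the resulting minimum is realised precisely when there exists some edge $e = v_iv_j$ with $d(v_i) = d(v_j) = \delta(G)$; this is exactly the assertion that $G$ contains two adjacent vertices of minimum degree, and the implication runs in both directions because from such a pair one immediately exhibits a vertex of $l(G)$ of degree $2\delta(G)-2$.

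Part (b) is entirely symmetric: replacing the lower bounds $d(v_i), d(v_j) \geq \delta(G)$ by the upper bounds $d(v_i), d(v_j) \leq \Delta(G)$ and replacing minimisation by maximisation yields $\Delta(l(G)) \leq 2\Delta(G) - 2$, with equality characterised by the existence of an edge both of whose endpoints have degree $\Delta(G)$. Once Theorem \ref{theo::LineGraphs_properties}(b) is invoked, the argument is a one-line estimate in each direction, so I do not anticipate any real obstacle; the only point requiring care is to state the equality conditions as two-way implications, which is immediate since the equality in the bound $d(v_i) + d(v_j) - 2 \geq 2\delta(G)-2$ (resp.\ $\leq 2\Delta(G)-2$) forces $d(v_i) = d(v_j) = \delta(G)$ (resp.\ $\Delta(G)$).
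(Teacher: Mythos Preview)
The paper does not give its own proof of this theorem; it is quoted from \cite{beineke2021line} in the Preliminaries section as a known result, without argument. Your proof is correct and is the standard one: it follows immediately from the degree formula $d_{l(G)}(e)=d(v_i)+d(v_j)-2$ of Theorem~\ref{theo::LineGraphs_properties}(b), which is exactly the tool one expects the cited source to use as well.
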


\begin{theorem} \label{theo::lowerBound_LG}
\cite{beineke2021line} If $\lambda_m(l(G))$ is the smallest eigenvalue of $A(l(G))$, then $-2 \leq \lambda_m(l(G)).$
\end{theorem}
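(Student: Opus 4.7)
The plan is to exploit the classical identity relating the incidence matrix $B=B(G)\in M_{n,m}(\real)$ of $G$ to the adjacency matrix of its line graph. Concretely, I will compute $B^T B$ entry by entry: the $(e,f)$-entry equals the number of vertices of $G$ that are endpoints of both edges $e$ and $f$. If $e=f$, this count is exactly $2$ (the two endpoints of the edge); if $e\neq f$ are adjacent in $l(G)$ (i.e.\ share a vertex in $G$), the count is $1$; and if $e,f$ share no vertex, the count is $0$. This yields the identity
\begin{equation*}
B^T B \;=\; A(l(G)) + 2 I_m.
\end{equation*}

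Next I would observe that $B^T B$ is positive semidefinite: for every $x\in\real^m$,
\begin{equation*}
x^T (B^T B) x \;=\; \|B x\|^2 \;\geq\; 0,
\end{equation*}
so all eigenvalues of $B^T B$ are nonnegative. Combining this with the displayed identity, every eigenvalue of $A(l(G))$ is at least $-2$; in particular $\lambda_m(l(G))\geq -2$, which is what we wanted.

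The steps are essentially routine — the only point that requires a small amount of care is the counting argument for the entries of $B^T B$, and verifying that the convention $d(e)=d(v_i)+d(v_j)-2$ from Theorem~\ref{theo::LineGraphs_properties}(b) is consistent with the diagonal entries of $B^T B$ being $2$ rather than the degree of $e$ in $l(G)$. Once the identity $B^T B = A(l(G))+2I_m$ is in place, the conclusion follows immediately from positive semidefiniteness, so there is no real obstacle; the proof is short and self-contained.
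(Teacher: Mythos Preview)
Your proof is correct and is the standard argument. The paper does not actually prove this theorem---it is quoted as a preliminary result from \cite{beineke2021line}---but the key identity you derive, $B^T B = A(l(G)) + 2I_m$, is recorded separately in the paper as Lemma~\ref{lemma::incident_BTB}, and together with the positive semidefiniteness of $B^T B$ it gives the bound exactly as you argue.
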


\begin{lemma} \label{lemma::CompleteLineGraph}
\cite{chen2002} Let $G$ be a connected graph with $n$ vertices, then $l(G)$ is the complete graph if and only if $G$ is either $K_{1,n-1}$ or $K_3$.
\end{lemma}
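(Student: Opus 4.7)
The backward direction is almost immediate from facts already noted in the paper: we are told that $l(K_{1,n-1})=K_{n-1}$, and a direct computation (or just drawing the three edges of $K_3$) shows that $l(K_3)=K_3$, which is complete. So the plan is to focus on the forward direction.

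For the forward direction, I would translate the hypothesis combinatorially. By definition of the line graph, $l(G)$ is complete if and only if every two edges of $G$ are adjacent in $l(G)$, i.e., share a common endpoint. So the task reduces to the classical structural claim: \emph{if $G$ is a connected graph in which every two edges share a vertex, then $G$ is a star or a triangle}. I would prove this by fixing an arbitrary edge $e_1=uv$ of $G$ and splitting into two cases according to how the remaining edges meet $\{u,v\}$.

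Case A is that every edge of $G$ contains $u$ (symmetrically, $v$). Then $G$ is a subgraph of the star centered at $u$, and by connectedness every vertex of $G$ is adjacent to $u$, so $G=K_{1,n-1}$. Case B is that there exists an edge $e_2$ containing exactly one of $u,v$, say $e_2=vw$ with $w\neq u$, and also an edge $e_3$ containing $u$ but not $v$. Using the hypothesis that $e_3$ must meet $e_2=vw$, I would force $e_3=uw$, yielding the triangle $uvw$. The crux then is to rule out any further vertex or edge. For this, I would take a hypothetical fourth edge $e_4$: if $e_4$ has an endpoint outside $\{u,v,w\}$, call it $y$, then the other endpoint $z$ must simultaneously lie in $\{u,v\}\cap\{v,w\}\cap\{u,w\}=\emptyset$ to meet all three triangle edges, a contradiction; and if both endpoints of $e_4$ lie in $\{u,v,w\}$ then $e_4$ is one of the three triangle edges. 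Connectedness of $G$ then forbids any isolated extra vertices, so $G=K_3$. Finally, if neither the asymmetric edge $e_2$ nor the asymmetric edge $e_3$ exists, we fall back into Case~A.

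I expect the main obstacle to be handling the bookkeeping in Case~B cleanly: specifically, arguing that once a triangle is forced, no other vertex or edge can be appended without violating the "every two edges share a vertex" hypothesis. The intersection-of-pairs argument above is the key observation that makes this tight, and using it carefully avoids a longer case split on degrees or on the size of $G$.
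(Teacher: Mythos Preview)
Your argument is correct. The backward direction is immediate, and for the forward direction your translation ``$l(G)$ complete $\Leftrightarrow$ any two edges of $G$ share a vertex'' is the right move. The case split is complete: every edge other than $e_1=uv$ must meet $\{u,v\}$, so either all of them pass through a single one of $u,v$ (Case~A, giving the star), or there is one through $v$ avoiding $u$ and one through $u$ avoiding $v$ (Case~B). In Case~B the intersection $\{u,v\}\cap\{v,w\}\cap\{u,w\}=\emptyset$ is exactly the observation that pins $G$ down to the triangle, and connectedness rules out stray isolated vertices. No gaps.

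As for comparison with the paper: there is nothing to compare. The paper does not prove this lemma; it is quoted from \cite{chen2002} as a preliminary result and used later (e.g., in Corollary~\ref{cor::StarLineGraph}). Your write-up supplies a clean, self-contained proof of a statement the paper takes for granted, and it is essentially the standard argument one finds for this fact.
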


\begin{proposition} \label{prop::SubgraphLineGraph}
\cite{beineke2021line} If $H$ is a non-null subgraph of $G$, then $l(H)$ is an induced subgraph of $l(G)$.
\end{proposition}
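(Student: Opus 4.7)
The plan is to verify the two defining conditions for $l(H)$ to be an induced subgraph of $l(G)$: first, that $V(l(H)) \subseteq V(l(G))$, and second, that for any pair of vertices in $V(l(H))$, they are adjacent in $l(H)$ if and only if they are adjacent in $l(G)$. By definition of the line graph, vertices of $l(H)$ correspond bijectively to edges of $H$, and likewise for $l(G)$. Since $H$ is a subgraph of $G$, we have $E(H)\subseteq E(G)$, which under these bijections gives the first inclusion immediately. The non-nullness of $H$ ensures $l(H)$ actually has vertices, so the statement is nonvacuous.

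For the adjacency condition, I would pick two distinct edges $e_1,e_2 \in E(H)$ and identify them with the corresponding vertices in both $l(H)$ and $l(G)$. By definition, $e_1$ and $e_2$ are adjacent in $l(H)$ exactly when they share a common endpoint in $H$, and adjacent in $l(G)$ exactly when they share a common endpoint in $G$. The forward implication is immediate: a vertex shared in $H$ is also a vertex of $G$, so sharing in $H$ forces sharing in $G$. For the converse — which is the content of \emph{induced} subgraph, as opposed to just subgraph — suppose $e_1$ and $e_2$ share a vertex $v$ in $G$. Writing $e_1 = uv$ and $e_2 = vw$, the fact that $e_1,e_2 \in E(H)$ forces all four endpoints, in particular $v$, to belong to $V(H)$, so $e_1$ and $e_2$ share $v$ in $H$ as well, and hence are adjacent in $l(H)$.

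Combining these two observations gives that $V(l(H))\subseteq V(l(G))$ and that the induced subgraph of $l(G)$ on $V(l(H))$ has precisely the edge set of $l(H)$, which is the definition of $l(H)$ being an induced subgraph of $l(G)$.

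The main obstacle is essentially conceptual rather than technical: one must carefully separate the two notions of ``subgraph'' and ``induced subgraph'' and recognize that the converse direction in the adjacency equivalence is exactly what upgrades ``subgraph'' to ``induced subgraph''. Beyond that point, the argument is a direct unwinding of definitions and requires no computation.
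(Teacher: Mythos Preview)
Your argument is correct: the inclusion $E(H)\subseteq E(G)$ gives $V(l(H))\subseteq V(l(G))$, and your two-way adjacency check---especially the observation that if $e_1,e_2\in E(H)$ share a vertex $v$ in $G$ then $v\in V(H)$ automatically, since endpoints of edges in $H$ lie in $V(H)$---is exactly what is needed to upgrade ``subgraph'' to ``induced subgraph''.

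Note, however, that the paper does not actually prove this proposition: it is quoted from \cite{beineke2021line} as a known result and stated without proof. So there is no ``paper's own proof'' to compare against; your write-up supplies a clean, standard justification of the cited fact.
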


As consequence of Proposition \ref{prop::SubgraphLineGraph} and Theorem \ref{theo::interlacing} we have the Corollary \ref{cor::InterlacingLineGraph}.

\begin{corollary} \label{cor::InterlacingLineGraph}
Let $G$ be a graph with $n$ vertices and $m \neq 0$ edges. Then
$$\lambda_i(A_\alpha(l(G))) \geq \lambda_i(A_\alpha(l(G - e))) \geq \lambda_{i+1}(A_\alpha(l(G))),$$
$\forall i = 0, \ldots,m-1$ and $\alpha \in [0,1]$.
\end{corollary}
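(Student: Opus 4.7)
The plan is to reduce the corollary to Cauchy interlacing for symmetric matrices, which is the matrix-level content behind Theorem~\ref{theo::interlacing}. By Proposition~\ref{prop::SubgraphLineGraph}, $l(G-e)$ is an induced subgraph of $l(G)$, so one may identify $V(l(G-e))$ with $V(l(G))\setminus\{e\}$ endowed with the inherited adjacencies. Since $G$ has $m$ edges, $A_\alpha(l(G))$ is $m\times m$ and $A_\alpha(l(G-e))$ is $(m-1)\times(m-1)$, which is exactly the setting of Cauchy's theorem after deleting a single row/column.

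First I would let $B$ be the $(m-1)\times(m-1)$ principal submatrix of $A_\alpha(l(G))$ obtained by deleting the row and column indexed by $e$, and apply Cauchy interlacing directly to the symmetric matrix $A_\alpha(l(G))$, obtaining
\[
\lambda_i(A_\alpha(l(G))) \;\geq\; \lambda_i(B) \;\geq\; \lambda_{i+1}(A_\alpha(l(G))), \qquad i=1,\dots,m-1.
\]
Next I would compare $B$ with $A_\alpha(l(G-e))$. The off-diagonal entries agree, since both equal $(1-\alpha)$ times the adjacency between two edges of $G$ that share a vertex (none involving $e$). The diagonal entries, however, differ: entry $(f,f)$ is $\alpha\,d_{l(G)}(f)$ in $B$ and $\alpha\,d_{l(G-e)}(f)$ in $A_\alpha(l(G-e))$, and these differ by exactly $\alpha$ whenever $f\sim e$ in $l(G)$.

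Hence one can write $B=A_\alpha(l(G-e))+\alpha D_e$ with $D_e$ a positive-semidefinite diagonal $\{0,1\}$-matrix. To transfer the interlacing from $B$ to $A_\alpha(l(G-e))$ I would invoke Weyl's inequality (Theorem~\ref{theo::weyl}): since $\alpha D_e\succeq 0$, Weyl immediately gives $\lambda_i(B)\geq \lambda_i(A_\alpha(l(G-e)))$, which combined with the Cauchy chain delivers the left inequality of the corollary. The right inequality $\lambda_i(A_\alpha(l(G-e)))\geq \lambda_{i+1}(A_\alpha(l(G)))$ is where I expect the real difficulty: the Weyl bound in the required direction only yields $\lambda_i(B)\leq \lambda_i(A_\alpha(l(G-e)))+\alpha$, so one must argue more carefully, exploiting the sparsity of $D_e$ (its nonzero diagonal entries are indexed by the at most $d_{l(G)}(e)$ neighbors of $e$) together with the fact that $e$ itself contributes a row/column of $(1-\alpha)$-entries to $A_\alpha(l(G))$. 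The case $\alpha=0$ follows immediately from Theorem~\ref{theo::interlacing} and serves as a sanity check for the intended chain.
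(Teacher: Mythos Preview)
Your analysis is more careful than the paper's, which simply says the corollary follows from Proposition~\ref{prop::SubgraphLineGraph} and Theorem~\ref{theo::interlacing} without addressing the fact that Theorem~\ref{theo::interlacing} is stated only for adjacency matrices. You correctly identify that the principal submatrix $B$ of $A_\alpha(l(G))$ obtained by deleting the row and column indexed by $e$ differs from $A_\alpha(l(G-e))$ by the nonnegative diagonal perturbation $\alpha D_e$, and your Cauchy-plus-Weyl argument for the left inequality $\lambda_i(A_\alpha(l(G)))\ge\lambda_i(A_\alpha(l(G-e)))$ is sound.

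The difficulty you flag with the right inequality is not a mere technicality: that inequality is \emph{false} in general for $\alpha>0$, so no further argument can close the gap. Take $G=K_3$; then $l(G)=K_3$ with $A_\alpha$-spectrum $\{2,\,3\alpha-1,\,3\alpha-1\}$ (Proposition~\ref{prop::complete_graph_spectrum}), while $G-e=P_3$ gives $l(G-e)=K_2$ with $A_\alpha$-spectrum $\{1,\,2\alpha-1\}$. For $i=2$ the right inequality reads $2\alpha-1\ge 3\alpha-1$, which fails for every $\alpha>0$; for $i=1$ it reads $1\ge 3\alpha-1$, which fails for $\alpha>2/3$. Thus the full interlacing claimed in the corollary holds only at $\alpha=0$, where indeed $\alpha D_e=0$ and $A_\alpha(l(G-e))$ is a genuine principal submatrix. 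The paper's one-line justification overlooks this. Fortunately, the later applications of the corollary in the paper (Proposition~\ref{prop::bound_eigenvalues} and Theorem~\ref{theo::bound_second_largest}) rely only on the left inequality, which your argument does establish.
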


Some known results in the literature about the incident matrix, Laplacian matrix and signless Laplacian matrix are presented below.

\begin{lemma} \label{lemma::incident_BTB}
\cite{cvetkovic2009introduction} Let $G$ be a graph with $m$ edges and $B = B(G)$ the incident matrix of $G$. Then $B^TB = 2I_m + A(l(G))$.
\end{lemma}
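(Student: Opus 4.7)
The plan is to verify the matrix identity entrywise, by computing $(B^T B)_{jk}$ directly from the definition of the incidence matrix and comparing it, case by case, with the corresponding entry of $2 I_m + A(l(G))$.

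First I would fix the indexing: let $V(G) = \{v_1,\ldots,v_n\}$ and $E(G) = \{e_1,\ldots,e_m\}$, so that $B$ is the $n \times m$ matrix with $b_{ij} = 1$ if $v_i$ is an endpoint of $e_j$ and $b_{ij} = 0$ otherwise. Then $B^T B \in M_m(\real)$ and its generic entry is
\begin{equation*}
(B^T B)_{jk} \;=\; \sum_{i=1}^{n} b_{ij}\, b_{ik},
\end{equation*}
which counts the number of vertices $v_i$ that are simultaneously incident with $e_j$ and with $e_k$.

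Next I would split into the diagonal and off-diagonal cases. If $j = k$, the sum counts the endpoints of $e_j$; since $G$ is a simple graph, every edge has exactly two endpoints, so $(B^T B)_{jj} = 2$. If $j \neq k$, the sum counts the common endpoints of the distinct edges $e_j$ and $e_k$; in a simple graph two distinct edges share at most one vertex, so $(B^T B)_{jk} \in \{0,1\}$, and it equals $1$ precisely when $e_j$ and $e_k$ are adjacent in $G$. By the definition of the line graph recalled in the introduction, this is exactly the condition for $e_j$ and $e_k$ (viewed as vertices of $l(G)$) to be adjacent, i.e.\ $(B^T B)_{jk} = [A(l(G))]_{jk}$.

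Combining the two cases gives $B^T B = 2 I_m + A(l(G))$, as claimed. There is no real obstacle here; the only point worth being careful about is invoking the simplicity of $G$ to ensure that two distinct edges share at most one vertex, which is what keeps the off-diagonal entries in $\{0,1\}$ and matches them with $A(l(G))$.
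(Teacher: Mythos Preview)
Your proof is correct and is the standard entrywise argument for this identity. The paper does not actually prove this lemma; it is quoted from \cite{cvetkovic2009introduction} as a known result, so there is no ``paper's own proof'' to compare against. Your argument is exactly the one found in that reference.
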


\begin{rmk} \label{rmk::incident_BTB_alpha}
From Lemma \ref{lemma::incident_BTB} follows that

\begin{equation} \label{eq::BTB}
    (1-\alpha)B^TB = (1-\alpha)A(l(G)) + 2(1-\alpha)I_m = A_\alpha(l(G)) -\alpha D(l(G)) + 2(1-\alpha)I_m.
\end{equation}
Taking $U = -\alpha D(l(G)) + 2(1-\alpha)I_m$ and substituting in equation (\ref{eq::BTB}), we have
\begin{equation} \label{eq::BTB_1}
    (1-\alpha)B^TB = A_\alpha(l(G)) + U,
\end{equation}
where $U = [u_{ij}]$ is a diagonal matrix of order $m$. From Theorem \ref{theo::LineGraphs_properties}, for all $k$ such that $1 \leq k \leq m$ and $e_k = v_iv_j$  we have that $u_{kk} = -\alpha d(e_k) + 2 -2\alpha = -\alpha(d(v_i)+ d(v_j) - 2) + 2 - 2\alpha = 2 - \alpha(d(v_i) + d(v_j))$.
\end{rmk}

\begin{lemma} \label{lemma::incident_BBT}
\cite{cvetkovic2009introduction} Let $G$ be a graph with $n$ vertices and $m$ edges. Consider $B$ and $D(G)$ the incident and the degree matrix of $G$, respectively. Then $BB^T = D(G) + A(G) = Q(G)$.
\end{lemma}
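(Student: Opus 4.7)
The plan is to compute $BB^T$ entrywise using the definition of the incidence matrix and then match the result against the entries of $D(G)+A(G)$. Writing $B=[b_{ik}]$ with rows indexed by vertices $v_1,\dots,v_n$ and columns indexed by edges $e_1,\dots,e_m$, the $(i,j)$ entry of $BB^T$ is $\sum_{k=1}^{m} b_{ik}b_{jk}$, so the strategy splits naturally into the diagonal and off-diagonal cases.

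First I would handle the diagonal entries $i=j$. Since $b_{ik}\in\{0,1\}$, we have $b_{ik}^{2}=b_{ik}$, and so $\sum_{k=1}^{m}b_{ik}^{2}=\sum_{k=1}^{m}b_{ik}$ counts exactly the number of edges incident with $v_i$, which is $d(v_i)$. Thus the diagonal of $BB^T$ agrees with the diagonal of $D(G)$ (and $A(G)$ contributes nothing on the diagonal since $G$ has no loops).

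Next I would handle the off-diagonal entries $i\neq j$. The product $b_{ik}b_{jk}$ equals $1$ precisely when the edge $e_k$ is incident with both $v_i$ and $v_j$, and is $0$ otherwise. Because $G$ is simple, there is at most one such edge, so $\sum_{k=1}^{m}b_{ik}b_{jk}\in\{0,1\}$ and it equals $1$ if and only if $v_iv_j\in E(G)$. This is exactly the definition of the $(i,j)$ entry of $A(G)$, while the corresponding entry of $D(G)$ is zero. Combining the two cases gives $BB^T=D(G)+A(G)$, and the final equality $D(G)+A(G)=Q(G)$ is just the definition of the signless Laplacian recalled in the introduction.

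There is essentially no obstacle here; the only subtlety is the use of simplicity of $G$ to ensure that the off-diagonal sum never exceeds $1$ (otherwise one would pick up multiplicities from parallel edges), and the absence of loops so that no edge contributes twice to a single diagonal entry.
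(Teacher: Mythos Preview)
Your argument is correct and is the standard entrywise computation. Note, however, that the paper does not supply its own proof of this lemma: it is quoted from \cite{cvetkovic2009introduction} and stated without proof, so there is nothing to compare your approach against beyond observing that it is the expected one.
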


\begin{rmk}\label{rmk::incident_BBT_alpha}
From Lemma \ref{lemma::incident_BBT} follows that
\begin{equation} \label{eq::BBT}
(1-\alpha)BB^T = (1-\alpha)A(G) + (1-\alpha)D(G) = A_\alpha(G) -\alpha D(G) + (1-\alpha)D(G) = A_\alpha(G) + (1-2\alpha)D(G)
\end{equation}
\end{rmk}

\begin{proposition} \label{prop::signlessLeasteigen}
\cite{CVETKOVIC2007155} The least eigenvalue of the signless Laplacian of a connected graph is equal to $0$ if and only if the graph is bipartite. In this case $0$ is a simple eigenvalue.
\end{proposition}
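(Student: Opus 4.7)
The plan is to exploit the factorization $Q(G) = BB^T$ from Lemma~\ref{lemma::incident_BBT}, which immediately gives that $Q(G)$ is positive semidefinite, so every eigenvalue satisfies $\lambda \geq 0$ and in particular $\lambda_n(Q(G)) \geq 0$. The problem then reduces to characterizing when $0$ actually belongs to the spectrum and, in that case, determining the dimension of its eigenspace.

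First I would observe that, for any $x \in \real^n$,
\begin{equation*}
x^T Q(G) x = x^T B B^T x = \lVert B^T x \rVert^2,
\end{equation*}
so $Q(G)x = 0$ if and only if $B^T x = 0$. Reading off the column of $B$ corresponding to an edge $e = v_i v_j$, the condition $B^T x = 0$ is equivalent to requiring $x_i + x_j = 0$ for every edge of $G$. This is the key translation that connects the algebraic condition on $x$ to a combinatorial condition on $G$.

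Next I would assume $G$ is connected and show the equivalence in both directions. For the forward implication, suppose $\lambda_n(Q(G)) = 0$ and let $x \neq 0$ be an eigenvector. Fix a vertex $v_1$ and traverse the graph along any path: the relation $x_i = -x_j$ along each edge forces $\lvert x_i \rvert = \lvert x_1 \rvert$ for every vertex, and the sign of $x_i$ depends only on the parity of the distance from $v_1$. Hence no two vertices at even distance are joined by an edge, meaning $G$ admits a bipartition into $V^+ = \{v_i : x_i > 0\}$ and $V^- = \{v_i : x_i < 0\}$. For the converse, given a bipartition $V = V_1 \cup V_2$, define $x_i = 1$ for $v_i \in V_1$ and $x_i = -1$ for $v_i \in V_2$; then every edge contributes $x_i + x_j = 0$, so $B^T x = 0$, giving $Q(G)x = 0$ and thus $0 \in \sigma(Q(G))$.

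For the simplicity statement, the argument above shows that any eigenvector $x$ associated with $0$ is uniquely determined (up to a scalar) by its value on a single vertex, because the relations $x_j = -x_i$ propagate the value along every edge and connectedness guarantees every vertex is reached. Hence $\dim \ker Q(G) = 1$, so $0$ is a simple eigenvalue. The main subtlety, though quite mild, is making sure the propagation argument is stated cleanly: I would phrase it as an induction on the length of a shortest path from a reference vertex, with connectedness ensuring completeness of the induction.
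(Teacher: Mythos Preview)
The paper does not actually prove this proposition; it is quoted from \cite{CVETKOVIC2007155} as a known preliminary result and no argument is given. Your proof is correct and is essentially the standard one: use $Q(G)=BB^T$ to get positive semidefiniteness, translate $Q(G)x=0$ into the edge constraints $x_i+x_j=0$, and then propagate signs through the connected graph to obtain both the bipartiteness characterization and the one-dimensionality of the kernel. One small point worth making explicit in the forward direction is why $x$ has no zero entry: since $x\neq 0$ some coordinate is nonzero, and connectedness together with $|x_i|=|x_j|$ along every edge forces every coordinate to share that common absolute value; this is what rules out odd cycles (traversing one would force $x_i=-x_i$) and makes the bipartition $V^+\cup V^-$ genuine. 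With that clarification your argument is complete.
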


\begin{proposition} \label{prop::Cvetkovic2007EIGENVALUEBF}
    \cite{Cvetkovic2007EIGENVALUEBF} The matrices $L(G)$ and $Q(G)$ have the same characteristic polynomial if and only if $G$ is a bipartite graph.
\end{proposition}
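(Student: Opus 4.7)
My plan is to prove both directions separately, using a diagonal similarity transform for the ``if'' direction and relying on Proposition \ref{prop::signlessLeasteigen} for the ``only if'' direction. Since both $L(G)$ and $Q(G)$ are block diagonal with one block per connected component of $G$, and the characteristic polynomial factors accordingly, I may reduce to the case where $G$ is connected without loss of generality.

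For the sufficiency direction, suppose $G$ is bipartite with parts $V_1$ and $V_2$. I will introduce the signature matrix $S = \mathrm{diag}(s_1, \ldots, s_n)$ where $s_i = 1$ if $v_i \in V_1$ and $s_i = -1$ if $v_i \in V_2$. Since every edge of $G$ has exactly one endpoint in each part, $s_i s_j = -1$ whenever $v_i v_j \in E(G)$, so a direct entry-wise computation gives $S A(G) S = -A(G)$. Moreover, since diagonal matrices commute, $S D(G) S = D(G)$. Combining these and using $S^2 = I_n$, I obtain
\begin{equation*}
S\, Q(G)\, S^{-1} = S(D(G) + A(G))S = D(G) - A(G) = L(G),
\end{equation*}
so $L(G)$ and $Q(G)$ are similar and therefore share the same characteristic polynomial.

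For the necessity direction, assume $P_{L(G)}(\lambda) = P_{Q(G)}(\lambda)$. It is a standard fact that $L(G)$ is positive semidefinite with $L(G)\mathbf{1} = 0$, so its smallest eigenvalue equals $0$. The hypothesis on characteristic polynomials then forces the smallest eigenvalue of $Q(G)$ to also equal $0$. Applying Proposition \ref{prop::signlessLeasteigen} to each connected component of $G$ (where the least eigenvalue of the restricted $Q$-matrix is $0$), I conclude that every component of $G$ is bipartite, hence $G$ itself is bipartite.

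I expect the main subtlety to be the bookkeeping around disconnected graphs: Proposition \ref{prop::signlessLeasteigen} is stated for connected graphs, so I must invoke it componentwise after observing that both $L$ and $Q$ decompose as block diagonal matrices indexed by the components. Apart from this, the argument is essentially a similarity trick plus a one-line appeal to the preceding proposition, and no heavy computation is needed.
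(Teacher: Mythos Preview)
The paper does not supply its own proof of this proposition; it is quoted from \cite{Cvetkovic2007EIGENVALUEBF} as a preliminary fact, so there is no in-paper argument to compare against. Your argument is the standard one and is essentially correct: the similarity $S\,Q(G)\,S^{-1} = L(G)$ via the bipartition signature matrix is exactly the classical trick, and deducing bipartiteness from a zero $Q$-eigenvalue via Proposition~\ref{prop::signlessLeasteigen} is the natural route back.

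One point is worth tightening. Your opening reduction ``without loss of generality $G$ is connected'' is harmless for the sufficiency direction, since the characteristic polynomials factor over components. For the necessity direction, however, $P_{L(G)} = P_{Q(G)}$ does not a priori restrict to $P_{L(G_i)} = P_{Q(G_i)}$ on each component $G_i$, so you cannot simply assert that the least eigenvalue of each block $Q(G_i)$ is $0$. The clean fix is a multiplicity count: $L(G)$ has $0$ as an eigenvalue with multiplicity exactly the number $k$ of components, so the hypothesis forces $Q(G)$ to have $0$ with multiplicity $k$ as well; since each $Q(G_i)$ is positive semidefinite and, by Proposition~\ref{prop::signlessLeasteigen}, can carry $0$ with multiplicity at most one, every one of the $k$ blocks must contribute a zero, and hence every component is bipartite. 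You allude to this subtlety in your final paragraph but do not spell out the counting step; adding that single sentence closes the gap.
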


Lemma \ref{lemma::eigeneq_RegularGraphs} shows a linear correspondence between the eigenvalues of $A_\alpha(G)$ and $A(G)$ and Proposition \ref{prop::Perron_alpha} is an adaptation of Perron-Frobenius's Theorem for $A_\alpha$-matrix.

\begin{lemma} \label{lemma::eigeneq_RegularGraphs}
\cite{VN17} If $\alpha \in [0,1]$ and $k = 1, \ldots, n$ and $G$ is a $r$-regular graph of order $n$, then there exists a linear correspondence between the eigenvalues of $A_\alpha(G)$ and $A(G)$, the following way

\begin{equation} \label{eq::autoequation}
\lambda_k(A_\alpha(G)) = \alpha r + (1-\alpha)\lambda_k(A(G)).
\end{equation}
In particular, $\lambda_1(A_\alpha(G)) = r, \ \ \forall \alpha \in [0,1]$.
\end{lemma}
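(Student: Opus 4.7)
The plan is to exploit the fact that an $r$-regular graph has a degree matrix of the particularly simple form $D(G)=rI_n$. Substituting this into the definition (\ref{eq::A_alpha}) immediately gives
\begin{equation*}
A_\alpha(G) \;=\; \alpha r I_n + (1-\alpha)A(G),
\end{equation*}
so $A_\alpha(G)$ is just an affine function of $A(G)$ with positive (well, non-negative) leading coefficient. From here everything will follow from elementary spectral bookkeeping.

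Next, I would observe that if $x$ is an eigenvector of $A(G)$ with $A(G)x = \mu x$, then $A_\alpha(G)x = \alpha r x + (1-\alpha)\mu x = (\alpha r + (1-\alpha)\mu)x$, so $x$ is also an eigenvector of $A_\alpha(G)$ with eigenvalue $\alpha r + (1-\alpha)\mu$. Conversely, any eigenvector of $A_\alpha(G)$ is an eigenvector of $A(G)$, because $A(G)=(A_\alpha(G)-\alpha r I_n)/(1-\alpha)$ when $\alpha<1$ (the case $\alpha=1$ is trivial since $A_1(G)=rI_n$ already). Hence the two matrices have exactly the same eigenvectors and their spectra are related by the affine map $\mu \mapsto \alpha r + (1-\alpha)\mu$.

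To finish, I need to check that this correspondence respects the ordering of the eigenvalues. Since $1-\alpha\geq 0$, the map $\mu\mapsto \alpha r + (1-\alpha)\mu$ is monotone non-decreasing, so applying it to the ordered list $\lambda_1(A(G))\geq\cdots\geq\lambda_n(A(G))$ produces a list in the same (non-increasing) order. Thus $\lambda_k(A_\alpha(G)) = \alpha r + (1-\alpha)\lambda_k(A(G))$ for each $k$, as claimed. For the last assertion, recall that for an $r$-regular graph the all-ones vector is an eigenvector of $A(G)$ with eigenvalue $r$, and by Perron--Frobenius $\lambda_1(A(G))=r$; substituting yields $\lambda_1(A_\alpha(G)) = \alpha r + (1-\alpha)r = r$ for every $\alpha\in[0,1]$.

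There is no real obstacle here; the only small subtlety is handling the boundary case $\alpha=1$ (where $1-\alpha=0$ collapses the spectrum of $A(G)$ entirely, but the formula still holds because every eigenvalue becomes $r$), which is dispatched in a single line. The argument is essentially a direct computation, so my writeup will prioritize clarity of the affine-transformation viewpoint over any clever reformulation.
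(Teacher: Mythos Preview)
Your argument is correct and is the standard elementary proof: since $G$ is $r$-regular, $D(G)=rI_n$, so $A_\alpha(G)=\alpha r I_n+(1-\alpha)A(G)$ is an affine shift of $A(G)$, and the spectral correspondence (with the ordering preserved because $1-\alpha\ge 0$) follows at once. The paper itself does not supply a proof of this lemma; it is quoted from \cite{VN17}, where the same direct computation is used, so there is nothing further to compare.
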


\begin{proposition} \label{prop::Perron_alpha}
\cite{VN17} Let $\alpha \in [0,1)$, $G$ be a graph and $x$ be a non- negative eigenvector of $\lambda_1(A_\alpha(G))$.
\begin{itemize}
\item [(i)] If $G$ is connected, then $x$ is positive and unique minus scalar;
\item [(ii)] If $G$ is disconnected and $P$ is the set of vertices with positive entries of $x$, then the subgraph induced by $P$ is a union of  components $H$ of $G$ with $\lambda_1(A_\alpha(H)) = \lambda_1(A_\alpha(G))$;
\item [(iii)] If $G$ is connected and $\mu$ is an eigenvalue of $A_\alpha(G)$ with a non-negative eigenvector, then $\mu = \lambda_1(A_\alpha(G))$;
\item [(iv)] If $G$ is connected, and $H$ is a proper subgraph of $G$, then $\lambda_1(A_\alpha(H))< \lambda_1(A_\alpha(G))$.
\end{itemize}
\end{proposition}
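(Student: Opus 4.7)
The plan is to invoke the classical Perron--Frobenius theorem on the non-negative symmetric matrix $A_\alpha(G)$. The crucial observation is that, for $\alpha \in [0,1)$, the off-diagonal entries $(A_\alpha(G))_{ij} = (1-\alpha)a_{ij}$ share the same zero/nonzero pattern with those of $A(G)$, so $A_\alpha(G)$ is irreducible exactly when $G$ is connected.

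Item (i) is then immediate from the Perron--Frobenius theorem for non-negative irreducible matrices: $\lambda_1(A_\alpha(G))$ is a simple eigenvalue whose eigenspace is spanned by a strictly positive vector, giving both positivity of $x$ and uniqueness up to scalar. For (iii), let $y \geq 0$, $y \neq 0$, satisfy $A_\alpha(G)y = \mu y$, and let $x > 0$ be the Perron eigenvector of (i). Symmetry of $A_\alpha(G)$ gives
$$\mu(x^T y) = x^T A_\alpha(G) y = (A_\alpha(G) x)^T y = \lambda_1(A_\alpha(G))(x^T y),$$
and $x^T y > 0$ since $x$ is strictly positive and $y$ is nontrivially non-negative, whence $\mu = \lambda_1(A_\alpha(G))$.

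For (ii), relabel the vertices so that $A_\alpha(G)$ becomes block diagonal with blocks $A_\alpha(G_1), \ldots, A_\alpha(G_k)$ indexed by the connected components $G_i$ of $G$. A non-negative eigenvector $x$ for $\lambda_1(A_\alpha(G))$ restricts block-wise to a non-negative eigenvector of each $A_\alpha(G_i)$ with eigenvalue $\lambda_1(A_\alpha(G))$; applying (iii) inside each component, such a block is nontrivial only when $\lambda_1(A_\alpha(G_i)) = \lambda_1(A_\alpha(G))$, and in that case (i) forces strict positivity. Hence the support $P$ is precisely the union of the components attaining the maximum.

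For (iv), I would pad $A_\alpha(H)$ with zero rows and columns indexed by $V(G) \setminus V(H)$ to obtain $\widetilde{A}_\alpha(H)$ of the same size as $A_\alpha(G)$, without changing the largest eigenvalue. Because $H \subsetneq G$, one has $0 \leq \widetilde{A}_\alpha(H) \leq A_\alpha(G)$ entry-wise with at least one strict inequality: a deleted edge produces a strict drop in an off-diagonal entry, and any vertex in $V(G) \setminus V(H)$ has a neighbor in $G$ by connectedness, yielding strict drops in the corresponding row and column. Irreducibility of $A_\alpha(G)$ together with the strict monotonicity of the Perron root under entry-wise domination by an irreducible non-negative matrix then gives $\lambda_1(A_\alpha(H)) = \lambda_1(\widetilde{A}_\alpha(H)) < \lambda_1(A_\alpha(G))$. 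The main delicate point is checking that every form of proper inclusion (fewer edges, fewer vertices, or both) produces the required strict entry-wise drop, and citing the strict, irreducible-comparison version of Perron--Frobenius rather than the weaker non-strict form valid for arbitrary non-negative matrices.
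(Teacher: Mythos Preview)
The paper does not supply its own proof of this proposition: it is quoted verbatim as a preliminary result from Nikiforov~\cite{VN17}, so there is nothing to compare against. Your sketch is a correct and standard derivation from Perron--Frobenius theory; the only remark is that in (ii) you should also note the converse direction (any component $G_i$ with $\lambda_1(A_\alpha(G_i)) = \lambda_1(A_\alpha(G))$ can contribute to such an $x$), but as the statement is phrased it only asks for the forward inclusion, which you handle.
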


Proposition \ref{prop::complete_graph_spectrum} and Corollary \ref{cor::StarLineGraph} show the $A_\alpha$-spectrum of $K_n$ and $l(K_{1,n-1}).$

\begin{proposition} \label{prop::complete_graph_spectrum}
\cite{VN17} The eigenvalues of $A_\alpha(K_n)$ are $\lambda_1(A_\alpha(K_n)) = n-1$ and $\lambda_k(A_\alpha(K_n)) = \alpha n -1 \text{  for } 2 \leq k \leq n$.
\end{proposition}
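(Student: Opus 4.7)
The plan is to exploit the fact that $K_n$ is $(n-1)$-regular and reduce the problem to computing the adjacency spectrum of $K_n$, which is standard. Specifically, I would invoke Lemma \ref{lemma::eigeneq_RegularGraphs}, which says that for an $r$-regular graph on $n$ vertices one has
\[
\lambda_k(A_\alpha(G)) = \alpha r + (1-\alpha)\lambda_k(A(G)).
\]
Applying this to $G = K_n$ with $r = n-1$ reduces the statement to the well-known adjacency eigenvalues of $K_n$.

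The adjacency eigenvalues of $K_n$ can be read off from the identity $A(K_n) = J_n - I_n$, where $J_n$ is the all-ones matrix. Since $J_n$ has eigenvalue $n$ (with multiplicity one, realized by the all-ones vector) and eigenvalue $0$ (with multiplicity $n-1$), the matrix $A(K_n)$ has $\lambda_1(A(K_n)) = n-1$ and $\lambda_k(A(K_n)) = -1$ for $2 \le k \le n$. Substituting into the previous display gives
\[
\lambda_1(A_\alpha(K_n)) = \alpha(n-1) + (1-\alpha)(n-1) = n-1,
\]
and, for $2 \le k \le n$,
\[
\lambda_k(A_\alpha(K_n)) = \alpha(n-1) + (1-\alpha)(-1) = \alpha n - 1.
\]

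As a sanity check (and an alternative route one could take instead of citing Lemma \ref{lemma::eigeneq_RegularGraphs}), observe that directly from the definition
\[
A_\alpha(K_n) = \alpha(n-1)I_n + (1-\alpha)(J_n - I_n) = (\alpha n - 1)I_n + (1-\alpha)J_n,
\]
so the spectrum of $A_\alpha(K_n)$ is obtained by shifting the spectrum of $(1-\alpha)J_n$ by $\alpha n - 1$, which yields exactly the same two values. There is no real obstacle here; the only thing worth being careful about is the algebraic simplification $\alpha(n-1) - (1-\alpha) = \alpha n - 1$, which is what makes the second eigenvalue take the clean form claimed.
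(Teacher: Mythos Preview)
Your argument is correct. Both routes you sketch --- invoking Lemma~\ref{lemma::eigeneq_RegularGraphs} for the $(n-1)$-regular graph $K_n$, or writing $A_\alpha(K_n) = (\alpha n - 1)I_n + (1-\alpha)J_n$ directly --- are valid and lead cleanly to the claimed spectrum.

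There is nothing to compare against here: the paper does not prove this proposition but simply quotes it from Nikiforov~\cite{VN17}, where the $A_\alpha$-matrix was introduced. Your proof would serve perfectly well as a self-contained justification.
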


\begin{corollary} \label{cor::StarLineGraph}
    Let $G \cong K_{1,n-1}$ and $\alpha \in [0,1]$. Then $\sigma(A_\alpha(l(G))) = \left\{n-2^{(1)}, (n-1)\alpha - 1^{(n-2)} \right\}.$
\end{corollary}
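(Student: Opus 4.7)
The proof is essentially an immediate assembly of two earlier results, so the plan is very short and there is no real obstacle to overcome.

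First I would invoke Lemma \ref{lemma::CompleteLineGraph}, which tells us that the line graph of the star $K_{1,n-1}$ is the complete graph $K_{n-1}$, i.e.\ $l(G) \cong K_{n-1}$. This observation was already highlighted in the introduction (where the authors note that $l(K_{1,n-1}) = K_{n-1}$), so one just needs to cite it.

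Next I would apply Proposition \ref{prop::complete_graph_spectrum} to the graph $K_{n-1}$. Substituting $n-1$ in place of $n$ in the statement of that proposition yields $\lambda_1(A_\alpha(K_{n-1})) = (n-1)-1 = n-2$ (simple), and $\lambda_k(A_\alpha(K_{n-1})) = \alpha(n-1) - 1$ for $2 \le k \le n-1$, i.e.\ with multiplicity $n-2$. Combining the two facts,
\begin{equation*}
\sigma(A_\alpha(l(G))) = \sigma(A_\alpha(K_{n-1})) = \left\{n-2^{(1)},\; (n-1)\alpha - 1^{(n-2)} \right\},
\end{equation*}
which is exactly the claimed spectrum. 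No case analysis is required and the endpoints $\alpha = 0,1$ are covered automatically by the range in Proposition \ref{prop::complete_graph_spectrum}, so the only ``step'' is making sure the index shift $n \mapsto n-1$ is applied correctly.
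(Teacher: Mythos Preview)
Your proposal is correct and follows essentially the same approach as the paper's own proof: invoke Lemma~\ref{lemma::CompleteLineGraph} to identify $l(K_{1,n-1})$ with $K_{n-1}$, then read off the spectrum from Proposition~\ref{prop::complete_graph_spectrum} with the index shift $n \mapsto n-1$. The paper's proof is just a terser version of what you wrote.
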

\begin{proof}
    From Lemma \ref{lemma::CompleteLineGraph} we know that $l(G)$ is a complete graph with $n-1$ vertices and from Proposition \ref{prop::complete_graph_spectrum} the result follows.
\end{proof}

Theorem \ref{theo::linegraph} provides relations between $P_{A_\alpha(l(G))}$ and $P_{A_\alpha(G)}$,  and between $P_{A_\alpha(l(G))}$ and $P_{A(G)}$. Corollary \ref{cor::complete_graph} obtains the $A_\alpha$-spectrum of the $l(K_n)$ and Corollary \ref{cor::pol_signless} shows a relation between  $P_{A_\alpha(l(G))}$ and $P_{Q(G)}$ when $G$ is $r$-regular.

\begin{theorem} \label{theo::linegraph}
	\cite{joao2022} Let $G$ be a $r$-regular graph with $n$ vertices and $m$ edges such that $r \geq 2$ and $\alpha \in [0,1)$. Then
	\begin{equation*} \label{eq1::linegraph}
	    P_{A_\alpha(l(G))}(\lambda) = (\lambda - 2r \alpha + 2)^{m-n}P_{A_\alpha(G)}(\lambda - r + 2)
	\end{equation*}
	and
	\begin{equation*} \label{eq2::linegraph}
	    P_{A_\alpha(l(G))}(\lambda) = (\lambda - 2r \alpha + 2)^{m-n}(1-\alpha)^n P_{A(G)}\left( \dfrac{\lambda - r(\alpha+1) + 2}{1-\alpha} \right)
	\end{equation*}
\end{theorem}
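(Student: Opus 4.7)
The plan is to compare the incidence-matrix expressions for $A_\alpha(l(G))$ and $A_\alpha(G)$ provided by Remarks \ref{rmk::incident_BTB_alpha} and \ref{rmk::incident_BBT_alpha}, and then exploit the standard identity $P_{B^TB}(\lambda)=\lambda^{m-n}P_{BB^T}(\lambda)$ (valid because the nonzero eigenvalues of $BB^T$ and $B^TB$ coincide with equal multiplicities, and $m\geq n$ for an $r$-regular graph with $r\geq 2$).

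First, I specialize Remark \ref{rmk::incident_BTB_alpha} to the $r$-regular case. Every edge $e_k=v_iv_j$ satisfies $d(v_i)+d(v_j)=2r$, so the diagonal matrix $U$ collapses to $U=(2-2\alpha r)I_m$. Hence
\begin{equation*}
(1-\alpha)B^TB \;=\; A_\alpha(l(G))+(2-2\alpha r)I_m .
\end{equation*}
Similarly, Remark \ref{rmk::incident_BBT_alpha} with $D(G)=rI_n$ yields
\begin{equation*}
(1-\alpha)BB^T \;=\; A_\alpha(G)+(1-2\alpha)r I_n .
\end{equation*}

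Next, I compute $P_{A_\alpha(l(G))}(\lambda)$ by substituting the first identity, factoring out $(1-\alpha)^m$, and writing $\mu:=\lambda-2\alpha r+2$, which gives
\begin{equation*}
P_{A_\alpha(l(G))}(\lambda) \;=\; (1-\alpha)^m P_{B^TB}\!\bigl(\tfrac{\mu}{1-\alpha}\bigr)
\;=\; (1-\alpha)^n \mu^{m-n} P_{BB^T}\!\bigl(\tfrac{\mu}{1-\alpha}\bigr),
\end{equation*}
using $P_{B^TB}=\lambda^{m-n}P_{BB^T}$. Then I plug in $BB^T=\frac{1}{1-\alpha}\bigl(A_\alpha(G)+(1-2\alpha)rI_n\bigr)$ to obtain
\begin{equation*}
P_{BB^T}\!\bigl(\tfrac{\mu}{1-\alpha}\bigr) \;=\; (1-\alpha)^{-n}\,P_{A_\alpha(G)}\bigl(\mu-(1-2\alpha)r\bigr) .
\end{equation*}
The constants $(1-\alpha)^n$ cancel and the shift simplifies as $\mu-(1-2\alpha)r=\lambda-r+2$, yielding the first identity of the theorem.

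For the second identity, I use the fact that on an $r$-regular graph $A_\alpha(G)=\alpha r\,I_n+(1-\alpha)A(G)$ (consistent with Lemma \ref{lemma::eigeneq_RegularGraphs}), so
\begin{equation*}
P_{A_\alpha(G)}(\lambda-r+2) \;=\; (1-\alpha)^n\, P_{A(G)}\!\left(\frac{\lambda-r+2-\alpha r}{1-\alpha}\right)
\;=\; (1-\alpha)^n\, P_{A(G)}\!\left(\frac{\lambda-r(\alpha+1)+2}{1-\alpha}\right),
\end{equation*}
and substituting this into the first identity gives the second. The only delicate point is the bookkeeping of the scalar shifts in the two substitutions and making sure the $(1-\alpha)^m$, $(1-\alpha)^n$ factors combine correctly; the cleanest way is to track the single scalar $\mu=\lambda-2\alpha r+2$ throughout, which I would do explicitly to avoid algebra errors.
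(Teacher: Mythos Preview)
The paper does not supply its own proof of Theorem \ref{theo::linegraph}; the result is quoted from \cite{joao2022} in the preliminaries, so there is nothing in this paper to compare your argument against directly. That said, your derivation is correct: specializing Remarks \ref{rmk::incident_BTB_alpha} and \ref{rmk::incident_BBT_alpha} to the $r$-regular case, applying $P_{B^TB}(x)=x^{m-n}P_{BB^T}(x)$ (legitimate since $r\ge 2$ forces $m\ge n$), and tracking the scalar shift $\mu=\lambda-2\alpha r+2$ yields exactly the first identity, and the second follows from $A_\alpha(G)=\alpha r I_n+(1-\alpha)A(G)$. The proof of Corollary \ref{cor::pol_signless} in the paper, which starts from $P_{A_\alpha(l(G))}(\lambda)=(\lambda-2r\alpha+2)^{m-n}\lvert(\lambda-2r\alpha+2)I_n-(1-\alpha)BB^T\rvert$, indicates that the argument in \cite{joao2022} runs along the same incidence-matrix lines you use.
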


\begin{corollary} \label{cor::complete_graph}
    Let $\alpha \in [0,1]$. Then $\sigma(A_\alpha(l(K_n))) = \Biggl\{ 2n-4, (n(\alpha + 1) - 4)^{(n-1)}$, $(2\alpha(n-1) - 2)^{(\frac{n(n-3)}{2})} \Biggr\}.$
\end{corollary}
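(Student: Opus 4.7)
The plan is to derive the claim as a direct application of Theorem \ref{theo::linegraph} with $G = K_n$, using the known $A_\alpha$-spectrum of $K_n$ from Proposition \ref{prop::complete_graph_spectrum}. Since $K_n$ is $(n-1)$-regular on $n$ vertices with $m=\binom{n}{2}$ edges, and $r=n-1\geq 2$ holds for $n\geq 3$, the hypotheses of Theorem \ref{theo::linegraph} are satisfied for $\alpha\in[0,1)$.

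First I would substitute $r=n-1$ into the identity
\begin{equation*}
P_{A_\alpha(l(K_n))}(\lambda) = (\lambda - 2r\alpha + 2)^{m-n} P_{A_\alpha(K_n)}(\lambda - r + 2),
\end{equation*}
which becomes $(\lambda - 2\alpha(n-1) + 2)^{m-n} P_{A_\alpha(K_n)}(\lambda - n + 3)$. By Proposition \ref{prop::complete_graph_spectrum}, the roots of $P_{A_\alpha(K_n)}(\mu)$ are $\mu = n-1$ (simple) and $\mu = \alpha n - 1$ (with multiplicity $n-1$). Solving $\lambda - n + 3 = n - 1$ yields the eigenvalue $\lambda = 2n - 4$ with multiplicity $1$, while $\lambda - n + 3 = \alpha n - 1$ yields $\lambda = n(\alpha + 1) - 4$ with multiplicity $n-1$.

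Next I would handle the prefactor $(\lambda - 2\alpha(n-1) + 2)^{m-n}$, which contributes the eigenvalue $\lambda = 2\alpha(n-1) - 2$ with multiplicity $m - n = \tfrac{n(n-1)}{2} - n = \tfrac{n(n-3)}{2}$. A quick tally gives $1 + (n-1) + \tfrac{n(n-3)}{2} = \tfrac{n(n-1)}{2} = m$, confirming that all eigenvalues of $A_\alpha(l(K_n))$ have been accounted for.

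The only subtle point will be the boundary case $\alpha = 1$, which lies outside the range of Theorem \ref{theo::linegraph}. Since $l(K_n)$ is $(2n-4)$-regular, $A_1(l(K_n)) = D(l(K_n)) = (2n-4)I_m$, so every eigenvalue equals $2n-4$ with total multiplicity $m$; and a direct substitution of $\alpha = 1$ into the three claimed eigenvalues $2n-4$, $n(\alpha+1)-4$, $2\alpha(n-1)-2$ also yields $2n-4$ in each case. Thus the formula extends continuously to $\alpha = 1$, and no genuine obstacle arises—the main care is simply in bookkeeping the multiplicities and noting this endpoint.
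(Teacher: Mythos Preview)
Your proposal is correct and follows essentially the same approach as the paper, which simply cites Proposition~\ref{prop::complete_graph_spectrum} and Theorem~\ref{theo::linegraph} without further detail. Your write-up is in fact more careful than the paper's, since you explicitly verify the multiplicity count and separately handle the endpoint $\alpha=1$, which lies outside the hypothesis $\alpha\in[0,1)$ of Theorem~\ref{theo::linegraph}.
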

\begin{proof}
From Proposition \ref{prop::complete_graph_spectrum} and Theorem \ref{theo::linegraph} the result follows.
\end{proof}

\begin{corollary} \label{cor::pol_signless}
    Let $G$ be a $r$-regular graph with $n$ vertices, $m \neq 0$ edges and $\alpha \in [0,1)$. Then
    \begin{equation*} \label{eq2::pol_linegraph}
    P_{A_\alpha(l(G))}(\lambda) = (\lambda - 2r \alpha + 2)^{m-n}(1-\alpha)^n P_{Q(G)}\left( \dfrac{\lambda - 2r\alpha + 2}{1-\alpha} \right)
    \end{equation*}
\end{corollary}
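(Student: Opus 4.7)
The plan is to derive Corollary \ref{cor::pol_signless} directly from the second identity of Theorem \ref{theo::linegraph} by performing a simple change of variable that converts $P_{A(G)}$ into $P_{Q(G)}$. Theorem \ref{theo::linegraph} gives, under exactly the hypotheses stated ($G$ is $r$-regular, $\alpha \in [0,1)$),
\begin{equation*}
P_{A_\alpha(l(G))}(\lambda) = (\lambda - 2r\alpha + 2)^{m-n}(1-\alpha)^n P_{A(G)}\!\left(\frac{\lambda - r(\alpha+1) + 2}{1-\alpha}\right),
\end{equation*}
so the only work is to rewrite the argument of $P_{A(G)}$ in terms of $P_{Q(G)}$.

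First I would use regularity: since $G$ is $r$-regular, $D(G) = rI_n$, and hence $Q(G) = D(G) + A(G) = rI_n + A(G)$. This implies the shift identity $P_{Q(G)}(\mu) = P_{A(G)}(\mu - r)$, valid for every scalar $\mu$. Equivalently, $P_{A(G)}(\nu) = P_{Q(G)}(\nu + r)$.

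Applying this with $\nu = \dfrac{\lambda - r(\alpha+1) + 2}{1-\alpha}$, the argument becomes
\begin{equation*}
\frac{\lambda - r(\alpha+1) + 2}{1-\alpha} + r \;=\; \frac{\lambda - r(\alpha+1) + 2 + r(1-\alpha)}{1-\alpha} \;=\; \frac{\lambda - 2r\alpha + 2}{1-\alpha},
\end{equation*}
where the numerator simplifies because $-r(\alpha+1) + r(1-\alpha) = -2r\alpha$. Substituting this back into the identity from Theorem \ref{theo::linegraph} yields exactly the claimed formula.

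There is essentially no obstacle here; the statement is a one-line corollary of Theorem \ref{theo::linegraph} combined with the regularity observation $Q(G) = rI_n + A(G)$. The only point worth double-checking is the arithmetic in the shift, which is what makes the argument of $P_{A(G)}$ collapse to the cleaner expression $\frac{\lambda - 2r\alpha + 2}{1-\alpha}$ appearing on both the leading factor $(\lambda - 2r\alpha + 2)^{m-n}$ and inside $P_{Q(G)}$.
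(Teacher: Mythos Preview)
Your proof is correct and follows essentially the same approach as the paper: both derive the corollary directly from Theorem \ref{theo::linegraph} via a one-line substitution. The only minor difference is that the paper starts from the first identity of Theorem \ref{theo::linegraph} and invokes Lemma \ref{lemma::incident_BBT} (i.e., $BB^T = Q(G)$), whereas you start from the second identity and use the more elementary shift $Q(G) = rI_n + A(G)$; your arithmetic in the shift is correct.
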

\begin{proof}
    From Theorem \ref{theo::linegraph} and Lemma \ref{lemma::incident_BBT} follows that
    \begin{align*}
        P_{A_\alpha(l(G))}(\lambda) &= (\lambda -2r \alpha + 2)^{m-n} \vert(\lambda -2r \alpha + 2)I_n - (1-\alpha)BB^T\vert \\
        &= (\lambda - 2r \alpha + 2)^{m-n}(1-\alpha)^n P_{Q(G)}\left( \dfrac{\lambda - 2r\alpha + 2}{1-\alpha} \right).
    \end{align*}
\end{proof}

\begin{rmk} \label{rmk::signlessLaplacian_alpha}
Let $G$ a $r$-regular graph with $\sigma(Q(G)) = \{q_1, \ldots, q_n\}$ and $m > n$. From Corollary \ref{cor::pol_signless}, we have $\alpha,$ $(2r\alpha-2)$ and $\alpha(2r-q_i) + q_i-2$ for $i=1, \dots, n$ belong to $\sigma(A_\alpha(l(G)))$.
\end{rmk}

\begin{example} \label{example::complete_graph}
Consider $G \cong K_n$.  We know that $\sigma(Q(K_n)) = \{2n-2,  n - 2^{(n-1)} \}$, see~\cite{CARDOSO2018325,cvetkovic2004}. So, from Corollary \ref{cor::pol_signless} we have  $\sigma(A_\alpha(l(K_n))) = \Biggl\{ 2n-4, n(\alpha + 1) - 4^{(n-1)}$, $2\alpha(n-1) - 2^{(\frac{n(n-3)}{2})} \Biggr\}$, which can be seen in~\cite{joao2022}.
\end{example}

\begin{example} \label{example::bipartite_graph}
Consider $G \cong K_{n,n}$. From Proposition \ref{prop::Cvetkovic2007EIGENVALUEBF}, $\sigma(L(K_{n,n}))= \sigma(Q(K_{n,n}))$. Moreover, $\sigma(L(K_{n,n})) = \{2n, n^{(2n-2)},0 \}$, which can be see in~\cite{Malathy2017BOUNDSFL, cvetkovic2004}. So from Corollary \ref{cor::pol_signless}, we obtain $\sigma(A_\alpha(l(K_{n,n}))) = \Biggl\{ 2n-2, n(\alpha + 1) - 2^{(2n-2)}$, $2\alpha n - 2^{(n^2-2n+1)} \Biggr\}$.
\end{example}

The following results are bounds for the $A_\alpha$-eigenvalues. In~\cite{VN17}, Proposition \ref{prop::lower_bound_min_and_max} was introduced without extremal graphs so, we rewrite it and introduce its extremal graph.

\begin{proposition} \label{prop::lowerBound_niki}
    \cite{VN17} If $G$ is a graph with maximum degree $\Delta$, then
    \begin{equation} \label{eq::niki_lowerBound_extremalStar}
        \lambda_1(A_\alpha(G)) \geq \dfrac{1}{2}\left( \alpha(\Delta + 1) + \sqrt{\alpha^2(\Delta +1)^2 + 4 \Delta(1-2\alpha)}\right)
    \end{equation}
    If $G$ is connected, equality holds if and only if $G \cong K_{1,\Delta}$.
\end{proposition}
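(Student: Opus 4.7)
The plan is to apply the Rayleigh quotient characterization (Proposition~\ref{prop::rayleigh_alpha}) to a test vector obtained by planting the Perron eigenvector of $A_\alpha(K_{1,\Delta})$ at a vertex of maximum degree of $G$. First I would compute $\lambda_1(A_\alpha(K_{1,\Delta}))$ in closed form. By the symmetry of the star, any Perron eigenvector assigns the same value $b$ to all $\Delta$ leaves and some value $a$ to the center, so the eigenvalue equations collapse to the $2\times 2$ system
\begin{equation*}
\alpha\Delta\, a + (1-\alpha)\Delta\, b = \lambda a, \qquad (1-\alpha)\, a + \alpha b = \lambda b,
\end{equation*}
whose characteristic equation $\lambda^2 - \alpha(\Delta+1)\lambda - \Delta(1-2\alpha) = 0$ has larger root
\begin{equation*}
\lambda^{\star} := \tfrac{1}{2}\Bigl(\alpha(\Delta+1) + \sqrt{\alpha^2(\Delta+1)^2 + 4\Delta(1-2\alpha)}\,\Bigr),
\end{equation*}
which is the right-hand side of~(\ref{eq::niki_lowerBound_extremalStar}). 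The second equation also yields $(1-\alpha)\,a = (\lambda^{\star}-\alpha)\,b$, and since $\lambda^{\star} > \alpha$ for $\alpha \in [0,1)$ and $\Delta \geq 1$, one may take $a,b > 0$.

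Next, let $v_0 \in V(G)$ satisfy $d(v_0) = \Delta$ and write $N(v_0) = \{v_1,\dots,v_\Delta\}$. Normalise $a,b > 0$ so that $a^2 + \Delta b^2 = 1$ and define $x \in \real^n$ by $x_{v_0} = a$, $x_{v_i} = b$ for $i = 1,\dots,\Delta$, and $x_w = 0$ on all remaining vertices. A direct expansion of the quadratic form gives
\begin{equation*}
x^{T} A_\alpha(G) x = \alpha \Delta a^2 + 2(1-\alpha)\Delta\, ab + \alpha b^2 \sum_{i=1}^{\Delta} d(v_i) + 2(1-\alpha)\, b^2\, e(N(v_0)),
\end{equation*}
where $e(N(v_0))$ counts the edges of $G$ with both endpoints in $N(v_0)$. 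Evaluating the same vector against $A_\alpha(K_{1,\Delta})$ (where each $d(v_i)$ is $1$ and $e(N(v_0)) = 0$) yields exactly $\lambda^{\star}$, so
\begin{equation*}
x^{T} A_\alpha(G) x - \lambda^{\star} = \alpha b^2 \sum_{i=1}^{\Delta} \bigl(d(v_i) - 1\bigr) + 2(1-\alpha)\, b^2\, e(N(v_0)) \geq 0,
\end{equation*}
and Proposition~\ref{prop::rayleigh_alpha} delivers $\lambda_1(A_\alpha(G)) \geq x^{T} A_\alpha(G) x \geq \lambda^{\star}$.

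For the equality statement I would assume $G$ is connected and $\alpha \in [0,1)$; the case $\alpha = 1$ is degenerate since then $A_\alpha = D$ and $\lambda^{\star} = \Delta$. Equality throughout demands both that the displayed slack vanish and that $x$ attain the maximum Rayleigh quotient. The second part of Proposition~\ref{prop::rayleigh_alpha} forces $x$ to be an eigenvector for $\lambda_1(A_\alpha(G))$, and by Proposition~\ref{prop::Perron_alpha}(i) this eigenvector is strictly positive on a connected $G$, so $V(G) = \{v_0\} \cup N(v_0)$. For $\alpha > 0$, the vanishing of the slack immediately gives $d(v_i) = 1$ for every $i$ and no edges inside $N(v_0)$, so $G \cong K_{1,\Delta}$. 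The converse is immediate from the computation of the first step.

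The main obstacle I anticipate is the boundary case $\alpha = 0$: the first slack term vanishes unconditionally, so the leaf-degree conditions cannot be read directly off the slack. Once Perron-positivity pins $V(G) = \{v_0\} \cup N(v_0)$, however, the adjacency-matrix eigenvector equation $\sqrt{\Delta}\, b = \sum_{u \sim v_i} x_u$ at each $v_i$, combined with $e(N(v_0)) = 0$, recovers $d(v_i) = 1$ and again yields $G \cong K_{1,\Delta}$.
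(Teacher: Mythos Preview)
The paper does not supply its own proof of this proposition: it is quoted from Nikiforov's original article \cite{VN17} and stated without argument, so there is nothing in the present paper to compare your approach against.

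Your argument is correct. The Rayleigh-quotient test with the star's Perron vector planted at a maximum-degree vertex is exactly the natural route, and your slack computation
\[
x^{T} A_\alpha(G)\,x - \lambda^{\star} \;=\; \alpha b^{2}\sum_{i=1}^{\Delta}\bigl(d(v_i)-1\bigr) + 2(1-\alpha)\,b^{2}\,e\bigl(N(v_0)\bigr)\;\ge\;0
\]
is clean and correct. One small simplification in the equality analysis: once Perron positivity forces $V(G)=\{v_0\}\cup N(v_0)$, the vanishing of the second slack term (which survives for every $\alpha\in[0,1)$, including $\alpha=0$) already gives $e(N(v_0))=0$; together these two facts alone imply that each $v_i$ has $v_0$ as its unique neighbour, so $G\cong K_{1,\Delta}$ without any appeal to the eigenvector equation. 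Your extra step for $\alpha=0$ is therefore unnecessary, though not wrong. Your remark that the equality characterisation breaks down at $\alpha=1$ is also apt, since then every connected graph of maximum degree $\Delta$ attains the bound.
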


\begin{proposition} \label{prop::lower_bound_min_and_max}
   \cite{VN17}  Let $G$ be a graph with $n$ vertices and $\alpha \in [0,1]$. Then, 
    \begin{equation} \label{eq::LowerUpperBound_Wang}
        \min_{v_i \in V}\left\{ \sqrt{\alpha d^2(v_i) + (1-\alpha)\sum_{v_j \sim v_i}d(v_j)} \right\} \leq \lambda_1(A_\alpha(G)) \leq \max_{v_i \in V}\left\{ \sqrt{\alpha d^2(v_i) + (1-\alpha)\sum_{v_j \sim v_i}d(v_j)} \right\}
    \end{equation}
    Moreover, the equalities holds if and only if $G$ is regular.
\end{proposition}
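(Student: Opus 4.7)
The key observation is that the all-ones vector $\mathbf{1}$ is mapped by $A_\alpha(G)$ to the degree vector. Indeed, the $i$-th row of $A_\alpha(G)$ has $\alpha d(v_i)$ on the diagonal and $(1-\alpha)$ at each neighbour of $v_i$, so its entries sum to $d(v_i)$; hence $A_\alpha(G)\mathbf{1}=d$ with $d=(d(v_1),\ldots,d(v_n))^T$. Applying $A_\alpha(G)$ a second time,
\begin{equation*}
\bigl(A_\alpha(G)^2\mathbf{1}\bigr)_i \;=\; \bigl(A_\alpha(G)\,d\bigr)_i \;=\; \alpha d^2(v_i)+(1-\alpha)\sum_{v_j\sim v_i}d(v_j),
\end{equation*}
so the squared quantity appearing in \eqref{eq::LowerUpperBound_Wang} is exactly the $i$-th row sum of $A_\alpha(G)^2$.

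To bound $\lambda_1(A_\alpha(G))^2=\lambda_1(A_\alpha(G)^2)$ by these row sums I will use a Perron--Frobenius / Rayleigh-quotient argument. Assuming $G$ is connected and $\alpha\in[0,1)$, Proposition \ref{prop::Perron_alpha}(i) supplies a strictly positive eigenvector $u$ with $A_\alpha(G)\,u=\lambda_1(A_\alpha(G))u$, whence $u^TA_\alpha(G)^2=\lambda_1(A_\alpha(G))^2u^T$. Evaluating $u^T A_\alpha(G)^2\mathbf{1}$ in two ways then gives
\begin{equation*}
\lambda_1(A_\alpha(G))^2 \;=\; \frac{\sum_{i=1}^n u_i\bigl(A_\alpha(G)^2\mathbf{1}\bigr)_i}{\sum_{i=1}^n u_i},
\end{equation*}
a convex combination of the row sums with strictly positive weights. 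Consequently the smallest and largest row sums of $A_\alpha(G)^2$ are, respectively, lower and upper bounds for $\lambda_1(A_\alpha(G))^2$; taking square roots yields \eqref{eq::LowerUpperBound_Wang}. The boundary case $\alpha=1$ (where $A_\alpha(G)=D(G)$ and $\lambda_1=\Delta$) and the disconnected case (by restricting to a component that realises $\lambda_1$) are handled separately.

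For the equality characterisation, sufficiency is immediate: if $G$ is $r$-regular then Lemma \ref{lemma::eigeneq_RegularGraphs} gives $\lambda_1(A_\alpha(G))=r$ and $\alpha d^2(v_i)+(1-\alpha)\sum_{v_j\sim v_i}d(v_j)=\alpha r^2+(1-\alpha)r^2=r^2$ for every $i$, so both inequalities are tight. For the converse, equality on both sides forces the row sums of $A_\alpha(G)^2$ to coincide, i.e.\ $A_\alpha(G)^2\mathbf{1}=\lambda_1^2\mathbf{1}$; combined with $A_\alpha(G)\mathbf{1}=d$, I plan to deduce that $\mathbf{1}$ must be proportional to the Perron eigenvector, so $d=\lambda_1\mathbf{1}$ and $G$ is regular. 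The main obstacle is precisely this last step, since it requires the $\lambda_1^2$-eigenspace of $A_\alpha(G)^2$ to be one-dimensional; this holds as long as $\lambda_n(A_\alpha(G))>-\lambda_1(A_\alpha(G))$, which one verifies using that $\operatorname{tr}(A_\alpha(G))=2\alpha m$ is strictly positive for $\alpha>0$ whenever $G$ has an edge, preventing the $A_\alpha$-spectrum from being symmetric about the origin.
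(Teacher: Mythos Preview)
Your argument for the two inequalities is correct and is essentially the paper's proof: both compute the row sums of $A_\alpha(G)^2$ and sandwich $\lambda_1(A_\alpha(G))^2$ between the minimum and maximum row sum. The paper simply invokes Lemma~\ref{lemma::rowsums}, while you reprove that lemma via the positive Perron eigenvector; the content is the same.

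The problem lies in the converse of the equality characterisation, and here your proposal has a real gap. Your trace argument does not do what you need: a strictly positive trace rules out the spectrum of $A_\alpha(G)$ being \emph{symmetric} about the origin, but it does not rule out $\lambda_n(A_\alpha)=-\lambda_1(A_\alpha)$. (A symmetric matrix with spectrum $\{3,1,-3\}$ has positive trace and yet $\lambda_n=-\lambda_1$.) So from $A_\alpha(G)^2\mathbf{1}=\lambda_1^2\mathbf{1}$ you cannot conclude, by the route you propose, that $\mathbf{1}$ lies in the $\lambda_1$-eigenspace of $A_\alpha(G)$.

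More seriously, at $\alpha=0$ the ``only if'' direction as stated is actually false, so no argument can close the gap there. Take $G=K_{2,3}$: every vertex satisfies $\sum_{v_j\sim v_i}d(v_j)=6$, and $\lambda_1(A(K_{2,3}))=\sqrt{6}$, so both inequalities in \eqref{eq::LowerUpperBound_Wang} are equalities, yet $K_{2,3}$ is not regular. (Any semiregular bipartite graph behaves the same way.) The obstacle you flagged is therefore not merely technical; it reflects a genuine defect in the statement. The paper's own proof has exactly this gap: it passes from ``all row sums of $A_\alpha(G)^2$ are equal'' directly to ``$G$ is regular'' without justification, and the $K_{2,3}$ example shows that step fails at $\alpha=0$. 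For $\alpha>0$ the conclusion may well be true, but it needs an argument different from the one you sketch.
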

\begin{proof}
    We have already seen that $S_{v_i}(A_\alpha(G)) = d(v_i)$ and from \cite{VN17} we have that $\displaystyle S_{v_i}(A_\alpha^2(G)) = \alpha d^2(v_i) + (1-\alpha)\sum_{v_j \sim v_i}d(v_j)$.

    From Lemma \ref{lemma::rowsums} follows that
    \begin{equation*}
        \min_{v_i \in V} \{ S_{v_i}(A_\alpha^2(G)) \} \leq \lambda_1^2(A_\alpha(G)) \leq \max_{v_i \in V} \{ S_{v_i}(A_\alpha^2(G))\}
    \end{equation*}
    and then the result follows.

    Now, suppose initially that $G$ is a $r$-regular graph. Hence, $d(v_i) = r, \forall v_i \in V$. Replacing in (\ref{eq::LowerUpperBound_Wang}) we obtain $\lambda_1(A_\alpha(G)) = r$. Conversely, if both equalities hold, we have that all row sums are equal and then we can conclude that $G$ is regular.
\end{proof}

\begin{proposition} \label{prop::second_alpha}
    \cite{Zhang2019} Let $G$ be a graph with $n$ vertices. If $G \ncong K_n$, then $\lambda_2(A_\alpha(G)) \geq 0.$
\end{proposition}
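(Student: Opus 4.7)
\medskip

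\noindent\textbf{Proof plan.} The plan is to apply Cauchy's interlacing theorem (the general version for symmetric matrices and their principal submatrices, of which Theorem~\ref{theo::interlacing} is the adjacency-matrix instance) to a carefully chosen $2\times 2$ principal submatrix of $A_\alpha(G)$.

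Since $G\not\cong K_n$, there exist two distinct vertices $v_i,v_j\in V$ with $v_iv_j\notin E(G)$. First I would form the principal submatrix $B$ of $A_\alpha(G)$ indexed by $\{v_i,v_j\}$. Because $v_i$ and $v_j$ are non-adjacent in $G$, the off-diagonal entries of $B$ vanish, and the diagonal entries are $\alpha d(v_i)$ and $\alpha d(v_j)$; thus
\[
B=\begin{pmatrix}\alpha d(v_i) & 0\\ 0 & \alpha d(v_j)\end{pmatrix},
\]
whose eigenvalues are $\alpha d(v_i)$ and $\alpha d(v_j)$, both non-negative since $\alpha\in[0,1]$ and degrees are non-negative. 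Note the crucial point that the diagonal entries are the degrees in $G$, not in the induced subgraph; this is exactly what a principal submatrix (rather than an induced subgraph's $A_\alpha$-matrix) provides.

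Next I would invoke Cauchy interlacing: for any symmetric $M\in M_n(\real)$ and any $k\times k$ principal submatrix $B$ of $M$, one has $\lambda_{n-k+i}(M)\le\lambda_i(B)\le\lambda_i(M)$ for $1\le i\le k$. Taking $k=2$ and $i=2$ yields $\lambda_2(A_\alpha(G))\ge\lambda_2(B)=\min\{\alpha d(v_i),\alpha d(v_j)\}\ge 0$, completing the argument.

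The only subtlety, and the step I would flag as the main obstacle, is that the interlacing result actually cited in Section~\ref{sec::preliminaries} (Theorem~\ref{theo::interlacing}) is phrased for adjacency matrices and their induced subgraphs, which is not quite the tool needed here: the induced-subgraph $A_\alpha$-matrix on $\{v_i,v_j\}$ has zero diagonal and does not coincide with the principal submatrix $B$ above. One must therefore appeal to the general symmetric-matrix version of Cauchy interlacing (standard, e.g.\ in Horn and Johnson~\cite{horn2013matrix}). If one wished to avoid that citation, the same conclusion can be obtained variationally: by the Courant--Fischer min--max principle applied to $A_\alpha(G)$, one has $\lambda_2(A_\alpha(G))\ge\min_{0\neq y\in S}\,y^T A_\alpha(G) y/(y^T y)$ for any $2$-dimensional subspace $S$; choosing $S=\mathrm{span}\{e_{v_i},e_{v_j}\}$ reduces the Rayleigh quotient to that of $B$ and yields $\lambda_2(A_\alpha(G))\ge\min\{\alpha d(v_i),\alpha d(v_j)\}\ge 0$.
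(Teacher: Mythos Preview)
The paper does not actually prove this proposition; it is listed in Section~\ref{sec::preliminaries} as a cited result from \cite{Zhang2019} with no argument supplied. Your proof is correct and entirely self-contained: the key observation that the $2\times 2$ principal submatrix of $A_\alpha(G)$ indexed by two non-adjacent vertices is diagonal with non-negative entries $\alpha d(v_i),\alpha d(v_j)$, together with Cauchy interlacing for symmetric matrices, immediately gives $\lambda_2(A_\alpha(G))\ge\min\{\alpha d(v_i),\alpha d(v_j)\}\ge 0$. You are also right to flag that Theorem~\ref{theo::interlacing}, as stated in the paper, is the adjacency/induced-subgraph version and does not literally apply here (the induced $A_\alpha$-matrix on $\{v_i,v_j\}$ would have zero diagonal); the general principal-submatrix interlacing from \cite{horn2013matrix}, or the Courant--Fischer alternative you sketch, is exactly the right fix.
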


\begin{proposition} \label{prop::upper_largesteigen_Pn} \cite{NIKIFOROV2017286} The largest eigenvalue of $A_\alpha(P_n)$ satisfies
\begin{equation*}
    \lambda_1(A_\alpha(P_n)) \leq \begin{cases}
    2\alpha + 2(1-\alpha)\cos{\left(\dfrac{\pi}{n+1}\right)}, \ \ \text{if} \ \ 0 \leq \alpha < \dfrac{1}{2};\\
    2\alpha + 2(1-\alpha)\cos{\left(\dfrac{\pi}{n}\right)}, \ \ \text{if} \ \ \dfrac{1}{2} \leq \alpha \leq 1. 
    \end{cases}
\end{equation*}
Equality holds if and only if $\alpha =0$, $\alpha = \dfrac{1}{2}$ or $\alpha = 1$.
\end{proposition}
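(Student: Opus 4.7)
The plan is to apply Weyl's inequality (Corollary~\ref{cor::weyl}) to two different additive decompositions of $A_\alpha(P_n)$, each giving the stated bound on a different half of $[0,1]$. First I would record three ingredients: $\lambda_1(D(P_n)) = 2$, the adjacency spectrum $\lambda_k(A(P_n)) = 2\cos\frac{k\pi}{n+1}$ (so in particular $\lambda_1(A(P_n)) = 2\cos\frac{\pi}{n+1}$), and $\lambda_1(Q(P_n)) = 2 + 2\cos\frac{\pi}{n}$. The last of these follows from Proposition~\ref{prop::Cvetkovic2007EIGENVALUEBF}: since $P_n$ is bipartite, $\sigma(Q(P_n)) = \sigma(L(P_n)) = \{2-2\cos\frac{k\pi}{n}\,:\,k=0,\dots,n-1\}$, whose maximum is $2+2\cos\frac{\pi}{n}$.

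For $0 \leq \alpha < \tfrac{1}{2}$ I would use the defining decomposition $A_\alpha(P_n) = \alpha D(P_n) + (1-\alpha)A(P_n)$, whose coefficients are non-negative, so Corollary~\ref{cor::weyl} yields
\[
\lambda_1(A_\alpha(P_n)) \leq \alpha\,\lambda_1(D(P_n)) + (1-\alpha)\,\lambda_1(A(P_n)) = 2\alpha + 2(1-\alpha)\cos\frac{\pi}{n+1}.
\]
For $\tfrac{1}{2} \leq \alpha \leq 1$ I would instead rewrite
\[
A_\alpha(P_n) = (2\alpha-1)\,D(P_n) + (1-\alpha)\,Q(P_n),
\]
which is again a non-negative linear combination, and apply Weyl a second time to obtain
\[
\lambda_1(A_\alpha(P_n)) \leq (2\alpha-1)\,\lambda_1(D(P_n)) + (1-\alpha)\,\lambda_1(Q(P_n)) = 2\alpha + 2(1-\alpha)\cos\frac{\pi}{n}.
\]
The two inequalities together give the piecewise bound in the statement.

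For the equality claim I would rely on the common-eigenvector characterization in Corollary~\ref{cor::weyl}. At $\alpha = 0$ and $\alpha = 1$ the coefficient of one summand vanishes and equality is immediate; at $\alpha = \tfrac{1}{2}$ the $D$-term in the second decomposition vanishes and the bound collapses to $\tfrac{1}{2}\lambda_1(Q(P_n)) = \lambda_1(A_{1/2}(P_n))$. For any other $\alpha$, equality would force the Perron vector of $A(P_n)$ (in the first decomposition) or of $Q(P_n)$ (in the second) to also be an eigenvector of $D(P_n)$ for its top eigenvalue $2$; but the Perron vector of $A(P_n)$ has entries $\sin\frac{i\pi}{n+1}$, whose endpoint coordinates are nonzero, while $D(P_n)$ acts as the scalar $1$ on endpoints and $2$ on interior vertices, so no such common eigenvector exists, and the analogous check rules out equality in the second decomposition. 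I expect this strictness analysis to be the main subtlety; the two Weyl bounds themselves are essentially one-line estimates once the three reference spectral radii are in hand.
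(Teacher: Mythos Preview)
The paper does not prove this proposition; it is quoted from \cite{NIKIFOROV2017286} in the Preliminaries section as a known result, with no accompanying argument. So there is no ``paper's proof'' to compare against.

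That said, your Weyl-based argument is correct and self-contained using only tools already in the paper (Corollary~\ref{cor::weyl} and Proposition~\ref{prop::Cvetkovic2007EIGENVALUEBF}). The two decompositions $A_\alpha = \alpha D + (1-\alpha)A$ and $A_\alpha = (2\alpha-1)D + (1-\alpha)Q$ are exactly the right ones, and the non-negativity of the coefficients on the respective ranges of $\alpha$ is what makes Weyl applicable in the direction you need. The strictness analysis via the common-eigenvector criterion is also sound: for $\alpha\in(0,\tfrac12)$ or $\alpha\in(\tfrac12,1)$ the $D$-eigenspace for the top eigenvalue $2$ is supported on interior vertices, while the Perron vectors of $A(P_n)$ and $Q(P_n)$ are strictly positive, so no common eigenvector exists. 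One small caveat worth making explicit is that $\lambda_1(D(P_n)) = 2$ requires $n\geq 3$; for $n=2$ the degree matrix is the identity and the equality claim at $\alpha=1$ breaks down, so the statement tacitly assumes $n\geq 3$.
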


\section{Main Results} \label{sec::main}

In this section we show the main results of this paper that involve bounds for some $A_\alpha$-eigenvalues of graphs and line graphs and, when is possible, we exhibited extremal graphs. Moreover, comparisons between some bounds are presented.

\subsection{Some Bounds for \texorpdfstring{$\mathbf{A_\alpha}$}{text2}-eigenvalues of Graphs} \label{subsection::A_alpha-Graphs}


\begin{theorem} \label{theo::LowerBound_alpha}
    Let $G$ be a graph of order $n \geq 2$, $\Delta$ and $\delta$ be the maximum degree and the minimum degree of $G$, respectively, and $\alpha \in [0,1]$ . Then
    \begin{equation} \label{eq::lowerbound}
        \lambda_1(A_\alpha(G)) \geq \dfrac{\alpha(\Delta + \delta) + \sqrt{\alpha^2(\Delta - \delta)^2 + 4(1-\alpha)^2 \Delta}}{2}
    \end{equation}
    If $G$ is connected, the equality holds if and only if $G \cong K_{1,n-1}$.
\end{theorem}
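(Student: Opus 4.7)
The natural tool is Lemma \ref{lemma::LowerBound_SpectralRadius} applied to the non-negative matrix $B = A_\alpha(G)$. Its diagonal entries are $b_{ii} = \alpha d(v_i)$, so $\theta = \min_i b_{ii} = \alpha\delta$, and its off-diagonals are $b_{ij} = (1-\alpha)a_{ij}$, giving $\sum_{j \neq i} b_{ij}b_{ji} = (1-\alpha)^2 \sum_{j \neq i} a_{ij}^2 = (1-\alpha)^2 d(v_i)$. Plugging these into the lemma produces, for every vertex $v_i$,
\begin{equation*}
\lambda_1(A_\alpha(G)) \;\geq\; \frac{\alpha(d(v_i)+\delta)}{2} \;+\; \sqrt{\frac{\alpha^2(d(v_i)-\delta)^2}{4} + (1-\alpha)^2\, d(v_i)}.
\end{equation*}

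A one-line calculus check (differentiating in $t=d(v_i)\geq \delta$) shows that the right-hand side is non-decreasing in $d(v_i)$; hence the maximum over $i$ demanded by Lemma \ref{lemma::LowerBound_SpectralRadius} is attained at a vertex of degree $\Delta$, and the bound \eqref{eq::lowerbound} follows at once. (One does not strictly need monotonicity, since evaluating at any single $i$ is already a valid lower bound, but it clarifies why $\Delta$ is the right choice.)

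For the equality characterisation, assume $G$ is connected, so $A_\alpha(G)$ is irreducible. The strict-inequality clause of Lemma \ref{lemma::LowerBound_SpectralRadius} says that if $n \geq 3$ and at least two rows of $A_\alpha(G)$ contain more than one nonzero off-diagonal entry—equivalently, $G$ has at least two vertices of degree $\geq 2$—then the inequality is strict. So equality forces either $n=2$, in which case $G=K_2=K_{1,1}$, or at most one vertex of $G$ has degree $\geq 2$; connectedness then forces every remaining vertex to be a pendant attached to that single high-degree vertex, so $G \cong K_{1,n-1}$.

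Conversely, for $G\cong K_{1,n-1}$ I would verify equality directly by substituting $\Delta=n-1$, $\delta=1$ into the bound and comparing with the exact value
$\lambda_1(A_\alpha(K_{1,n-1})) = \tfrac{1}{2}\bigl(\alpha n + \sqrt{\alpha^2 n^2 + 4(n-1)(1-2\alpha)}\bigr)$
supplied by Proposition \ref{prop::lowerBound_niki}; the identity $\alpha^2(n-2)^2 + 4(1-\alpha)^2(n-1) = \alpha^2 n^2 + 4(n-1)(1-2\alpha)$ is a routine expansion. The main subtlety is precisely this converse step: Lemma \ref{lemma::LowerBound_SpectralRadius} only tells us \emph{when} strict inequality must occur, so pinning down the star as the extremal graph requires the explicit computation via Proposition \ref{prop::lowerBound_niki} rather than following from the lemma alone.
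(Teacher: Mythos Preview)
Your proof is correct and follows essentially the same route as the paper: both apply Lemma~\ref{lemma::LowerBound_SpectralRadius} to $A_\alpha(G)$, identify $\theta=\alpha\delta$ and $\sum_{j\neq i}b_{ij}b_{ji}=(1-\alpha)^2 d(v_i)$, take the vertex of maximum degree, and then use the strict-inequality clause for the equality characterisation. Your treatment is slightly more explicit (the monotonicity remark, the separate $n=2$ case, and the algebraic verification of the star value via Proposition~\ref{prop::lowerBound_niki}), whereas the paper simply cites \cite{VN17} for the star equality; but the underlying argument is the same.
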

\begin{proof}
From Lemma \ref{lemma::LowerBound_SpectralRadius} we have that 
\begin{align*}
    \displaystyle \lambda_1(A_\alpha(G))) & \geq \max_{i} \left\{ \dfrac{A_\alpha(G)_{ii} + \alpha \delta}{2} + \sqrt{\dfrac{(A_\alpha(G)_{ii} - \alpha \delta)^2}{4} + \sum_{i \neq j}A_\alpha(G)_{ij}A_\alpha(G)_{ji} }\right\} \\
    & =  \dfrac{\alpha \Delta + \alpha \delta}{2} + \sqrt{\dfrac{(\alpha \Delta - \alpha \delta)^2}{4} + \max_{i} \left\{ \sum_{i \neq j}A_\alpha(G)_{ij}A_\alpha(G)_{ji} \right\} } \\
    & =  \dfrac{\alpha (\Delta + \delta) + \sqrt{\alpha^2 (\Delta - \delta)^2 + \displaystyle 4\max_{i} \left\{ \sum_{i \neq j}A_\alpha(G)_{ij}A_\alpha(G)_{ji} \right\} }}{2} 
\end{align*}
As $A_\alpha(G) = \alpha D(G) + (1-\alpha)A(G)$,  follows that
\begin{equation*}
    \displaystyle \sum_{i \neq j}A_\alpha(G)_{ij}A_\alpha(G)_{ji} = (1-\alpha)^2 \sum_{i \neq j} a_{ij}a_{ji}
\end{equation*}
and then
\begin{equation*}
   \lambda_1(A_\alpha(G))) \geq  \dfrac{\alpha (\Delta + \delta) + \sqrt{\alpha^2 (\Delta - \delta)^2 + 4(1-\alpha)^2 \Delta}}{2}.
\end{equation*}

If $G \cong K_{1,n-1}$ we have that $\Delta = n-1$ and $\delta = 1$, so from~\cite{VN17} we have the equality. Now suppose that equality in (\ref{eq::lowerbound}) holds. Since $G$ is connected, $A_\alpha(G)$ is irreducible. By the equality condition  in Lemma \ref{lemma::LowerBound_SpectralRadius}, there exists only one row (column) of $A_\alpha(G)$ containing more than one nonzero off-diagonal entry. Then there exists only a vertex $v$ with $d(v) \geq  2$. So, $G \cong K_{1,n-1}$.
\end{proof}

\begin{corollary} \label{cor::LowerBoundEquality}
    Let $G$ be a graph of order $n \geq 2$, $\Delta$ the maximum degree, $\delta = 1$ and $\alpha \in [0,1).$ Then the lower bound presented in \eqref{eq::lowerbound} and \eqref{eq::niki_lowerBound_extremalStar} are equal.
\end{corollary}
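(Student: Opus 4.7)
The plan is to verify the two expressions coincide by direct algebraic manipulation under the substitution $\delta = 1$. Both bounds have the same linear term $\alpha(\Delta+1)/2$ outside the square root (after setting $\delta=1$ in \eqref{eq::lowerbound}), so the entire problem reduces to showing equality of the two radicands.

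Concretely, I would set
\[
L := \alpha^2(\Delta-1)^2 + 4(1-\alpha)^2\Delta, \qquad R := \alpha^2(\Delta+1)^2 + 4\Delta(1-2\alpha),
\]
and expand. Expanding $L$ gives
\[
L = \alpha^2\Delta^2 - 2\alpha^2\Delta + \alpha^2 + 4\Delta - 8\alpha\Delta + 4\alpha^2\Delta = \alpha^2\Delta^2 + 2\alpha^2\Delta + \alpha^2 + 4\Delta - 8\alpha\Delta.
\]
Expanding $R$ gives
\[
R = \alpha^2\Delta^2 + 2\alpha^2\Delta + \alpha^2 + 4\Delta - 8\alpha\Delta,
\]
so $L = R$, which yields the claim.

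There is essentially no obstacle here; the corollary is a routine algebraic identity once both bounds are written side by side with $\delta=1$. The only thing worth flagging is that the conclusion depends on $\alpha \in [0,1)$ only insofar as \eqref{eq::niki_lowerBound_extremalStar} is stated in Proposition \ref{prop::lowerBound_niki} (and the $\alpha = 1$ case trivially gives $\Delta$ on both sides, via $\alpha(\Delta+\delta)/2 + |\alpha(\Delta-\delta)|/2 = \alpha\Delta = \Delta$), so the identity in fact persists on the whole interval $[0,1]$ when $\delta=1$.
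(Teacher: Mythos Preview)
Your proof is correct and follows essentially the same approach as the paper: substitute $\delta=1$ into \eqref{eq::lowerbound}, observe that the linear terms agree, and expand the radicand to recognize it as $\alpha^2(\Delta+1)^2+4\Delta(1-2\alpha)$. The paper carries this out as a single chain of equalities rather than naming $L$ and $R$ separately, but the computation is identical.
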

\begin{proof}
    Taking $\delta = 1$ in~\eqref{eq::lowerbound} we have
    \begin{align*}
        \lambda_1(A_\alpha(G)) &\geq \dfrac{\alpha(\Delta + 1) + \sqrt{\alpha^2(\Delta - 1)^2 + 4(1-\alpha)^2 \Delta}}{2}\\
        & = \dfrac{\alpha(\Delta + 1) + \sqrt{\alpha^2\Delta^2 - 2\Delta\alpha^2 + \alpha^2 + 4\Delta -8\alpha\Delta + 4\alpha^2\Delta}}{2}\\
        & = \dfrac{\alpha(\Delta + 1) + \sqrt{\alpha^2(\Delta +1)^2 + 4\Delta(1-2\alpha)}}{2}
    \end{align*}
\end{proof}

\begin{theorem}
    Let $\alpha \in [0,1]$ and $G$ be a graph with $m \neq 0$ edges, $n \geq 3$ vertices, $\Delta$ and $\delta$ be the maximum and minimum degrees, respectively. Then,
    \begin{equation} \label{eq::lower_bound2}
        \displaystyle \lambda_1(A_\alpha(G)) \geq \alpha \dfrac{(\Delta^2+\delta^2)(n-2) + (2m-\Delta -\delta)^2}{2m(n-2)} + (1-\alpha) \dfrac{8m^3}{n^2(2\delta m + (n-1)(2m - n \delta))}
    \end{equation}
    The equality occurs if $G$ is a regular graph.
\end{theorem}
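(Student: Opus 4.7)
\medskip
\noindent\textbf{Proof proposal.}
The plan is to lower-bound $\lambda_1(A_\alpha(G))$ via a well-chosen Rayleigh quotient and then apply the degree-sequence inequalities already collected in the preliminaries, namely Lemma~\ref{theo::sharp_DAS}, Lemma~\ref{theo::sharp_DAS_upper}, Proposition~\ref{prop::forgotten} and Theorem~\ref{theo::randic}. The key observation is that the bound splits into a piece controlled by $Z_1(G)/2m$ and a piece controlled by $8m^3/(n^2 Z_1(G))$, suggesting a test vector whose quadratic forms naturally produce the ratios $F(G)/Z_1(G)$ and $Z_2(G)/Z_1(G)$.

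\smallskip
Concretely, first I would take the degree vector $d=(d(v_1),\dots,d(v_n))^T$ and use the unit vector $x=d/\sqrt{Z_1(G)}$, so that $\|x\|=1$. A direct computation gives
\begin{equation*}
x^T D(G)x=\frac{\sum_{i=1}^n d(v_i)^3}{Z_1(G)}=\frac{F(G)}{Z_1(G)},\qquad
x^T A(G)x=\frac{1}{Z_1(G)}\sum_{uv\in E(G)}2\,d(u)d(v)=\frac{2Z_2(G)}{Z_1(G)}.
\end{equation*}
Proposition~\ref{prop::rayleigh_alpha} then yields
\begin{equation*}
\lambda_1(A_\alpha(G))\;\geq\;x^T A_\alpha(G) x\;=\;\alpha\,\frac{F(G)}{Z_1(G)}+(1-\alpha)\,\frac{2Z_2(G)}{Z_1(G)}.
\end{equation*}

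\smallskip
Next I would bound each of the two terms on the right separately. For the $\alpha$-term, Proposition~\ref{prop::forgotten} gives $F(G)\geq Z_1(G)^2/(2m)$, hence $F(G)/Z_1(G)\geq Z_1(G)/(2m)$; substituting the lower bound of Lemma~\ref{theo::sharp_DAS},
$Z_1(G)\ge \Delta^2+\delta^2+(2m-\Delta-\delta)^2/(n-2)$,
produces exactly the first summand of \eqref{eq::lower_bound2}. For the $(1-\alpha)$-term, I specialize Theorem~\ref{theo::randic} at $a=1$ (so $R_1(G)=Z_2(G)$) to get $2Z_2(G)/Z_1(G)\geq 8m^3/(n^2 Z_1(G))$; then the upper bound of Lemma~\ref{theo::sharp_DAS_upper}, rewritten as $Z_1(G)\le 2\delta m+(n-1)(2m-n\delta)$, turns the denominator into the exact expression appearing in \eqref{eq::lower_bound2}. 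Combining the two pieces gives the claimed inequality.

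\smallskip
For the equality statement, if $G$ is $r$-regular then $2m=nr$, $Z_1(G)=nr^2$, $F(G)=nr^3$ and $Z_2(G)=mr^2$, so the Rayleigh quotient collapses to $\alpha r+(1-\alpha)r=r=\lambda_1(A_\alpha(G))$ by Lemma~\ref{lemma::eigeneq_RegularGraphs}, and a quick substitution $\Delta=\delta=r$ shows that both summands of \eqref{eq::lower_bound2} equal $r$ as well. The only non-routine step in the whole argument is the initial choice of the test vector together with the correct pairing of directions for the two applications of $Z_1(G)$-estimates (a lower bound in the first term and an upper bound in the second, since $Z_1(G)$ sits in the numerator of one ratio and in the denominator of the other); once that orientation is set, everything else is a direct substitution.
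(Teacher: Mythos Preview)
Your proposal is correct and follows essentially the same route as the paper: the degree vector as test vector in the Rayleigh quotient, then Proposition~\ref{prop::forgotten} and Theorem~\ref{theo::randic} (with $R_1=Z_2$) to reduce to expressions in $Z_1(G)$, and finally Lemmas~\ref{theo::sharp_DAS} and~\ref{theo::sharp_DAS_upper} applied in the two opposite directions you identify. One tiny wording slip: in the regular case the two summands of \eqref{eq::lower_bound2} are $\alpha r$ and $(1-\alpha)r$, not each equal to $r$, but their sum is $r$ as you need.
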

\begin{proof}
    From Proposition \ref{prop::rayleigh_alpha} we know that there exists an eigenvector $x \in \real^n$ associated with $\lambda_1(A_\alpha(G))$ that satisfies $\displaystyle \lambda_1(A_\alpha(G)) = \max_{x \in \real^n} \frac{x^T A_\alpha(G) x}{x^Tx}$. So  for all $y \neq kx$, where  $\; k \in \real$, we have
    \begin{align*}
        \lambda_1(A_\alpha(G)) &\geq  \frac{y^T A_\alpha(G) y}{y^Ty} = \frac{y^T(\alpha D(G) + (1-\alpha)A(G)) y}{y^Ty} \\
        &=\frac{\alpha y^T D(G)y + (1-\alpha)y^TA(G) y}{y^Ty}.
    \end{align*}
    Taking $y = (d(v_1), d(v_2), \ldots, d(v_n)) = (d_1, d_2, \ldots, d_n)$, we have
    \begin{align*}
        \lambda_1(A_\alpha(G)) &\geq \dfrac{\alpha \displaystyle \sum_{i=1}^n d_{i}^3+ (1-\alpha)y^TA(G)) y}{\displaystyle \sum_{i=1}^n d_{i}^2}=\dfrac{\alpha \displaystyle \sum_{i=1}^n d_{i}^3}{\displaystyle \sum_{i=1}^n d_{i}^2} + \dfrac{(1-\alpha) \displaystyle 2\sum_{v_i \sim v_j} d_{i}d_{j}}{\displaystyle \sum_{i=1}^n d_{i}^2} \\
        & = \alpha \dfrac{\displaystyle \sum_{i=1}^n d_{i}^3}{\displaystyle \sum_{i=1}^n d_{i}^2} + (1-\alpha)\dfrac{\displaystyle  2\sum_{v_i \sim v_j} d_{i}d_{j}}{\displaystyle \sum_{i=1}^n d_{i}^2} = \alpha \dfrac{F(G)}{Z_1(G)} + (1-\alpha)\dfrac{2R_1(G)}{Z_1(G)}.
    \end{align*}
    From Proposition \ref{prop::forgotten} and Theorem \ref{theo::randic} follows that
    $$\lambda_1(A_\alpha(G)) \geq \alpha \dfrac{\dfrac{Z_1^2(G)}{2m}}{Z_1(G)} + (1-\alpha)\dfrac{8\dfrac{m^3}{n^2}}{Z_1(G)} = \alpha \dfrac{Z_1(G)}{2m} + (1-\alpha)\dfrac{8m^3}{n^2Z_1(G)}.$$
    Using Lemmas \ref{theo::sharp_DAS} and \ref{theo::sharp_DAS_upper} with some algebraic manipulation the result follows.

    To prove the equality suppose that $G$ is $r$-regular graph. From Lemma \ref{lemma::eigeneq_RegularGraphs} we know that $\lambda_1(A_\alpha(G)) = r$ and moreover that $\Delta = \delta = r$ and $m = \dfrac{nr}{2}$. Then,
    \begin{align*}
         & \alpha \dfrac{n((\Delta + \delta)^2 -4m(\Delta + \delta-m) +(\Delta^2+\delta^2)(n-2))}{2m} + (1-\alpha) \dfrac{8m^3}{n^2(2\delta m - (n-1)(2m - n \delta))} = \\
         &  \dfrac{n((r + r)^2 -4\dfrac{nr}{2}\left(r + r-\dfrac{nr}{2}\right) +(r^2+r^2)(n-2))}{2\dfrac{nr}{2}} + (1-\alpha) \dfrac{8\left(\dfrac{nr}{2}\right)^3}{n^2\left(2r\dfrac{nr}{2} - (n-1)\left(2\dfrac{nr}{2} - nr\right)\right)} = \\
         & \alpha r - \dfrac{2r\alpha}{n} + \dfrac{2r^2\alpha}{nr}-\alpha r + r = r
    \end{align*}
\end{proof}

\begin{theorem} \label{theo::JCL_upperbound}
     Let $G$ be a graph with $n$ vertices, $m \neq 0$ edges, $\Delta(G)$ its maximum degree, $\delta(G)$ its minimum degree and $\alpha \in [0,1]$. Then, 
    \begin{equation} \label{eq::UpperBound_JCL}
        \lambda_1(A_\alpha(G)) \leq \sqrt{\alpha \Delta^2 + (1 - \alpha)(\Delta(\delta - 1) -\delta(n-1) + 2m) }.
    \end{equation}
    Moreover, equality holds if and only if $G$ is regular.
\end{theorem}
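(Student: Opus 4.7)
The plan is to mirror the proof of Proposition \ref{prop::lower_bound_min_and_max}: apply Lemma \ref{lemma::rowsums} (extended to the non-negative symmetric matrix $A_\alpha(G)$) with the polynomial $P(x) = x^2$, which yields
\[
\lambda_1^2(A_\alpha(G)) \;\leq\; \max_{v_i \in V}\, S_{v_i}(A_\alpha^2(G)) \;=\; \max_{v_i \in V}\Bigl[\alpha\, d^2(v_i) + (1-\alpha)\!\sum_{v_j \sim v_i} d(v_j)\Bigr],
\]
using the row-sum identity already recorded in the proof of Proposition \ref{prop::lower_bound_min_and_max}. Thus the whole task reduces to proving the \emph{combinatorial} inequality
\[
\alpha\, d^2(v_i) + (1-\alpha)\!\sum_{v_j \sim v_i} d(v_j) \;\leq\; \alpha\,\Delta^2 + (1-\alpha)\bigl[\Delta(\delta-1) - \delta(n-1) + 2m\bigr]
\]
for every vertex $v_i$.

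To establish this, I would split the right-hand side by linearity in $\alpha$. The $\alpha$-piece is immediate from $d(v_i)\le \Delta$. For the $(1-\alpha)$-piece, the key step is to rewrite
\[
\sum_{v_j \sim v_i} d(v_j) \;=\; 2m - d(v_i) - \!\!\sum_{\substack{v_j \not\sim v_i \\ j \neq i}} d(v_j),
\]
and then to use that each of the $n-1-d(v_i)$ non-neighbors of $v_i$ has degree at least $\delta$. This gives
\[
\sum_{v_j \sim v_i} d(v_j) \;\leq\; 2m - d(v_i) - (n-1-d(v_i))\,\delta \;=\; 2m - (n-1)\delta + d(v_i)(\delta - 1).
\]
Since $\delta - 1 \geq 0$ (under the standing assumption that $G$ has no isolated vertex), the right-hand side is increasing in $d(v_i)$, so substituting $d(v_i) \leq \Delta$ produces exactly the claimed bound. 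Taking a square root closes out the upper bound.

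For the equality characterization, the \emph{sufficiency} direction is a direct computation: if $G$ is $r$-regular, then $\Delta = \delta = r$ and $2m = nr$, and the bracketed expression collapses to $r^2$, so the bound becomes $\sqrt{\alpha r^2 + (1-\alpha)r^2} = r = \lambda_1(A_\alpha(G))$ by Lemma \ref{lemma::eigeneq_RegularGraphs}. The \emph{necessity} direction is the main obstacle, because equality must propagate through two successive inequalities. Equality in Lemma \ref{lemma::rowsums} forces every row sum of $A_\alpha^2(G)$ to equal $\lambda_1^2(A_\alpha(G))$, i.e.\ $\alpha d^2(v_i) + (1-\alpha)\sum_{v_j\sim v_i} d(v_j)$ is independent of $v_i$; simultaneously, the combinatorial inequality above must be tight at every vertex, which forces $d(v_i) = \Delta$ whenever $\delta < 1$\ldots. actually forces $d(v_i) = \Delta$ for all $v_i$ attaining the maximum and also forces every non-neighbor of such a vertex to have degree exactly $\delta$. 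A short case analysis (splitting on whether $\delta = \Delta$ or $\delta < \Delta$, and using that the common value of the row sums must match the extremal value on the right) then yields $\delta = \Delta$, i.e.\ $G$ is regular. This last step is the delicate part and is where I would spend most of the care.
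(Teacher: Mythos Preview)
Your approach is essentially identical to the paper's: apply Lemma~\ref{lemma::rowsums} to $A_\alpha^2(G)$, use the row-sum identity $S_{v_i}(A_\alpha^2(G)) = \alpha\, d^2(v_i) + (1-\alpha)\sum_{v_j\sim v_i} d(v_j)$, rewrite the neighbor-degree sum as $2m - d(v_i) - \sum_{j\neq i,\, v_j\nsim v_i} d(v_j)$, bound the non-neighbor sum below by $(n-1-d(v_i))\delta$, and finally replace $d(v_i)$ by $\Delta$. The only cosmetic difference is that you treat the $\alpha$-piece and the $(1-\alpha)$-piece separately, whereas the paper keeps them together in a single chain; the content is the same (and your parenthetical remark that this step uses $\delta\ge 1$ is a hypothesis the paper also uses implicitly).

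You are overestimating the work needed for the necessity direction. The paper dispatches it in one line: equality in Lemma~\ref{lemma::rowsums} forces all row sums of $A_\alpha^2(G)$ to coincide, hence to equal the right-hand side, so the chain of inequalities is tight at \emph{every} vertex; in particular the last replacement $d(v_i)\to\Delta$ must be an equality for every $i$, giving $d(v_i)=\Delta$ for all $i$ directly. No case analysis on $\delta$ versus $\Delta$ is needed.
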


\begin{proof}
    Let $S_v(M)$ the row sums of $M$ corresponding to each vertex $v$. Since $A_\alpha(G) = \alpha D(G) + (1-\alpha)A(G)$, we have that $S_v(A_\alpha(G)) = d(v)$ and $\displaystyle S_v(A(G)D(G)) = S_v(A^2(G)) = \sum_{u \sim v}d(u) = 2m-d(v)-\sum_{\substack{u \nsim v\\ u\neq v}}d(u)$. Then,
    \begin{align*}
        S_v(A^2_\alpha(G)) &= S_v\left((\alpha D(G) + (1-\alpha)A(G))^2\right) = S_v\left(\alpha D(G)A_\alpha(G) + \alpha(1-\alpha)A(G)D(G) + (1-\alpha)^2A^2(G)\right)\\
        & = \alpha S_v\left(D(G)A_\alpha(G)\right) + \alpha(1-\alpha)S_v\left(A(G)D(G)\right) + (1-\alpha)^2S_v\left(A^2(G)\right)\\
        & = \alpha d^2(v) + \alpha(1-\alpha)\left(2m - d(v) - \sum_{\substack{u \nsim v\\ u\neq v}}d(u)\right) + (1-\alpha)^2\left(2m - d(v) - \sum_{\substack{u \nsim v\\ u\neq v}}d(u)\right)\\
        & = \alpha d^2(v) + (1-\alpha)\left(2m - d(v) - \sum_{\substack{u \nsim v\\ u\neq v}}d(u)\right)\\
        & \leq \alpha d^2(v) + (1-\alpha)(2m - d(v) - (n - d(v) - 1)\delta)\\
        & = \alpha d^2(v) + (1-\alpha)(2m +(\delta - 1)d(v) - \delta(n - 1))\\
        & \leq \alpha \Delta^2 + (1-\alpha)(2m +(\delta - 1)\Delta - \delta(n - 1))
    \end{align*}
   Hence, for every $v \in V$, we have
    \begin{equation*}
        S_v(A^2_\alpha(G)) \leq \alpha \Delta^2 + (1-\alpha)(2m +(\delta - 1)\Delta - \delta(n - 1)).
    \end{equation*}
    By Lemma \ref{lemma::rowsums},
    \begin{equation*}
        \lambda_1^2(A_\alpha(G)) \leq \alpha \Delta^2 + (1-\alpha)(2m +(\delta - 1)\Delta - \delta(n - 1))
    \end{equation*}
    Solving the quadratic inequality, we obtain the result.

    Suppose initially that $G$ is a $r$-regular graph, so
    \begin{align*}
        \sqrt{\alpha \Delta^2 + (1 - \alpha)(\Delta(\delta - 1) -\delta(n-1) + 2m) } &= \sqrt{\alpha r^2 + (1 - \alpha)\left(r(r - 1) -r(n-1) + 2\dfrac{nr}{2}\right)} \\
        &=\sqrt{\alpha r^2 + (1 - \alpha)\left(r^2 - r -rn +r + nr\right)} = r
    \end{align*}
    Now, suppose that holds the equality in (\ref{eq::UpperBound_JCL}). This implies that all inequalities present in its proof are equalities. So, $d(v) = \Delta$ for every $v \in V$, which implies that $G$ is a regular graph.
\end{proof}

\subsection{Bounds Comparison}

In this subsection, tables and graphs are presented to compare the proposed bounds by the authors in Subsection \ref{subsection::A_alpha-Graphs} among themselves and also to compare these bounds with others found in the literature.

\subsubsection{Lower bounds proposed}
We provide tables illustrating the performance of the two bounds introduced in the preceding section alongside the precise value of $\lambda_1(A_\alpha(G))$. Notice that these bounds have different extremal graphs. Three families of graphs were used, two from special trees, the path and the binomial tree, and the third from the pineapple graph.

Starting with the path $P_n$, in Tab.~\ref{tab::lower_comparison_path} we notice that the values obtained using bound~\eqref{eq::lowerbound} are always better than those obtained by bound~\eqref{eq::lower_bound2}, starting with a notable difference which reduces as $\alpha$ increases, and for values of $\alpha$ close to $1$, it ceases to be significant.

\begin{table}[H]
    \centering
    \resizebox{\textwidth}{!}{
        \begin{tabular}{c|c|c|c|c|c|c|c|c|c|c|c}
        \hline
        \multicolumn{12}{c}{Lower Bound Comparison for $\mathbf{G \cong P_n}$} \\
        \hline
        \multicolumn{2}{c|}{$\alpha$} & 0.0 & 0.1 & 0.2 & 0.3 & 0.4 & 0.5 & 0.6 & 0.7 & 0.8 & 0.9 \\
        \hline
        \multirow{3}{*}{$n = 100$} & $\lambda_1(A_\alpha)$ & 1.99903 & 1.99913 & 1.99922 & 1.99931 & 1.99941 & 1.99951 & 1.9996 & 1.9997 & 1.9998 & 1.9999\\
        & (\ref{eq::lowerbound}) & 1.41421 & 1.42377 & 1.43578 & 1.45125 & 1.47178 & 1.5 & 1.54031 & 1.6 & 1.6899 & 1.8217\\
        & (\ref{eq::lower_bound2}) & 0.07841 & 0.26906 & 0.45971 & 0.65036 & 0.841 & 1.03165 & 1.2223 & 1.41295 & 1.6036 & 1.79425\\
        \hline
        \multirow{3}{*}{$n = 500$} & $\lambda_1(A_\alpha)$ & 1.99996 & 1.99996 & 1.99997 & 1.99997 & 1.99998 & 1.99998 & 1.99998 & 1.99999 & 1.99999 & 2.0\\
        & (\ref{eq::lowerbound}) & 1.41421 & 1.42377 & 1.43578 & 1.45125 & 1.47178 & 1.5 & 1.54031 & 1.6 & 1.6899 & 1.8217\\
        & (\ref{eq::lower_bound2}) & 0.01594 & 0.21404 & 0.41215 & 0.61025 & 0.80836 & 1.00647 & 1.20457 & 1.40268 & 1.60078 & 1.79889\\
        \hline
        \multirow{3}{*}{$n = 1000$} & $\lambda_1(A_\alpha)$ & 1.99999 & 1.99999 & 1.99999 & 1.99999 & 1.99999 & 2.0 & 2.0 & 2.0 & 2.0 & 2.0\\
        & (\ref{eq::lowerbound}) & 1.41421 & 1.42377 & 1.43578 & 1.45125 & 1.47178 & 1.5 & 1.54031 & 1.6 & 1.6899 & 1.8217\\
        & (\ref{eq::lower_bound2}) & 0.00798 & 0.20704 & 0.40609 & 0.60514 & 0.80419 & 1.00324 & 1.20229 & 1.40134 & 1.6004 & 1.79945\\
        \end{tabular}
    }
    \caption{Lower bounds comparison for a path graph increasing the number of nodes.}
    \label{tab::lower_comparison_path}
\end{table}

In Tab.\ref{tab::lower_comparison_BT}, the comparative study was carried out using the binomial tree, $BT_k$. The values obtained by \eqref{eq::lowerbound}  are also better than those obtained by \eqref{eq::lower_bound2}, and in this case, regardless of the value of $\alpha$, the values obtained by the first bound are much better than those obtained by the second.
\begin{table}[H]
    \centering
    \resizebox{\textwidth}{!}{
        \begin{tabular}{c|c|c|c|c|c|c|c|c|c|c|c}
        \hline
        \multicolumn{12}{c}{Lower Bound Comparison for $\mathbf{G \cong BT_k}$} \\
        \hline
        \multicolumn{2}{c|}{$\alpha$} & 0.0 & 0.1 & 0.2 & 0.3 & 0.4 & 0.5 & 0.6 & 0.7 & 0.8 & 0.9 \\
        \hline
        \multirow{3}{4em}{\centering $k = 7$\\ $(n = 128)$} & $\lambda_1(A_\alpha)$ & 3.45291 & 3.62629 & 3.82894 & 4.06529 & 4.3399 & 4.65723 & 5.02142 & 5.436 & 5.9036 & 6.42533\\
        & (\ref{eq::lowerbound}) & 2.64575 & 2.8 & 3.0 & 3.25913 & 3.58997 & 4.0 & 4.48806 & 5.04499 & 5.65764 & 6.31293\\
        & (\ref{eq::lower_bound2}) & 0.06153 & 0.26415 & 0.46677 & 0.66939 & 0.87201 & 1.07463 & 1.27725 & 1.47988 & 1.6825 & 1.88512\\
        \hline
        \multirow{3}{4em}{\centering $k = 9$ \\ $(n = 512)$} & $\lambda_1(A_\alpha)$ & 3.97329 & 4.23334 & 4.53693 & 4.88921 & 5.29487 & 5.75785 & 6.28124 & 6.8671 & 7.51632 & 8.22825\\
        & (\ref{eq::lowerbound}) & 3.0 & 3.22947 & 3.52982 & 3.91868 & 4.40832 & 5.0 & 5.68328 & 6.44109 & 7.25576 & 8.11248\\
        & (\ref{eq::lower_bound2}) & 0.01556 & 0.21852 & 0.42148 & 0.62443 & 0.82739 & 1.03035 & 1.23331 & 1.43626 & 1.63922 & 1.84218\\
        \hline
        \multirow{3}{5em}{\centering $k = 10$ \\ {{$(n = 1024)$}}} & $\lambda_1(A_\alpha)$ & 4.21077 & 4.51674 & 4.87387 & 5.28746 & 5.76201 & 6.30095 & 6.90661 & 7.58018 & 8.32164 & 9.12947\\
        & (\ref{eq::lowerbound}) & 3.16228 & 3.43141 & 3.78514 & 4.24278 & 4.81534 & 5.5 & 6.28161 & 7.13976 & 8.05513 & 9.01233\\
        & (\ref{eq::lower_bound2}) & 0.0078 & 0.21 & 0.41221 & 0.61441 & 0.81662 & 1.01883 & 1.22103 & 1.42324 & 1.62544 & 1.82765\\
        \end{tabular}
    }
    \caption{Lower bounds comparison for a binomial tree graph increasing the number of nodes.}
    \label{tab::lower_comparison_BT}
\end{table}

The question that arose was: would bound \eqref{eq::lowerbound} always produce better approximations than bound \eqref{eq::lower_bound2}? To answer this question, we sought to use a family of graphs that were structurally very different from the tree, having chosen the pineapple graph. As we can see in Tab.\ref{tab::lower_comparison_pineapple}   the situation is reversed. Here the values obtained by \eqref{eq::lowerbound}  are always much worse than those provided by \eqref{eq::lower_bound2}, which is more evident for values of $\alpha$ close to 0.
\begin{table}[H]
    \centering
    \resizebox{\textwidth}{!}{
        \begin{tabular}{c|c|c|c|c|c|c|c|c|c|c|c}
        \hline
        \multicolumn{12}{c}{Lower Bound Comparison for $\mathbf{G \cong K_p^1}$} \\
        \hline
        \multicolumn{2}{c|}{$\alpha$} & 0.0 & 0.1 & 0.2 & 0.3 & 0.4 & 0.5 & 0.6 & 0.7 & 0.8 & 0.9 \\
        \hline
        \multirow{3}{*}{$p = 99$} & $\lambda_1(A_\alpha)$ & 98.0001 & 98.00109 & 98.00209 & 98.00309 & 98.00411 & 98.00513 & 98.00617 & 98.00725 & 98.00842 & 98.00999\\
        & (\ref{eq::lowerbound}) & 9.94987 & 15.20784 & 22.62537 & 31.26653 & 40.48902 & 50.0 & 59.66816 & 69.42964 & 79.25048 & 89.11122\\
        & (\ref{eq::lower_bound2}) &95.13821 & 95.42441 & 95.71061 & 95.99681 & 96.28301 & 96.56921 & 96.85541 & 97.14161 & 97.42781 & 97.71401\\
        \hline
        \multirow{3}{*}{$p = 499$}  & $\lambda_1(A_\alpha)$& 498.0 & 498.0002 & 498.0004 & 498.0006 & 498.0008 & 498.00101 & 498.00121 & 498.00141 & 498.00162 & 498.00184\\
        & (\ref{eq::lowerbound}) & 22.33831 & 57.00312 & 102.90936 & 151.31907 & 200.49776 & 250.0 & 299.66696 & 349.42878 & 399.25009 & 449.11113\\
        & (\ref{eq::lower_bound2}) & 495.02793 & 495.32514 & 495.62234 & 495.91955 & 496.21676 & 496.51397 & 496.81118 & 497.10838 & 497.40559 & 497.7028\\
        \hline
        \multirow{3}{*}{$p = 999$} & $\lambda_1(A_\alpha)$ & 998.0 & 998.0001 & 998.0002 & 998.0003 & 998.0004 & 998.0005 & 998.0006 & 998.0007 & 998.0008 & 998.00091\\
        & (\ref{eq::lowerbound}) & 31.60696 & 107.43866 & 202.95339 & 301.32614 & 400.49888 & 500.0 & 599.66681 & 699.42868 & 799.25005 & 899.11112\\
        & (\ref{eq::lower_bound2}) & 995.01398 & 995.31258 & 995.61119 & 995.90979 & 996.20839 & 996.50699 & 996.80559 & 997.1042 & 997.4028 & 997.7014\\
        \end{tabular}
    }
    \caption{Lower bounds comparison for a pineapple graph $K_p^1$ increasing the number of nodes.}
    \label{tab::lower_comparison_pineapple}
\end{table}

Therefore, taking into account the previous observations, we can conclude that the bounds \eqref{eq::lowerbound} and \eqref{eq::lower_bound2} are incomparable.

\subsubsection{Lower bounds proposed and Nikiforov's bounds.}

Now, we move on to comparing our bounds with other bounds present in literature. Among the variety of lower bounds that exist, we selected those that are due to Nikiforov and that have the same extremal graphs as ours. The choice of these bounds is due to the fact that they were the first to appear in the literature and also as a simple tribute to the importance of Nikiforov's contributions to Spectral Graph Theory. Thus, having the star as an extremal graph, we compare bound~\eqref{eq::lowerbound} with bound~\eqref{eq::niki_lowerBound_extremalStar} and, having the regular graph as an extremal, we compare bounds~\eqref{eq::lower_bound2} and~\eqref{eq::LowerUpperBound_Wang}. 

The results are presented through graphs in the variable $\alpha$. Each graph shows the difference between the value obtained by each of the bounds and $\lambda_1(A_{\alpha}(W_n))$, which makes it easier to observe the aspects to be highlighted.
  
Firstly we note that for regular graphs the bound \eqref{eq::lowerbound} is better than the bound \eqref{eq::niki_lowerBound_extremalStar}, which is proved in the next Proposition.

\begin{proposition} \label{prop::LowerBounds_comparison_for_regular}
    If $G$ is a $r$-regular graph and $\alpha \in [0,1]$, then the lower bound in \eqref{eq::lowerbound} is better than the lower bound in \eqref{eq::niki_lowerBound_extremalStar}.
\end{proposition}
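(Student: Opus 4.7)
The plan is to specialize both lower bounds to the regular case and then show that the resulting numerical inequality reduces, after squaring, to an obviously non-negative expression.

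First I would substitute $\Delta = \delta = r$ into the bound \eqref{eq::lowerbound}. The term $(\Delta-\delta)^2$ vanishes, leaving
\[
\dfrac{\alpha(\Delta+\delta) + \sqrt{\alpha^{2}(\Delta-\delta)^{2}+4(1-\alpha)^{2}\Delta}}{2}
= \alpha r + (1-\alpha)\sqrt{r}.
\]
Next, substituting $\Delta = r$ into \eqref{eq::niki_lowerBound_extremalStar} gives
\[
\dfrac{\alpha(r+1) + \sqrt{\alpha^{2}(r+1)^{2}+4r(1-2\alpha)}}{2}.
\]
So the claim becomes the purely algebraic inequality
\[
2\alpha r + 2(1-\alpha)\sqrt{r}\;\geq\;\alpha(r+1) + \sqrt{\alpha^{2}(r+1)^{2}+4r(1-2\alpha)},
\]
which I rearrange as $L \geq R$, where $L := \alpha(r-1) + 2(1-\alpha)\sqrt{r}$ and $R := \sqrt{\alpha^{2}(r+1)^{2}+4r(1-2\alpha)}$.

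The key observation is that both sides are non-negative: $L \geq 0$ because $\alpha, 1-\alpha \geq 0$ and $r \geq 1$, and $R \geq 0$ as a square root (one can also note the radicand is non-negative, since viewing $\alpha^{2}(r+1)^{2}-8\alpha r + 4r$ as a quadratic in $\alpha$ yields discriminant $-16r(r-1)^{2}\leq 0$). Hence it suffices to compare squares. A direct expansion gives
\[
L^{2}-R^{2} = \alpha^{2}\bigl[(r-1)^{2}-(r+1)^{2}\bigr] + 4\alpha(1-\alpha)(r-1)\sqrt{r} + 4r\bigl[(1-\alpha)^{2}-1+2\alpha\bigr].
\]
Using $(r-1)^{2}-(r+1)^{2} = -4r$ and $(1-\alpha)^{2}-1+2\alpha = \alpha^{2}$, the first and third terms cancel, leaving
\[
L^{2}-R^{2} = 4\alpha(1-\alpha)(r-1)\sqrt{r} \geq 0
\]
for $\alpha\in[0,1]$ and $r\geq 1$, with equality precisely when $\alpha\in\{0,1\}$ or $r=1$. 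This yields $L\geq R$, which is the desired comparison.

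The main obstacle is really only bookkeeping: one must confirm that $L\geq 0$ (to justify squaring) and that the radicand defining $R$ is non-negative (so that $R$ is a real number to begin with). Once these are in hand the algebraic identity $L^{2}-R^{2} = 4\alpha(1-\alpha)(r-1)\sqrt{r}$ does all the work.
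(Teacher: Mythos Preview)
Your argument is correct and is essentially the same as the paper's: both reduce the comparison to the identity $L^{2}-R^{2}=4\alpha(1-\alpha)(r-1)\sqrt{r}$ (the paper writes the key term as $(r^{3/2}-r^{1/2})\alpha(1-\alpha)$ and argues backwards from it, whereas you compute forward). Your write-up is slightly more careful in that you explicitly check $L\geq 0$ before taking square roots, a step the paper leaves implicit.
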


\begin{proof}
    Since $r \in \nat^{*}$ and $\alpha \in [0,1]$ we have
    \begin{align*}
         &\left(r^{\frac{3}{2}} - r^{\frac{1}{2}}\right)  \alpha(1-\alpha)\geq 0\\
         &-\alpha^2r^{\frac{3}{2}} + \alpha^2r^{\frac{1}{2}} + \alpha r^{\frac{3}{2}} -\alpha r^{\frac{1}{2}}  \geq 0 \\
         &-\alpha^2r^{\frac{3}{2}} + \alpha^2r^{\frac{1}{2}} + \alpha r^{\frac{3}{2}} -\alpha r^{\frac{1}{2}} + \dfrac{\alpha^2r^2}{4} + \dfrac{\alpha^2r}{2} + \dfrac{\alpha^2}{4} - 2r\alpha + r - \dfrac{\alpha^2r^2}{4} - \dfrac{\alpha^2r}{2} - \dfrac{\alpha^2}{4} + 2r\alpha - r  \geq 0 \\
         &-\alpha^2r^{\frac{3}{2}} + \alpha^2r^{\frac{1}{2}} + \alpha r^{\frac{3}{2}} -\alpha r^{\frac{1}{2}} + \dfrac{\alpha^2r^2}{4} + \dfrac{\alpha^2r}{2} + \dfrac{\alpha^2}{4} - 2r\alpha + r \geq + \dfrac{\alpha^2r^2}{4} + \dfrac{\alpha^2r}{2} + \dfrac{\alpha^2}{4} - 2r\alpha + r \\
         &\left( \dfrac{2r\alpha + 2(1-\alpha)r^{\frac{1}{2}} - \alpha(r+1)}{2}\right)^2 \geq \left( \dfrac{\sqrt{\alpha^2(r+1)^2 + 4r(1-2\alpha)}}{2}\right)^2 
    \end{align*}
    So,
    \begin{equation*}
        \dfrac{2r\alpha + 2(1-\alpha)r^{\frac{1}{2}} - \alpha(r+1)}{2} \geq \dfrac{\sqrt{\alpha^2(r+1)^2 + 4r(1-2\alpha)}}{2}
    \end{equation*}
    and then the result follows.
\end{proof}

For the graph wheel $(W_n)$, which is not a regular graph, Figure \ref{fig::lower_wheel} shows that the bounds \eqref{eq::lowerbound} and \eqref{eq::niki_lowerBound_extremalStar} have a similar behavior, although bound \eqref{eq::lowerbound} is slightly better. Note that, the greater the number of vertices, the more similar the values provided by the two bounds.

\begin{figure}[H]
\centering
    \includegraphics[width=0.8\textwidth]{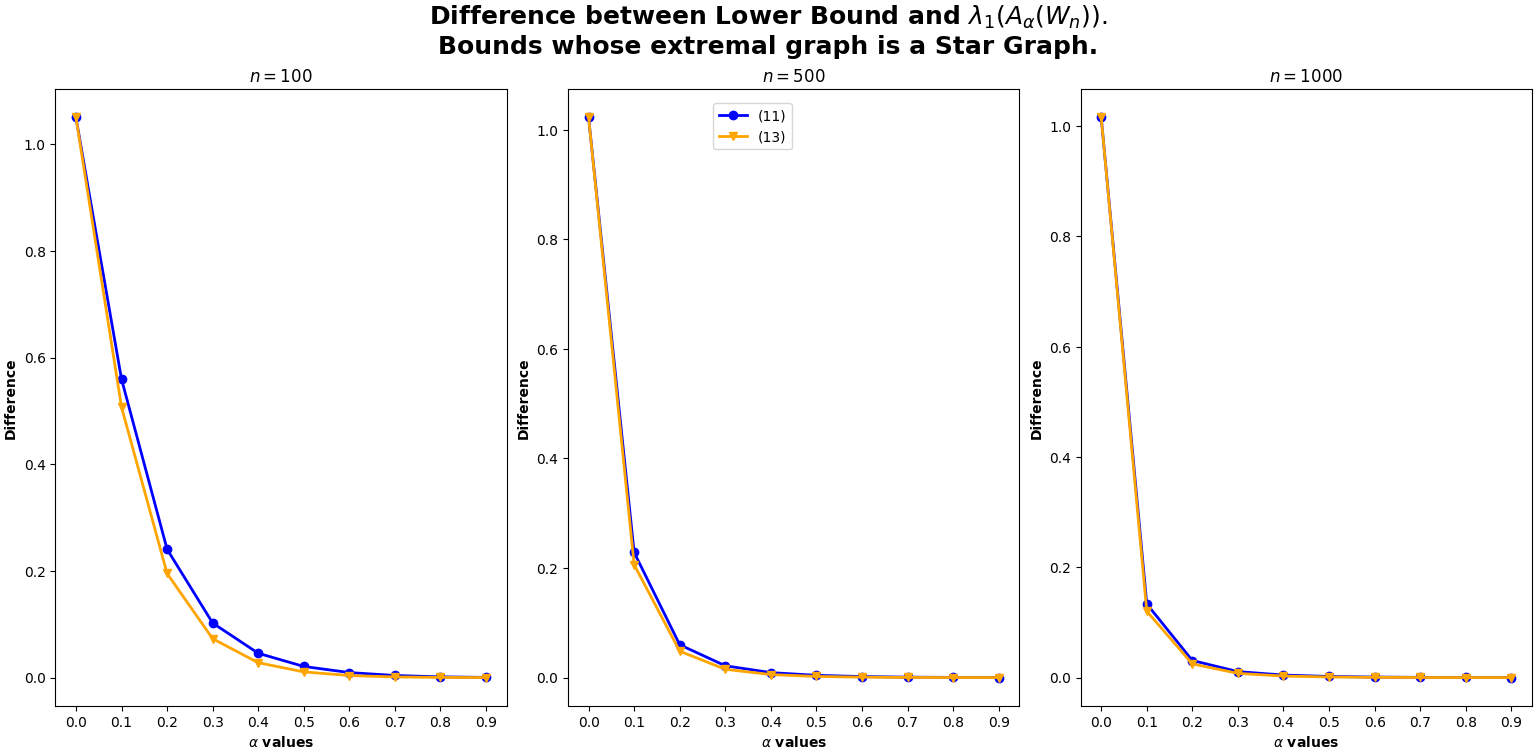}
    \caption{Comparing lower bound for $W_n$.}
    \label{fig::lower_wheel}
\end{figure}

From Corollary~\ref{cor::LowerBoundEquality}, Proposition~\ref{prop::LowerBounds_comparison_for_regular} and after some computational tests, the following question arises: is the bound \eqref{eq::lowerbound} always better than the bound \eqref{eq::niki_lowerBound_extremalStar}? It is worth noting that even in light of the results, this question is still under discussion.

Now, we see what happens with the bounds  \eqref{eq::LowerUpperBound_Wang} and \eqref{eq::lower_bound2}. If we consider $G \simeq K_{1,n-1}$ or $G \simeq W_n$, we can observe from Figures \ref{fig::lower_star} and \ref{fig::lower_wheel1}, respectively,  that for some values of $\alpha$,  Nikiforov's bound is better than our bound, and that for the other values of $\alpha$, the opposite happens. Therefore, these bounds are incomparable.

\begin{figure}[H]
\centering
    \includegraphics[width=0.8\textwidth]{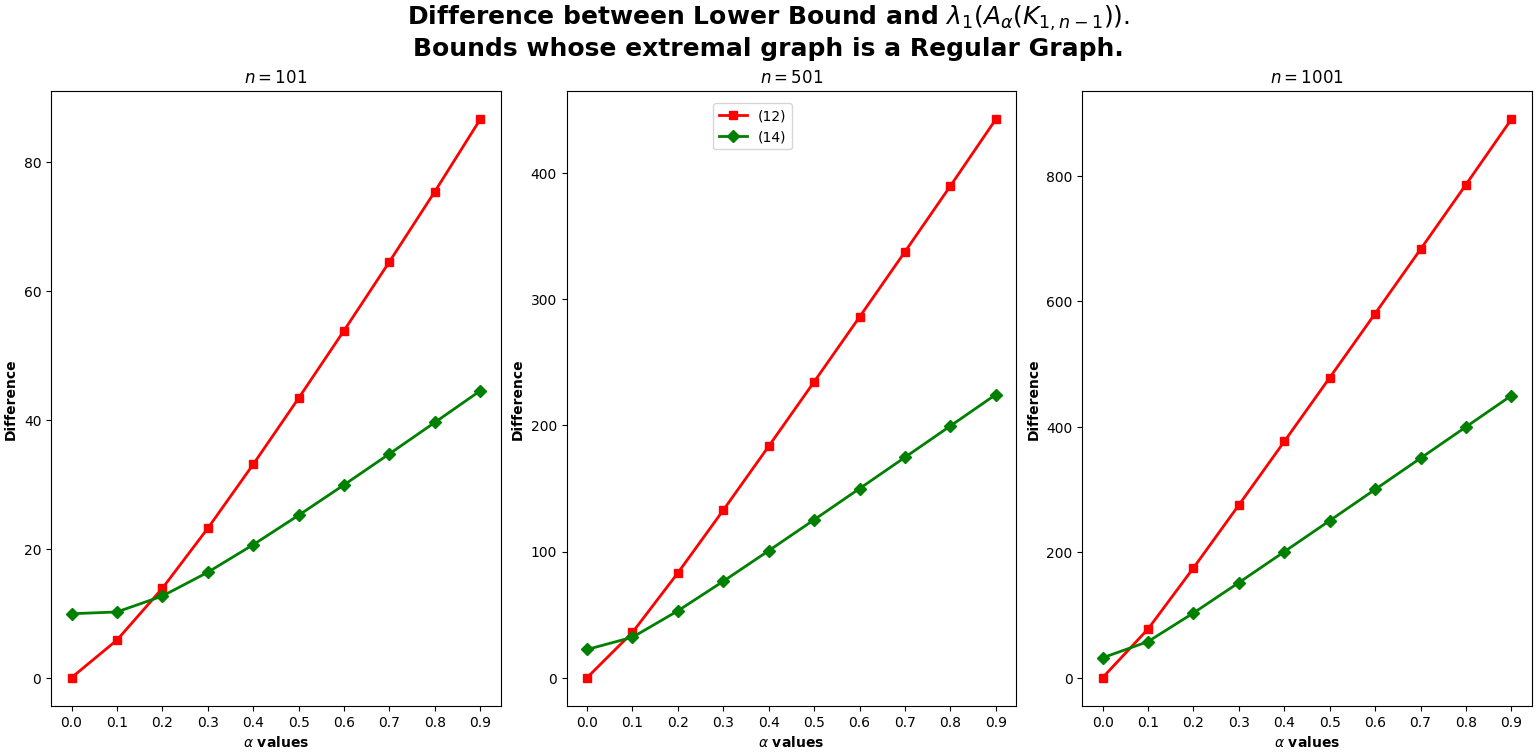}
    \caption{Comparing lower bound for $K_{1,n-1}$.}
    \label{fig::lower_star}
\end{figure}

\begin{figure}[H]
\centering
    \includegraphics[width=0.8\textwidth]{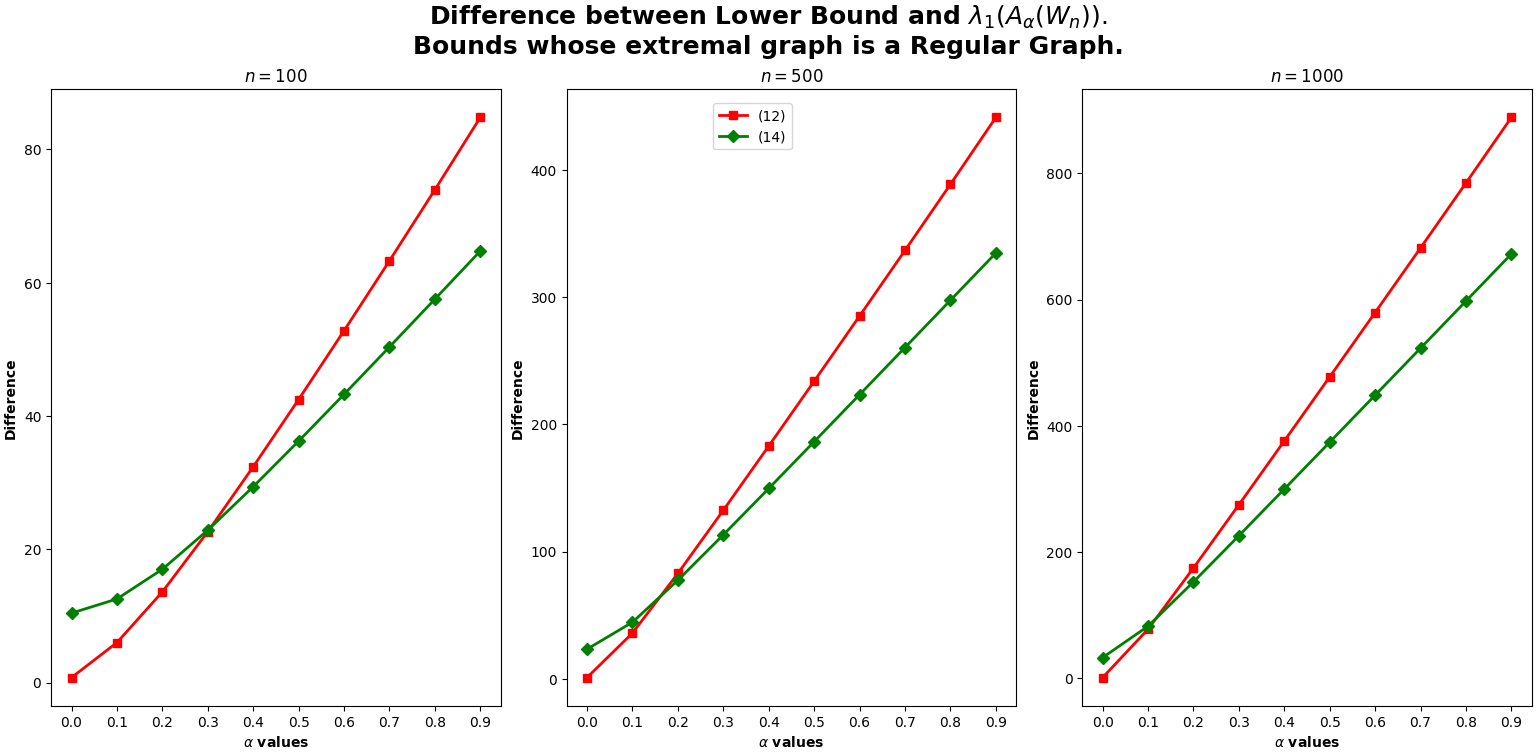}
    \caption{Comparing lower bound for $W_n$.}
    \label{fig::lower_wheel1}
\end{figure}

\subsubsection{Upper bounds}

Now, utilize the upper bounds, we compare the bounds introduced in \eqref{eq::LowerUpperBound_Wang} and \eqref{eq::UpperBound_JCL}. The Proposition~\ref{prop::UpperBound_ineq} shows that the bound \eqref{eq::LowerUpperBound_Wang} is always lower or equal than the bound \eqref{eq::UpperBound_JCL}. 

\begin{proposition} \label{prop::UpperBound_ineq}
    Let $G$ be a graph with $n$ vertices, $m \neq 0$ edges, $\Delta$ its maximum degree, $\delta$ its minimum degree and $\alpha \in [0,1]$. Then, 
    \begin{equation} \label{eq::UpperBound_ineq}
        \max_{v_i \in V}\left\{ \sqrt{\alpha d^2(v_i) + (1-\alpha)\sum_{v_j \sim v_i}d(v_j)} \right\} \leq \sqrt{\alpha \Delta^2 + (1 - \alpha)(\Delta(\delta - 1) -\delta(n-1) + 2m) }
    \end{equation}
    Moreover, if equality holds then the maximum in the left side of \eqref{eq::UpperBound_ineq} is achieved when 
    $d(v_i) =  \Delta$. 
\end{proposition}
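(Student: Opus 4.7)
The plan is to show that the expression under the square root on the left side of \eqref{eq::UpperBound_ineq} is bounded above, vertex by vertex, by the expression under the square root on the right side. Since both sides are non-negative, taking the maximum over $v_i$ and then the square root will preserve the inequality. The key observation is that this pointwise bound has already essentially been carried out inside the proof of Theorem~\ref{theo::JCL_upperbound}, so the argument is to extract and reuse that chain of inequalities.

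Concretely, I would fix an arbitrary vertex $v_i \in V$ and estimate $\alpha d^2(v_i) + (1-\alpha)\sum_{v_j \sim v_i} d(v_j)$ as follows. For the first term, $d(v_i) \leq \Delta$ immediately gives $\alpha d^2(v_i) \leq \alpha \Delta^2$. For the second term, I would rewrite
\begin{equation*}
\sum_{v_j \sim v_i} d(v_j) \;=\; 2m - d(v_i) - \sum_{\substack{u \nsim v_i \\ u \neq v_i}} d(u),
\end{equation*}
and then use the trivial bound $d(u) \geq \delta$ applied to the $n - d(v_i) - 1$ non-neighbors of $v_i$, obtaining
\begin{equation*}
\sum_{v_j \sim v_i} d(v_j) \;\leq\; 2m - d(v_i) - (n - d(v_i) - 1)\delta \;=\; 2m + (\delta-1)d(v_i) - \delta(n-1).
\end{equation*}
Because $\delta - 1 \geq 0$ (the $\delta = 0$ case is excluded via $m \neq 0$ only if $G$ is connected, but even when $\delta = 0$ the coefficient $\delta - 1 = -1$ still permits the bound after using $d(v_i) \leq \Delta$ provided we are careful; in the standard case $\delta \geq 1$ one simply substitutes $d(v_i) \leq \Delta$), this yields
\begin{equation*}
\alpha d^2(v_i) + (1-\alpha)\sum_{v_j \sim v_i} d(v_j) \;\leq\; \alpha \Delta^2 + (1-\alpha)\bigl(\Delta(\delta-1) - \delta(n-1) + 2m\bigr).
\end{equation*}
Taking the maximum over $v_i \in V$ on the left side and applying the square root to both sides gives \eqref{eq::UpperBound_ineq}.

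For the equality statement, I would trace which inequalities were used and identify what it takes for them to be tight. Equality in the final bound forces equality, for the maximizing vertex $v_i$, in both $\alpha d^2(v_i) \leq \alpha \Delta^2$ and in $(\delta-1)d(v_i) \leq (\delta-1)\Delta$. The first equality (assuming $\alpha \neq 0$) requires $d(v_i) = \Delta$; when $\alpha = 0$ the argument is handled by the second term, where equality in $\sum_{u \nsim v_i, u \neq v_i} d(u) \geq (n-d(v_i)-1)\delta$ combined with tightness of the remaining substitution still forces the maximizing vertex to satisfy $d(v_i) = \Delta$. Thus equality entails that the maximum of the left-hand side of \eqref{eq::UpperBound_ineq} is attained at some $v_i$ with $d(v_i) = \Delta$, which is precisely the claim.

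The main obstacle is mostly bookkeeping: the chain of inequalities is straightforward, but one has to be careful with the equality analysis, particularly with the sign of $\delta - 1$ and the corner cases $\alpha \in \{0,1\}$ and $\delta \in \{0,1\}$, to make sure the ``$d(v_i) = \Delta$'' conclusion is correctly isolated from the other possible sources of slack (such as having some non-neighbor $u$ of $v_i$ with $d(u) > \delta$, which is compatible with equality provided $d(v_i) = \Delta$ and the overall right-hand side still matches).
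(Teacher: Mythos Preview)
Your inequality argument is the paper's: both rewrite $\sum_{v_j\sim v_i} d(v_j)=2m-d(v_i)-\sum_{u\nsim v_i,\,u\neq v_i}d(u)$, bound the non-neighbour sum below by $(n-d(v_i)-1)\delta$, and then replace $d(v_i)$ by $\Delta$. The paper packages this as the single nonnegative expression $\alpha(\Delta^2-d_i^2)+(1-\alpha)(\delta-1)(\Delta-d_i)$, but the content is identical (and both tacitly assume $\delta\ge 1$).

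For the equality clause the two routes diverge. The paper does \emph{not} trace tightness at a fixed $\alpha$; instead it assumes the equality in \eqref{eq::UpperBound_ineq} holds for every $\alpha\in[0,1]$, views both radicands as polynomials in $\alpha$, equates coefficients to obtain two scalar identities, and from those deduces $d_i=\Delta$. Your plan is the more natural fixed-$\alpha$ analysis, but the corner you flag is a real gap: when $\alpha=0$ and $\delta=1$, both $\alpha(\Delta^2-d_i^2)=0$ and $(\delta-1)(\Delta-d_i)=0$ are automatic, so tightness of your displayed chain at the maximising $v_i$ does \emph{not} by itself force $d_i=\Delta$. To finish that case you would still need an extra argument (for instance, using that all non-neighbours of $v_i$ then have degree~$1$, so any vertex of degree $\Delta$ lies in $N(v_i)\cup\{v_i\}$, and comparing the value at such a vertex). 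Your sketch asserts this step but does not supply it; the paper avoids the issue entirely by strengthening the hypothesis to ``equality for all $\alpha$''.
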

\begin{proof}
    Suppose that the maximum of the left-hand side of the inequality \eqref{eq::UpperBound_ineq} is reached at vertex $v_i$ and to simplify the notation, we use $d(v_i) = d_i$ for all $i \in \{1, \ldots, n\}$. Now,
    
    \begin{align}
        &\alpha \Delta^2 + (1 - \alpha)(\Delta(\delta - 1) -\delta(n-1) + 2m) - \alpha d_i^2 - (1-\alpha)\sum_{v_j \sim v_i}d_j = \nonumber\\
        &\alpha(\Delta^2 - d_i^2) + (1-\alpha)\left(\Delta(\delta - 1) -\delta(n-1) + 2m - \sum_{v_j \sim v_i}d_j \right)= \nonumber\\
        &\alpha(\Delta^2 - d_i^2) + (1-\alpha)\left(\Delta(\delta - 1) -\delta(n-1) + \sum_{1\leq s\leq n}d_s - \sum_{v_j \sim v_i}d_j \right) =\nonumber \\
        &\alpha(\Delta^2 - d_i^2) + (1-\alpha)\left(\Delta(\delta - 1) -\delta(n-1) + \sum_{v_j \nsim v_i}d_j + d_i \right)= \nonumber 
        \end{align}

        \begin{align}
        &\alpha(\Delta^2 - d_i^2) + (1-\alpha)\left(\delta(\Delta - n+1) + \sum_{v_j \nsim v_i}d_j + d_i - \Delta \right) \geq \nonumber \\
        &\alpha(\Delta^2 - d_i^2) + (1-\alpha)\left(\delta(\Delta - n+1) + \delta(n-1-d_i) + d_i - \Delta \right) = \nonumber\\
        &\alpha(\Delta^2 - d_i^2) + (1-\alpha)\left(\delta(\Delta - d_i) - (\Delta - d_i)\right) = \nonumber \\
        &\alpha(\Delta^2 - d_i^2) + (1-\alpha)(\delta - 1)(\Delta - d_i) \label{eq::positive_verification}
    \end{align}
    Here we have two cases to consider: if $d_i \neq \Delta$, follows that \eqref{eq::positive_verification} is greater or equal to $0$; if $d_i = \Delta$, follows that \eqref{eq::positive_verification} is equal to $0$. So, in both cases, the result follows.

    Now, suppose that
    \begin{equation*}
        \alpha \Delta^2 + (1 - \alpha)(\Delta(\delta - 1) -\delta(n-1) + 2m) = \alpha d_i^2 + (1-\alpha)\sum_{v_j \sim v_i}d_j
    \end{equation*}
    for all $\alpha \in [0,1].$ Considering both members of the previous equality  as polynomials in $\alpha$ and taking into account the equality of polynomials, we have  that
    \begin{equation}\label{eq::equality_verification1}
        \Delta^2 - d_i^2  - \sum_{v_j \sim v_i}d_j + \Delta(\delta - 1) -\delta(n-1) + 2m = 0   
    \end{equation}
    and
    \begin{equation}\label{eq::equality_verification2}
        \Delta(\delta - 1) -\delta(n-1) + 2m - \sum_{v_j \sim v_i}d_j = 0.
    \end{equation}
    From equality \eqref{eq::equality_verification2} we obtain 
    \begin{equation*}
        \displaystyle \Delta(\delta - 1) -\delta(n-1) + \sum_{v_j \nsim v_i}d_j + d_i= 0.
    \end{equation*}
    As  $\displaystyle\sum_{v_j \nsim v_i}d_j\geq \delta(n-1-d_i)$ and substituting in the previous equality, we get    
    \begin{equation*}
        0 \geq \Delta(\delta - 1) -\delta(n-1) +\delta(n-1-d_i) + d_i
    \end{equation*}
    and then, after some algebraic manipulation, we get
    \begin{equation*}
        0 \geq (\delta-1)(\Delta - d_i)
    \end{equation*}
    As $\delta \geq 1$ and $\Delta \geq d_i$ we must have that
    \begin{equation*}
         (\delta-1)(\Delta - d_i) = 0,
    \end{equation*}
    and then $\delta = 1$ or $d_i = \Delta$.

    To satisfy both equations, \eqref{eq::equality_verification1} and \eqref{eq::equality_verification2}, we need to have $d_i = \Delta$.
\end{proof}

Although the values obtained from bound \eqref{eq::LowerUpperBound_Wang}  are less or equal than those obtained from bound \eqref{eq::UpperBound_JCL}, in cases of equality, the latter is computed much faster than the former. We reached this conclusion after some computational tests made with, among others, the following families of graphs: $K_{1,n-1}$, $W_n$, regular graphs, $BT_k$, Helm graphs\footnote{A Helm graph, $H_n$, is constructed from a $W_n$ by adding $n$ vertices of degree $1$, one adjacent to each terminal vertex. For more details, we suggest \cite{marr2014magic}.}, Windmill graphs\footnote{A windmill graph $W(\nu, k)$ consists of $\nu$ copies of the complete graph $K_k$, with every vertices connected to a common vertex.}. We can observe the  elapsed time to compute the aforementioned value in Figure~\ref{fig::comparing_time}. 

\begin{figure}[H]
\centering
\subfloat[]{\includegraphics[scale=0.4]{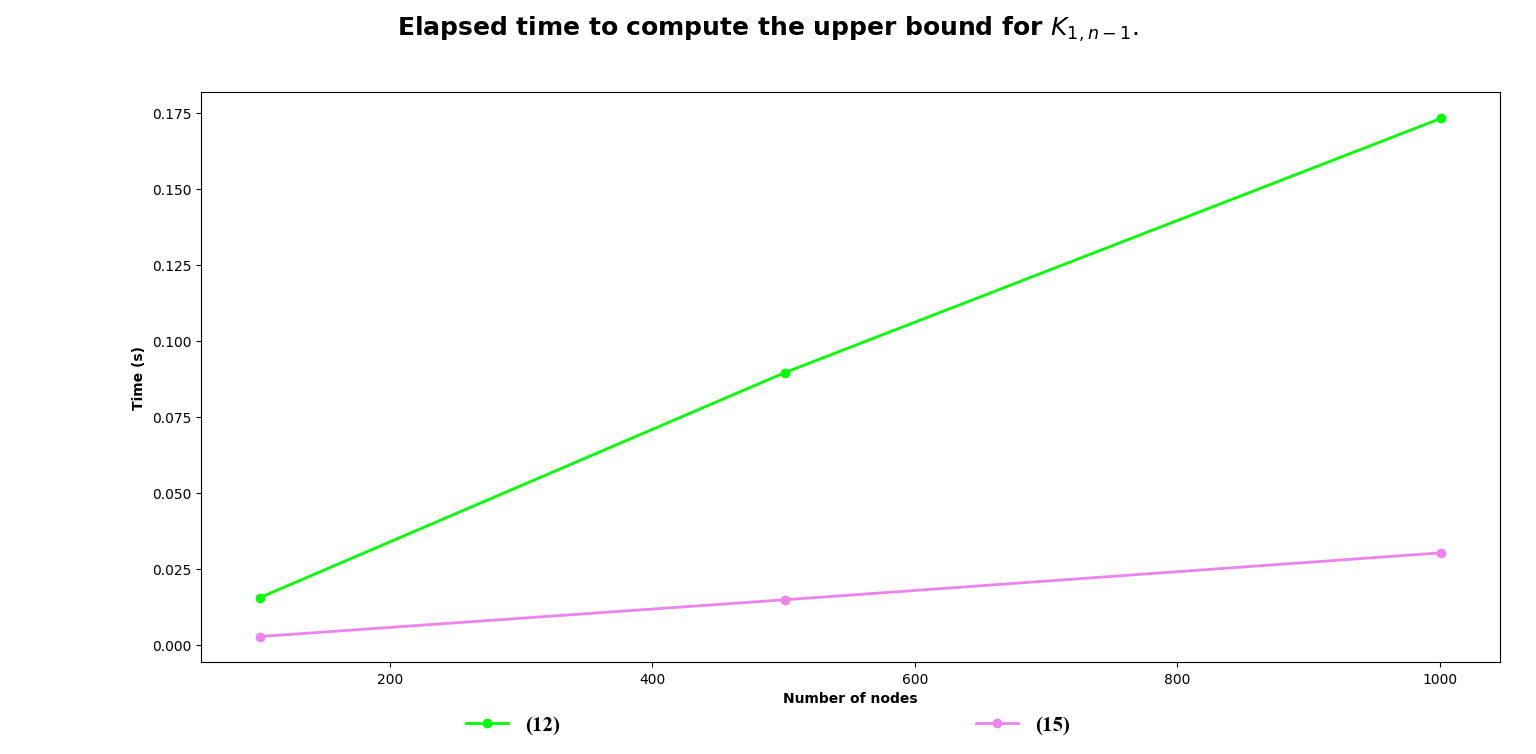}\label{fig::elapsed_star}}\\
\subfloat[]{\includegraphics[scale=0.4]{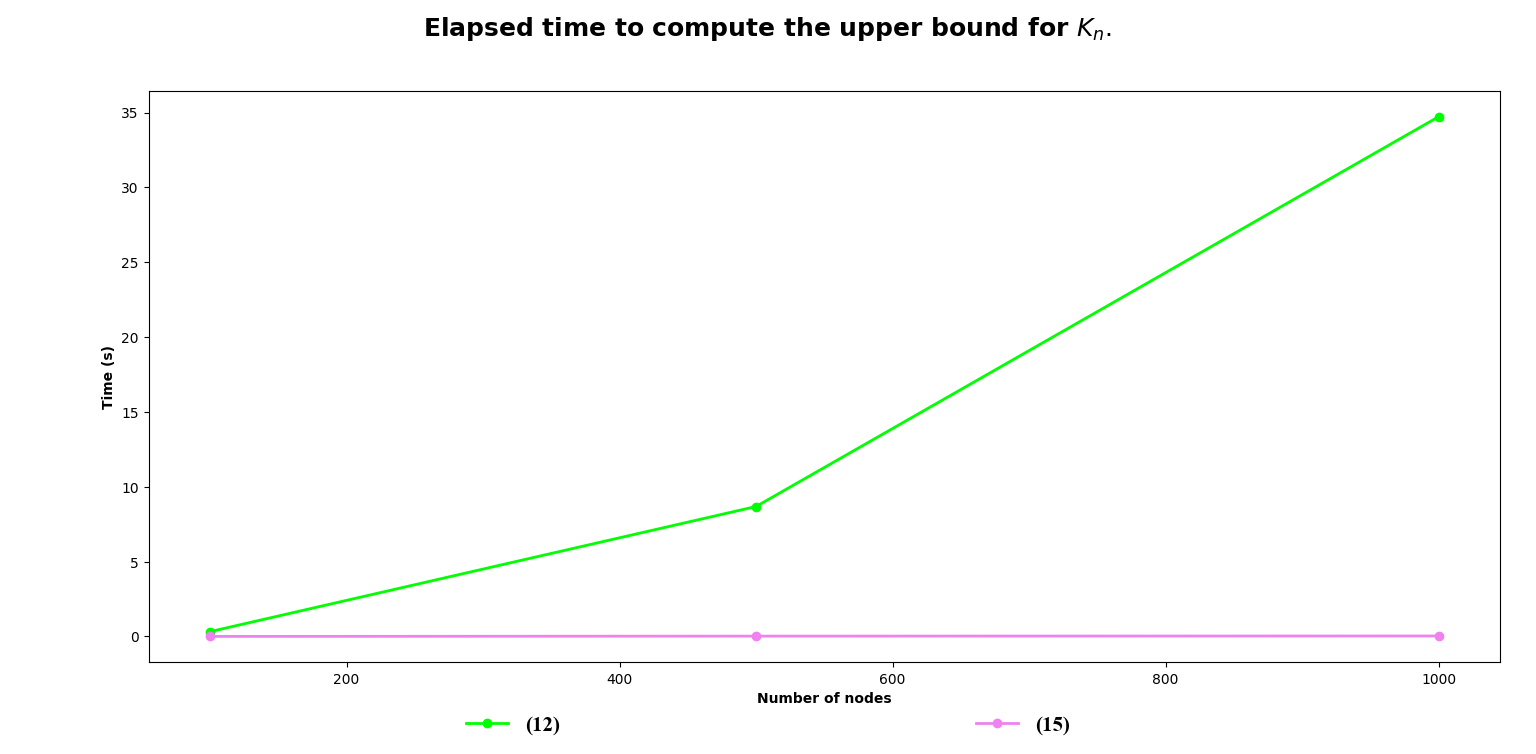}\label{fig::elapsed_complete}}\\
\end{figure}

\begin{figure}[H]
  \ContinuedFloat 
  \centering
\subfloat[]{\includegraphics[scale=0.4]{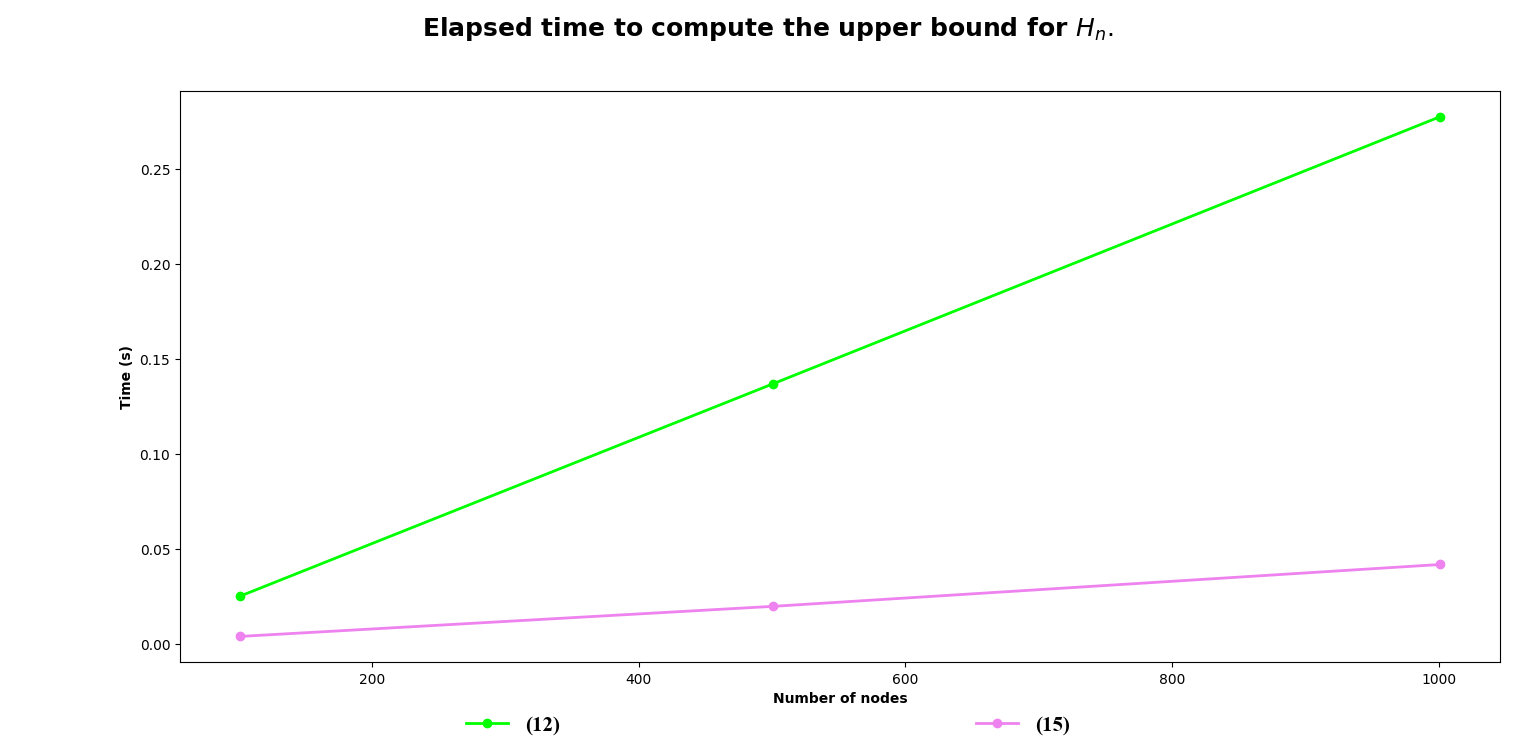}\label{fig::elapsed_helm}}
\caption{Average time spent calculating the upper bound $100$ times for each number of different vertices in each graph.}
\label{fig::comparing_time}
\end{figure}

\subsection{Some results for \texorpdfstring{$\mathbf{A_\alpha}$}{text2}-spectra of Line Graphs} \label{subsection::Graphs}

In this section we present some results involving line graphs. The study of the $A_\alpha$ matrix is very recent and, after a bibliographic research, no results were found, at least so far, involving line graphs and this matrix.  Next theorem is based on Theorem \ref{theo::lowerBound_LG} and presents a lower bound for the smallest eigenvalue of $A_\alpha(l(G))$.

\begin{theorem} \label{theo::lower_bound_lambda_n}
Let $G$ be a graph with $m$ edges and $\delta(G)$ its minimum degree. Then,

\begin{equation} \label{theo::min_eigen_lower_bound}
\lambda_m(A_\alpha(l(G))) \geq 2\alpha \delta(G) - 2,
\end{equation}
for $\alpha \in [0,1)$.
\end{theorem}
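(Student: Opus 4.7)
My plan is to exploit the decomposition $A_\alpha(l(G)) = \alpha D(l(G)) + (1-\alpha)A(l(G))$ and apply the Weyl-type lower bound from Corollary \ref{cor::weyl}. Writing $M = \alpha D(l(G))$ and $N = (1-\alpha)A(l(G))$, the corollary gives
\begin{equation*}
    \lambda_m(A_\alpha(l(G))) \;=\; \lambda_m(M+N) \;\geq\; \lambda_m(M) + \lambda_m(N).
\end{equation*}
The task therefore reduces to bounding each of the two summands on the right-hand side from below.

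For $M$, since $D(l(G))$ is diagonal with entries equal to the vertex degrees in $l(G)$, its smallest eigenvalue equals $\delta(l(G))$. Invoking Theorem \ref{theo::degree_relation}(a), we have $\delta(l(G)) \geq 2\delta(G) - 2$, and because $\alpha \geq 0$ this yields $\lambda_m(M) \geq \alpha(2\delta(G)-2)$. For $N$, since $1-\alpha \geq 0$, the smallest eigenvalue is $(1-\alpha)\lambda_m(A(l(G)))$, and Theorem \ref{theo::lowerBound_LG} tells us $\lambda_m(A(l(G))) \geq -2$, giving $\lambda_m(N) \geq -2(1-\alpha)$.

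Adding the two bounds gives
\begin{equation*}
    \lambda_m(A_\alpha(l(G))) \;\geq\; \alpha(2\delta(G)-2) - 2(1-\alpha) \;=\; 2\alpha\delta(G) - 2\alpha - 2 + 2\alpha \;=\; 2\alpha\delta(G) - 2,
\end{equation*}
which is exactly the claim. The argument goes through uniformly for any $\alpha \in [0,1)$ (indeed also $\alpha=1$), and there is no real obstacle: the only non-trivial ingredients are the two cited theorems, and Weyl's inequality glues them together cleanly. An alternative would be to start from the identity $(1-\alpha)B^TB = A_\alpha(l(G)) + U$ from Remark \ref{rmk::incident_BTB_alpha} and use positive semidefiniteness of $B^TB$ to write $\lambda_m(A_\alpha(l(G))) \geq -\lambda_1(U)$, then observe $\max_k u_{kk} = 2 - \alpha \min_{e_k=v_iv_j}(d(v_i)+d(v_j)) \leq 2 - 2\alpha\delta(G)$; this gives the same conclusion but is slightly less direct than the Weyl route.
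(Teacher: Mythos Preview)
Your proof is correct, but it follows a different route from the paper's. The paper works with the incidence-matrix identity of Remark~\ref{rmk::incident_BTB_alpha}, namely $(1-\alpha)B^TB = A_\alpha(l(G)) + 2(1-\alpha)I_m - \alpha D(l(G))$, evaluates both sides on the unit eigenvector $w_m$ associated with $\lambda_m(A_\alpha(l(G)))$, and uses $w_m^T B^TB w_m \geq 0$ together with $w_m^T D(l(G)) w_m \geq \delta(l(G))$ to obtain $\lambda_m(A_\alpha(l(G))) \geq \alpha\delta(l(G)) - 2(1-\alpha)$, before finishing with Theorem~\ref{theo::degree_relation}. This is essentially the ``alternative'' you sketch at the end of your write-up. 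Your primary argument instead applies Weyl's inequality directly to the decomposition $A_\alpha(l(G)) = \alpha D(l(G)) + (1-\alpha)A(l(G))$ and invokes the classical bound $\lambda_m(A(l(G))) \geq -2$ from Theorem~\ref{theo::lowerBound_LG} as a black box. Your approach is shorter and avoids any explicit mention of the incidence matrix; on the other hand, since Theorem~\ref{theo::lowerBound_LG} is itself proved via the positive semidefiniteness of $B^TB$, the paper's argument can be seen as the same idea carried out in one pass rather than in two layers.
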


\begin{proof}
Let $B$ the incidence matrix of $G$. From Remark \ref{rmk::incident_BTB_alpha} we have
$$(1-\alpha)B^TB = A_\alpha(l(G)) + 2(1-\alpha)I_m - \alpha D(l(G)).$$
Let $\{w_1, \ldots , w_m\}$ be the  orthonormal basis of eigenvectors associated to $\lambda_1(A_\alpha(l(G))), \ldots,\lambda_m(A_\alpha(l(G)))$ such that $A_\alpha(l(G))w_i = \lambda_i(A_\alpha(l(G)))w_i, \ \ \forall i = 1, \ldots,m$.  Then
\begin{align*}
    (1-\alpha)w^T_mB^TBw_m &= w^T_mA_\alpha(l(G))w_m + 2(1-\alpha)w^T_mw_m - \alpha w^T_mD(l(G))w_m\\
    & \leq \lambda_m(A_\alpha(l(G))) + 2(1-\alpha) -\alpha \delta(l(G)).
\end{align*}
As $B^TB$ is positive semi-definite, we get

$$0 \leq w^T_mB^TBw_m \leq \dfrac{\lambda_m(A_\alpha(l(G))) + 2(1-\alpha)-\alpha \delta(l(G))}{1-\alpha}.$$
Therefore,
\begin{equation*}
    0 \leq \lambda_m(A_\alpha(l(G))) + 2(1-\alpha)-\alpha \delta(l(G))
\end{equation*}

From Theorem \ref{theo::degree_relation} the result follows
$$  \lambda_m(A_\alpha(l(G))) \geq 2\alpha \delta(G) -2$$
\end{proof}

\begin{corollary} \label{cor::lowerBound}
If $G$ is a $r$-regular graph with $n$ vertices, $m \neq 0$ edges, $r \geq 2$ and $\alpha \in [0,1)$,  then $\lambda_m(A_\alpha(l(G))) = 2r\alpha -2$ if and only if $m > n$ or $G$ is bipartite.
\end{corollary}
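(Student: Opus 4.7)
The plan is to combine Corollary \ref{cor::pol_signless} with the lower bound of Theorem \ref{theo::lower_bound_lambda_n}. First I would apply Corollary \ref{cor::pol_signless}, which factors the characteristic polynomial of $A_\alpha(l(G))$ as
$$P_{A_\alpha(l(G))}(\lambda) = (\lambda - 2r\alpha + 2)^{m-n}(1-\alpha)^n P_{Q(G)}\!\left(\frac{\lambda - 2r\alpha + 2}{1-\alpha}\right).$$
Reading off the roots, the $A_\alpha$-eigenvalues of $l(G)$ are exactly $2r\alpha - 2$ (with multiplicity $m - n$ whenever $m>n$) together with the $n$ values $(1-\alpha)q_i + 2r\alpha - 2$, where $q_1 \geq \ldots \geq q_n$ are the signless Laplacian eigenvalues of $G$.

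Next I would invoke the fact that $Q(G)$ is positive semidefinite, so $q_i \geq 0$ for all $i$. Since $\alpha \in [0,1)$ we have $1-\alpha > 0$, and therefore each value $(1-\alpha)q_i + 2r\alpha - 2 \geq 2r\alpha - 2$, with equality precisely when $q_i = 0$. Theorem \ref{theo::lower_bound_lambda_n} already yields $\lambda_m(A_\alpha(l(G))) \geq 2r\alpha - 2$ (using $\delta(G)=r$), so equality in this lower bound is equivalent to $2r\alpha - 2$ appearing as a root of $P_{A_\alpha(l(G))}$.

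Finally I would split into two cases according to the relation between $m$ and $n$. Because $G$ is $r$-regular with $r \geq 2$, we have $m = nr/2 \geq n$, so either $m > n$ or $m = n$ (the latter forcing $r=2$). If $m > n$, the factor $(\lambda - 2r\alpha + 2)^{m-n}$ already forces $2r\alpha - 2$ to be an eigenvalue, giving equality. If $m=n$, this factor disappears, and equality holds if and only if some $q_i = 0$; by Proposition \ref{prop::signlessLeasteigen}, this happens precisely when $G$ is bipartite. Combining both cases yields the stated biconditional.

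The main delicate point is handling the connectivity hypothesis in Proposition \ref{prop::signlessLeasteigen}, since the corollary does not assume $G$ connected. I would resolve this by noting that a graph is bipartite iff every connected component is bipartite, and that the signless Laplacian spectrum is the union of the spectra of the components; then $q_n = 0$ is equivalent to at least one component being bipartite, and for an $r$-regular graph (where the condition ``$m > n$ or $G$ bipartite'' should be interpreted in the natural componentwise sense) this forces every component to be bipartite, closing the argument.
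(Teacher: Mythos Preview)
Your argument is correct and follows exactly the paper's route: both proofs invoke Corollary~\ref{cor::pol_signless} to factor $P_{A_\alpha(l(G))}$, use Theorem~\ref{theo::lower_bound_lambda_n} (with $\delta(G)=r$) to pin $2r\alpha-2$ as the least admissible eigenvalue, and appeal to Proposition~\ref{prop::signlessLeasteigen} to translate the vanishing of $q_n$ into bipartiteness; your explicit observation that $r\ge 2$ forces $m\ge n$ is a nice clarification the paper leaves implicit. One caveat on your final paragraph: your claim that $q_n=0$ ``forces every component to be bipartite'' is not correct (e.g.\ $C_3\cup C_4$ is $2$-regular with $m=n$ and $q_n=0$ but is not bipartite), so what you have actually uncovered is that the corollary, as stated, tacitly needs $G$ connected---the paper's own proof has the same hidden assumption.
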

\begin{proof}
Let $G$ a $r$-regular graph. Firstly suppose that $\lambda_m(A_\alpha(l(G))) = 2r\alpha -2$. From Corollary~\ref{cor::pol_signless}, we have  that 
$m > n$ or $0$ is root of $P_{Q(G)}$ and, from Proposition~\ref{prop::signlessLeasteigen}, it follows that $G$ is bipartite. 

If $m > n$,  from Theorem~\ref{theo::linegraph} we have 
that $2r\alpha -2$ is an eigenvalue of $ A_\alpha(l(G))$ and from Theorem~\ref{theo::lower_bound_lambda_n} it is the smallest one. If $G$ is bipartite,  from Proposition \ref{prop::signlessLeasteigen}, we have that $0 \in \sigma(Q(G))$ and applying Corollary~\ref{cor::pol_signless}, the result follows.
\end{proof}

\begin{example}
If $G \cong C_n$ and $n$ is even, from Corollary \ref{cor::lowerBound} we get that  $\lambda_m(A_\alpha(l(G))) = 4\alpha - 2.$
\end{example}

Next propositions present some bounds for the largest eigenvalue of $A_\alpha(l(G))$. 

\begin{proposition}
Let $G$ be a connected graph with $n$ vertices and $\alpha \in [0,1)$. Then, $\lambda_1(A_\alpha(l(G))) < 2$ if and only if $G\cong P_n$.
\end{proposition}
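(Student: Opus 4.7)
The plan is to prove the two directions separately. The forward direction ($\Leftarrow$) follows directly from the known upper bound on the largest $A_\alpha$-eigenvalue of a path: since $l(P_n)\cong P_{n-1}$, applying Proposition~\ref{prop::upper_largesteigen_Pn} to $P_{n-1}$ yields $\lambda_1(A_\alpha(l(G)))\leq 2\alpha+2(1-\alpha)\cos(\pi/k)$ for some integer $k\in\{n-1,n\}$ with $k\geq 2$, and because $\alpha\in[0,1)$ and $\cos(\pi/k)<1$ this quantity is strictly smaller than $2\alpha+2(1-\alpha)=2$.

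For the converse I would argue by contrapositive: assuming $G$ is connected and $G\not\cong P_n$, I will show $\lambda_1(A_\alpha(l(G)))\geq 2$. Since a connected graph whose every vertex has degree at most $2$ is either a path or a cycle, the hypothesis forces either $G\cong C_n$ or $\Delta(G)\geq 3$. If $G\cong C_n$, then $l(G)\cong C_n$ is $2$-regular and Lemma~\ref{lemma::eigeneq_RegularGraphs} gives $\lambda_1(A_\alpha(l(G)))=2$, which already forbids the strict inequality $<2$.

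In the remaining case, fix $v\in V(G)$ with $d:=d(v)\geq 3$; the $d$ edges of $G$ incident to $v$ are pairwise adjacent in $l(G)$ and therefore form an induced $K_d$ there, whose largest $A_\alpha$-eigenvalue is $d-1\geq 2$ by Proposition~\ref{prop::complete_graph_spectrum}. I then split on whether $l(G)=K_d$: by Lemma~\ref{lemma::CompleteLineGraph} this forces $G\cong K_{1,d}$ or $G\cong K_3$, and since $\Delta(K_3)=2$ contradicts the current subcase, we must have $G\cong K_{1,d}$, giving $\lambda_1(A_\alpha(l(G)))=d-1\geq 2$ directly. Otherwise $l(G)$ is a connected graph strictly containing $K_d$, so Proposition~\ref{prop::Perron_alpha}(iv) supplies the strict comparison $\lambda_1(A_\alpha(l(G)))>\lambda_1(A_\alpha(K_d))=d-1\geq 2$. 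The main obstacle I anticipate is precisely this last subcase: there is no clean interlacing principle for $A_\alpha$ on induced subgraphs (the diagonal degrees differ between the subgraph and the host), and the Perron--Frobenius-type statement in Proposition~\ref{prop::Perron_alpha}(iv), which compares $\lambda_1(A_\alpha)$ across arbitrary proper subgraphs of a connected graph rather than across induced subgraphs, is exactly what is needed to bridge that gap.
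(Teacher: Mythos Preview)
Your proposal is correct. The forward direction is identical to the paper's. For the contrapositive, the paper takes a slightly more unified route: rather than splitting into the two cases $G\cong C_n$ versus $\Delta(G)\geq 3$, it observes in one stroke that whenever $G$ is connected and not a path, the line graph $l(G)$ must contain a cycle $C$ (either because $G$ itself has a cycle, whose line graph is again a cycle, or because a vertex of degree at least $3$ in $G$ produces a $K_3\cong C_3$ in $l(G)$). Then $\lambda_1(A_\alpha(l(G)))\geq \lambda_1(A_\alpha(C))=2$ by subgraph monotonicity together with Lemma~\ref{lemma::eigeneq_RegularGraphs}. Your case split is a little longer but makes the underlying structure explicit, and comparing to $K_d$ when $\Delta(G)\geq 3$ is a perfectly good substitute for comparing to a cycle. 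You also correctly flag that Proposition~\ref{prop::Perron_alpha}(iv) (monotonicity of $\lambda_1(A_\alpha)$ under passing to arbitrary proper subgraphs of a connected graph) is the right tool here, not induced-subgraph interlacing; the paper cites the adjacency interlacing theorem at that step, which is somewhat loose, so your identification of the correct justification is an improvement. One minor simplification: your final subcase split on whether $l(G)=K_d$ is not really needed, since in either event $K_d$ is a (possibly non-proper) subgraph of the connected graph $l(G)$, and $\lambda_1(A_\alpha(l(G)))\geq \lambda_1(A_\alpha(K_d))=d-1\geq 2$ follows directly.
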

\begin{proof}
Suppose that $G \cong P_n$. From Proposition \ref{prop::upper_largesteigen_Pn}, 
$$\lambda_1(A_\alpha(l(P_n))) = \lambda_1(A_\alpha(P_{n-1})) < 2.$$
Now, suppose by contradiction that $G \ncong P_n$. Then $l(G)$ contains at least a cycle $C$. From Theorem~\ref{theo::interlacing} and Lemma~\ref{lemma::eigeneq_RegularGraphs} we have
$$\lambda_1(A_\alpha(l(G))) \geq \lambda_1(A_\alpha(C)) = 2,$$
and the result follows.
\end{proof}

\begin{theorem} \label{theo::linegraph_UL}
    Let $G$ be a graph with maximum degree $\Delta$ and minimum degree $\delta$. Then
    \begin{equation} \label{ineq::lower}
        \lambda_1(A_\alpha(G)) + (1-2\alpha)\delta -\max_{v_iv_j \in E(G)}\{2-\alpha(d(v_i)+d(v_j))\} \leq \lambda_1(A_\alpha(l(G)))
    \end{equation}
    and
    \begin{equation} \label{ineq::upper}
        \lambda_1(A_\alpha(l(G))) \leq \lambda_1(A_\alpha(G)) + (1-2\alpha)\Delta -\min_{v_iv_j \in E(G)}\{2-\alpha(d(v_i)+d(v_j))\}
    \end{equation}
\end{theorem}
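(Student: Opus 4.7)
The plan is to exploit the two identities
\[
(1-\alpha)BB^{T} = A_\alpha(G) + (1-2\alpha)D(G), \qquad (1-\alpha)B^{T}B = A_\alpha(l(G)) + U,
\]
provided by Remarks \ref{rmk::incident_BBT_alpha} and \ref{rmk::incident_BTB_alpha}, together with the well-known fact that $BB^{T}$ and $B^{T}B$ share the same nonzero spectrum (in particular, being positive semidefinite, $\lambda_1(BB^{T})=\lambda_1(B^{T}B)$). Call this common value $\mu$; then $(1-\alpha)\mu$ is a single number that we can sandwich from both sides.

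First I would apply Corollary \ref{cor::weyl} to the decomposition $(1-\alpha)BB^{T}=A_\alpha(G)+(1-2\alpha)D(G)$. Taking $i=1$ and assuming $\alpha\in[0,\tfrac{1}{2}]$ so that $(1-2\alpha)D(G)$ has largest eigenvalue $(1-2\alpha)\Delta$ and smallest $(1-2\alpha)\delta$, this yields
\[
\lambda_1(A_\alpha(G))+(1-2\alpha)\delta \;\le\; (1-\alpha)\mu \;\le\; \lambda_1(A_\alpha(G))+(1-2\alpha)\Delta.
\]
Next, I would apply Corollary \ref{cor::weyl} to the second identity. Since $U$ is diagonal with entries $u_{kk}=2-\alpha(d(v_i)+d(v_j))$ for $e_k=v_iv_j$, one has $\lambda_1(U)=\max_{v_iv_j\in E(G)}\{2-\alpha(d(v_i)+d(v_j))\}$ and $\lambda_m(U)=\min_{v_iv_j\in E(G)}\{2-\alpha(d(v_i)+d(v_j))\}$, giving
\[
\lambda_1(A_\alpha(l(G)))+\min_{v_iv_j\in E(G)}\{2-\alpha(d(v_i)+d(v_j))\} \;\le\; (1-\alpha)\mu \;\le\; \lambda_1(A_\alpha(l(G)))+\max_{v_iv_j\in E(G)}\{2-\alpha(d(v_i)+d(v_j))\}.
\]

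Now I would chain the inequalities across the common value $(1-\alpha)\mu$. For the lower bound \eqref{ineq::lower}, combine the upper estimate of the second display with the lower estimate of the first:
\[
\lambda_1(A_\alpha(G))+(1-2\alpha)\delta \;\le\; (1-\alpha)\mu \;\le\; \lambda_1(A_\alpha(l(G)))+\max_{v_iv_j\in E(G)}\{2-\alpha(d(v_i)+d(v_j))\},
\]
and rearrange. For the upper bound \eqref{ineq::upper}, combine the lower estimate of the second display with the upper estimate of the first in the symmetric way. No eigenvector analysis is required since we only use the Weyl inequalities themselves, not the equality cases.

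The main obstacle is really bookkeeping rather than mathematics: one must track which direction of the Weyl inequality pairs with which extremum of $U$ and of $(1-2\alpha)D(G)$, and be careful about the sign of $1-2\alpha$ when identifying $\lambda_1$ and $\lambda_n$ of $(1-2\alpha)D(G)$. If $\alpha>\tfrac{1}{2}$ the roles of $\Delta$ and $\delta$ in the estimate of $(1-\alpha)\mu$ swap, so either the statement should be read with the convention that $\alpha\in[0,\tfrac{1}{2}]$ or the bounds must be reformulated by writing $\max\{(1-2\alpha)\Delta,(1-2\alpha)\delta\}$ and the analogous min; I would note this caveat explicitly and present the proof in the regime $\alpha\in[0,\tfrac{1}{2}]$ where the statement as written is correct.
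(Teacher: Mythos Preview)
Your approach is essentially the same as the paper's: both arguments hinge on the two incidence-matrix identities from Remarks \ref{rmk::incident_BTB_alpha} and \ref{rmk::incident_BBT_alpha}, the equality $\lambda_1(BB^T)=\lambda_1(B^TB)$, and Weyl-type bounds on the diagonal perturbations $U$ and $(1-2\alpha)D(G)$. The only cosmetic difference is that on the $BB^T$ side the paper plugs the Perron eigenvectors of $A_\alpha(G)$ and of $BB^T$ into the Rayleigh quotient instead of invoking Corollary \ref{cor::weyl} directly; the resulting inequalities are identical to yours. Your observation about the sign of $1-2\alpha$ is well taken and in fact applies verbatim to the paper's own argument: the step $(1-2\alpha)\sum_i d(v_i)x_i^2 \ge (1-2\alpha)\delta$ (and its companion with $\Delta$) is only valid when $\alpha\le \tfrac{1}{2}$, so the proof as written in the paper establishes the stated inequalities in that regime only.
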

\begin{proof}
Let $B$ be the incidence matrix of $G$. We know that $\lambda_1(BB^T) = \lambda_1(B^TB).$ Applying Corollary \ref{cor::weyl} in \eqref{eq::BTB_1} we obtain
{\small
\begin{equation} \label{ineq::weyl}
    \lambda_1(A_\alpha(l(G))) + \min_{v_iv_j \in E(G)}\{2-\alpha(d(v_i)+d(v_j))\} \leq \lambda_1(A_\alpha(l(G)) + U) \leq \lambda_1(A_\alpha(l(G))) + \max_{v_iv_j \in E(G)}\{2-\alpha(d(v_i)+d(v_j))\}
\end{equation}
}
Let be $x$ a unit non-negative eigenvector
associated to $\lambda_1(A_\alpha(G))$ and $z$ be a unit non-negative eigenvector associated to $\lambda_1(BB^T)$.
From Theorem \ref{theo::rayleigh}, Proposition \ref{prop::rayleigh_alpha} and Remark \ref{rmk::incident_BBT_alpha} we have
\begin{align}
    (1-\alpha)\lambda_1(BB^T) & = (1-\alpha)\max_{\vert z \vert = 1} z^TBB^Tz \geq (1-\alpha)x^TBB^Tx = x^TA_\alpha(G)x + (1-2\alpha)x^TD(G)x \nonumber \\
    &=\lambda_1(A_\alpha(G)) + (1-2\alpha)\sum_{i=1}^nd(v_i)x_i^2 \geq \lambda_1(A_\alpha(G)) + (1-2\alpha)\delta.
\end{align}
Thus, from (\ref{ineq::weyl})
\begin{equation*}
 \lambda_1(A_\alpha(l(G))) + \max_{v_iv_j \in E(G)}\{2-\alpha(d(v_i)+d(v_j))\} \geq  \lambda_1(A_\alpha(G)) + (1-2\alpha)\delta
\end{equation*}
and, as a consequence, inequality (\ref{ineq::lower}) follows.

On the other hand, again from Theorem \ref{theo::rayleigh}, Proposition \ref{prop::rayleigh_alpha} and Remark \ref{rmk::incident_BBT_alpha} we have
\begin{align}
    \lambda_1(A_\alpha(G)) & = \max_{\vert x \vert = 1} x^TA_\alpha(G)x \geq z^TA_\alpha(G)z =  z^T(1-\alpha)BB^Tz - (1-2\alpha)z^TD(G)z \nonumber \\
    & \geq \lambda_1(A_\alpha(l(G))) + \min_{v_iv_j \in E(G)}\{2-\alpha(d(v_i)+d(v_j))\} - (1-2\alpha)\sum_{i=1}^nd(v_i)z_i^2 \nonumber \\ 
    & \geq \lambda_1(A_\alpha(l(G))) + \min_{v_iv_j \in E(G)}\{2-\alpha(d(v_i)+d(v_j))\} - (1-2\alpha)\Delta.
\end{align}
and the inequality (\ref{ineq::upper}) follows.
\end{proof}

\begin{corollary}
    Let $G$ be a $r$-regular graph and $\alpha \in [0,1]$. Then $\lambda_1(A_\alpha(l(G))) = 2r-2$.
\end{corollary}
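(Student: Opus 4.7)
The plan is to apply Theorem \ref{theo::linegraph_UL} directly and observe that in the $r$-regular case both the lower and upper bounds collapse to the same value $2r-2$. Since $G$ is $r$-regular we have $\Delta = \delta = r$, and for every edge $v_iv_j \in E(G)$ the quantity $2 - \alpha(d(v_i)+d(v_j))$ is the constant $2 - 2r\alpha$, so
\[
\max_{v_iv_j \in E(G)}\{2-\alpha(d(v_i)+d(v_j))\} = \min_{v_iv_j \in E(G)}\{2-\alpha(d(v_i)+d(v_j))\} = 2 - 2r\alpha.
\]
Moreover, by Lemma \ref{lemma::eigeneq_RegularGraphs}, $\lambda_1(A_\alpha(G)) = r$.

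Substituting these values into the lower bound \eqref{ineq::lower} gives
\[
r + (1-2\alpha)r - (2 - 2r\alpha) = 2r - 2 \leq \lambda_1(A_\alpha(l(G))),
\]
while the upper bound \eqref{ineq::upper} yields
\[
\lambda_1(A_\alpha(l(G))) \leq r + (1-2\alpha)r - (2 - 2r\alpha) = 2r - 2.
\]
The two bounds coincide, proving the identity for $\alpha \in [0,1)$, which is the range in which Theorem \ref{theo::linegraph_UL} was invoked via Remark \ref{rmk::incident_BBT_alpha}.

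The boundary case $\alpha = 1$ is handled separately (and trivially): $A_1(l(G)) = D(l(G))$, and since $l(G)$ is $(2r-2)$-regular by Theorem \ref{theo::LineGraphs_properties}(b), its largest diagonal entry is exactly $2r-2$. There is no substantive obstacle in the argument; the only thing to check is that the cancellation between the $(1-2\alpha)r$ and $-2r\alpha$ terms is genuine, which it is because $(1-2\alpha)r + 2r\alpha = r$.
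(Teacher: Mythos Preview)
Your proof is correct and follows essentially the same approach as the paper: both apply Theorem~\ref{theo::linegraph_UL}, specialize $\Delta=\delta=r$ and $d(v_i)+d(v_j)=2r$, and observe that the lower and upper bounds coincide at $2r-2$. Your version is slightly more explicit (citing Lemma~\ref{lemma::eigeneq_RegularGraphs} for $\lambda_1(A_\alpha(G))=r$) and more careful in treating the endpoint $\alpha=1$ separately, which the paper glosses over.
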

\begin{proof}
    From Theorem \ref{theo::linegraph_UL},
    \begin{equation*}
        \lambda_1(A_\alpha(G)) + (1-2\alpha)r - 2 + 2r\alpha \leq \lambda_1(A_\alpha(l(G))) \leq \lambda_1(A_\alpha(G)) + (1-2\alpha)r - 2 + 2r\alpha.
    \end{equation*}
    So, $\lambda_1(A_\alpha(l(G))) =  \lambda_1(A_\alpha(G)) + (1-2\alpha)r - 2 + 2r\alpha$. As $\lambda_1(A_\alpha(G))=r$ the result follows.
\end{proof}

\begin{proposition} \label{prop::bound_eigenvalues}
Let $G$ be a graph with $n$ vertices and $m \neq 0$ edges. Then, $\lambda_1(A_\alpha(l(G))) \leq 2n-4$ and $\lambda_i(A_\alpha(l(G))) \leq n(\alpha+1)-4$, $\forall i = 2, \ldots m$ and $\alpha \in [0,1]$.
\end{proposition}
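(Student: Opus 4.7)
The plan is to realize $G$ as a spanning subgraph of $K_n$ and then compare the $A_\alpha$-eigenvalues of $l(G)$ with those of $l(K_n)$, whose spectrum has already been computed in Corollary \ref{cor::complete_graph}. The comparison itself is furnished by the interlacing result for line graphs given in Corollary \ref{cor::InterlacingLineGraph}.

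First I would fix a chain $G = G_0 \subset G_1 \subset \cdots \subset G_k = K_n$, where each $G_{j+1}$ is obtained from $G_j$ by adding a single edge (so $k = \binom{n}{2} - m$). Reading the chain in the reverse direction amounts to deleting edges one at a time from $K_n$ until $G$ is reached. Applying Corollary \ref{cor::InterlacingLineGraph} at every step yields
\begin{equation*}
\lambda_i(A_\alpha(l(G))) \le \lambda_i(A_\alpha(l(G_1))) \le \cdots \le \lambda_i(A_\alpha(l(K_n)))
\end{equation*}
for every index $i$ in the admissible range. So it suffices to bound the eigenvalues of $A_\alpha(l(K_n))$.

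Next I would invoke Corollary \ref{cor::complete_graph}, which gives
\begin{equation*}
\sigma(A_\alpha(l(K_n))) = \bigl\{\,2n-4,\ (n(\alpha+1)-4)^{(n-1)},\ (2\alpha(n-1)-2)^{(n(n-3)/2)}\,\bigr\}.
\end{equation*}
A short arithmetic check confirms that for every $\alpha \in [0,1]$ these three values are already in non-increasing order: indeed $(2n-4) - (n(\alpha+1)-4) = n(1-\alpha) \ge 0$ and $(n(\alpha+1)-4) - (2\alpha(n-1)-2) = (n-2)(1-\alpha) \ge 0$. Consequently $\lambda_1(A_\alpha(l(K_n))) = 2n-4$, and $\lambda_i(A_\alpha(l(K_n))) \le n(\alpha+1)-4$ for every $i \ge 2$. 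Combining this with the interlacing chain yields the two stated inequalities.

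I do not expect a genuine obstacle here: the argument is essentially an iterated interlacing plus a numerical ordering of three explicit eigenvalues of $l(K_n)$. The only thing one has to be mildly careful about is making sure the iterated application of Corollary \ref{cor::InterlacingLineGraph} is valid for all indices $i = 1, \ldots, m$ simultaneously; since each single-edge deletion decreases the size of $l(\cdot)$ by exactly one and shifts the relevant inequality by at most one, the chain is legitimate for every $i \le m$.
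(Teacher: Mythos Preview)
Your proposal is correct and follows essentially the same approach as the paper, which simply notes that $G$ is a subgraph of $K_n$ and invokes Corollary~\ref{cor::InterlacingLineGraph} together with the known spectrum of $A_\alpha(l(K_n))$. Your version just makes explicit the iterated edge-deletion chain and the verification that the three eigenvalues of $A_\alpha(l(K_n))$ are listed in non-increasing order.
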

\begin{proof}
As $G$ is a subgraph of $K_n$, the result follows from Corollary \ref{cor::InterlacingLineGraph} and Example~\ref{example::complete_graph}.
\end{proof}

\begin{corollary}
    Let $G$ be a $r$-regular graph with $n \geq 2$ vertices, $m \neq 0$ edges and $\alpha \in [0,1)$. Then, $\lambda_1(A_\alpha(l(G))) = 2n-4$ if and only if $G \cong K_n$.
\end{corollary}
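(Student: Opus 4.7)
The plan is to invoke the immediately preceding corollary, which states that for any $r$-regular graph $G$ and any $\alpha \in [0,1]$ one has $\lambda_1(A_\alpha(l(G))) = 2r - 2$. Because the hypothesis of the statement already assumes $G$ is $r$-regular, this identity is directly available and reduces the equivalence to an equivalence purely about the parameter $r$.

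For the forward direction, assume $\lambda_1(A_\alpha(l(G))) = 2n-4$. Combining with the identity $\lambda_1(A_\alpha(l(G))) = 2r - 2$ yields $2r - 2 = 2n - 4$, so $r = n - 1$. An $r$-regular graph on $n$ vertices with $r = n - 1$ has every vertex adjacent to every other vertex, hence $G \cong K_n$. For the converse, if $G \cong K_n$ then $G$ is $(n-1)$-regular, and applying the same corollary with $r = n - 1$ gives $\lambda_1(A_\alpha(l(G))) = 2(n-1) - 2 = 2n - 4$.

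There is essentially no obstacle: the whole argument is a one-line substitution after the preceding corollary is in hand. The only point worth double-checking is that $r = n-1$ unambiguously forces $G \cong K_n$ (true by the definition of regularity, since each vertex must be joined to all other $n-1$ vertices) and that the boundary case $n = 2$ is consistent (there $G \cong K_2$, $r = 1$, $l(G) \cong K_1$, and $2n - 4 = 0 = 2r - 2$).
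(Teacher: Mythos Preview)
Your proof is correct and is in fact more direct than the paper's own argument. Both directions agree on the converse (the paper cites the explicit spectrum of $l(K_n)$, you cite the formula $\lambda_1(A_\alpha(l(G)))=2r-2$ with $r=n-1$; these amount to the same thing). The forward direction is where the approaches diverge. The paper does not use the preceding corollary; instead it plugs $\lambda=2n-4$ into the characteristic-polynomial factorization $P_{A_\alpha(l(G))}(\lambda)=(\lambda-2r\alpha+2)^{m-n}P_{A_\alpha(G)}(\lambda-r+2)$ from Theorem~\ref{theo::linegraph}, rules out the vanishing of the first factor by a short case analysis on $\alpha$, and then argues that $2n-2-r$ being an $A_\alpha(G)$-eigenvalue forces $2n-2-r=r$. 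Your route bypasses all of this: once $\lambda_1(A_\alpha(l(G)))=2r-2$ is available, the equation $2r-2=2n-4$ immediately gives $r=n-1$. This is cleaner and avoids the case split; the paper's approach, on the other hand, illustrates how the polynomial identity of Theorem~\ref{theo::linegraph} can be exploited directly.
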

\begin{proof}
    If $G \cong K_n$, from Example~\ref{example::complete_graph},  $\lambda_1(A_\alpha(l(K_n))) = 2n-4.$ On the other hand, if $\lambda_1(A_\alpha(l(G))) = 2n-4$ we have that $P_{A_\alpha(l(G))}(2n-4) = 0$ and, from Theorem \ref{theo::linegraph}, we obtain that $(2n-2 - 2r \alpha)^{m-n}P_{A_\alpha(G)}(2n-2 - r) = 0$, for all $\alpha \in [0,1)$. Now, consider two cases: 
    \begin{itemize}
        \item[(i)] Suppose  $\; 2n-2 - 2r \alpha = 0$. If $0<\alpha < 1$, then $r = \dfrac{n-1}{\alpha}> n-1$, which is impossible. If $\alpha = 0$, then $ n=1$ which is absurd. 
        \item[(ii)] Suppose $\; P_{A_\alpha(G)}(2n-2 - r) = 0$. As $G$ is a regular graph the multiplicity of $\lambda_1(A_\alpha(G))=r$ is $1$. So we have that $2n-2 - r = r$. Therefore, $r = n-1$, which completes the proof. 
    \end{itemize}
\end{proof}

Finally, next theorem presents bounds for $\lambda_2(A_\alpha(l(G))).$

\begin{theorem} \label{theo::bound_second_largest}
Let $G$ be a connected graph with $n$ vertices, $n \geq 3$. Then
$$ 2\alpha -1 \leq \lambda_2(A_\alpha(l(G))) \leq n(\alpha+1)-4. $$
Equality  occurs when $G \cong P_{3}$, for the lower bound, and when  $G \cong K_n$, for the upper bound.
\end{theorem}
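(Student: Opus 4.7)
The plan is to split the theorem into the two inequalities and handle the equality cases at the end. The upper bound is an immediate consequence of previously established results, while the lower bound requires a variational argument with a carefully chosen two-dimensional test subspace.

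For the upper bound, I would simply invoke Proposition \ref{prop::bound_eigenvalues}: since $G$ is a subgraph of $K_n$, that proposition already gives $\lambda_2(A_\alpha(l(G))) \leq n(\alpha+1) - 4$. To verify the equality case $G \cong K_n$, I would cite Example \ref{example::complete_graph}, which describes the full $A_\alpha$-spectrum of $l(K_n)$ as $\{2n-4,\; n(\alpha+1)-4^{(n-1)},\; 2\alpha(n-1)-2^{(n(n-3)/2)}\}$, and check the elementary inequalities (using $n \geq 3$ and $\alpha \in [0,1]$) that rank $n(\alpha+1)-4$ as the second-largest value.

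For the lower bound, the key observation is that a connected graph on $n \geq 3$ vertices contains some vertex of degree $\geq 2$, so it contains $P_3$ as a subgraph; equivalently, there exist two edges $e_1, e_2 \in E(G)$ that share an endpoint. Let $M = A_\alpha(l(G))$, let $d_i$ denote $d_{l(G)}(e_i)$, and consider the $2 \times 2$ principal submatrix
$$M' = \begin{pmatrix} \alpha d_1 & 1-\alpha \\ 1-\alpha & \alpha d_2 \end{pmatrix}$$
of $M$ indexed by the two line-graph vertices corresponding to $e_1, e_2$. By the Courant–Fischer variational characterization of $\lambda_2$ applied to the 2-dimensional coordinate subspace spanned by the indicator vectors of $e_1$ and $e_2$ (equivalently, by Cauchy interlacing for principal submatrices), one has $\lambda_2(M) \geq \lambda_{\min}(M')$. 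A direct computation gives
$$\lambda_{\min}(M') = \frac{\alpha(d_1+d_2)}{2} - \sqrt{\frac{\alpha^2(d_1-d_2)^2}{4} + (1-\alpha)^2},$$
and applying the elementary inequality $\sqrt{a^2+b^2} \leq |a|+|b|$ (valid since $1-\alpha \geq 0$) bounds this from below by $\alpha \min(d_1,d_2) - (1-\alpha)$. Since $e_1$ and $e_2$ are adjacent in $l(G)$, we have $d_1, d_2 \geq 1$, which yields $\lambda_2(A_\alpha(l(G))) \geq 2\alpha - 1$. Finally, for the equality case $G \cong P_3$, the line graph is $l(P_3) = K_2$, whose $A_\alpha$-matrix has spectrum $\{1, 2\alpha - 1\}$, so $\lambda_2 = 2\alpha - 1$ by direct inspection.

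The only step that requires care is the variational one, and the main conceptual obstacle is seeing that the right test subspace is the coordinate plane spanned by two adjacent line-graph vertices; once this is set up, the $2 \times 2$ computation and the $\sqrt{a^2+b^2} \leq |a|+|b|$ trick make the calculation almost immediate. Note that the interlacing statement in Theorem \ref{theo::interlacing} is phrased for the adjacency matrix on induced subgraphs, so I would explicitly invoke Courant–Fischer (or Cauchy's interlacing for symmetric principal submatrices) rather than that theorem.
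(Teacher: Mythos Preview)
Your proposal is correct, and for the upper bound (including the equality case via Example~\ref{example::complete_graph}) it coincides with the paper's argument. For the lower bound the underlying idea is also the same as the paper's---pick two incident edges of $G$ and interlace against the resulting $2$-vertex configuration in $l(G)$---but the execution differs. The paper simply notes that $P_3$ is a subgraph of $G$, so by Proposition~\ref{prop::SubgraphLineGraph} the graph $l(P_3)=K_2$ is an \emph{induced} subgraph of $l(G)$; it then invokes Theorem~\ref{theo::interlacing} together with Corollary~\ref{cor::StarLineGraph} (which gives $\sigma(A_\alpha(K_2))=\{1,\,2\alpha-1\}$) to conclude $\lambda_2(A_\alpha(l(G)))\ge 2\alpha-1$ in one line. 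You instead work with the principal $2\times2$ submatrix of $A_\alpha(l(G))$ itself (diagonal entries $\alpha d_i$, not $\alpha$), compute its smaller eigenvalue, and bound it using $\sqrt{a^2+b^2}\le|a|+|b|$ and $d_i\ge1$. Your route is a little longer but has the virtue you already identified: it rests directly on Cauchy interlacing for symmetric principal submatrices, whereas Theorem~\ref{theo::interlacing} as stated in the paper is only for the adjacency matrix. The paper's one-line version is still valid---the principal submatrix equals $A_\alpha(K_2)$ plus a nonnegative diagonal, so Weyl's inequality (Corollary~\ref{cor::weyl}) bridges the gap---but this step is left implicit there, and your version makes the argument self-contained.
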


\begin{proof}
As $n \geq 3$, from Proposition~\ref{prop::SubgraphLineGraph} we have that $l(P_3)$ is subgraph of $l(G)$. From Theorem~\ref{theo::interlacing} and Corollary~\ref{cor::StarLineGraph} we have that $2 \alpha - 1 \leq \lambda_2(A_\alpha(l(G)))$ and then we can conclude the lower bound. If $G \cong P_3$, from Corollary~\ref{cor::StarLineGraph}, the equality is  achieved. The upper bound and its equality follows straightforwardly from Proposition \ref{prop::bound_eigenvalues}.
\end{proof}

\section*{Acknowledgments}
The research of C. S. Oliveira is supported by the CNPq Grant 304548/2020-0.

\bibliographystyle{unsrt}  
\bibliography{references}

\end{document}